\newcommand\exhibitnote[1]{\captionsetup{font=footnotesize,justification=raggedright,singlelinecheck=false}\caption*{#1}}
    \renewcommand\@makefntext[1]{\onehalfspacing\raggedright\hskip+1em\@makefnmark#1}
\newlength{\normalparindent}
\newtheorem{theorem}{Theorem}
\newtheorem{example}{Example}
\newtheorem{assumption}{Assumption} 
\newtheorem{lemma}{Lemma} 
\newtheorem{corollary}{Corollary} 
\newcommand{\eps}{\varepsilon} 
\newcommand{\m}[1]{\mathcal{#1}}
\newcommand{\ha}[1]{\widehat{#1}}
\newcommand{\ti}[1]{\widetilde{#1}}
\newcommand{\mme}{\mathbb{E}} 
\newcommand{\mmi}{\mathbb{I}} 
\newcommand{\mmp}{\mathbb{P}} 
\newcommand{\mmr}{\mathbb{R}} 
\newif\ifid
\begin{document}

\onehalfspacing

\title{\uppercase{Better Bunching, Nicer Notching}
\ifid 
\fi }

\ifid
\author{
Marinho Bertanha\thanks{Corresponding author. Department of Economics, University of Notre Dame, 3060 Jenkins Nanovic Halls, Notre Dame IN 46556. Email: mbertanha@nd.edu. Website: www.nd.edu/$\sim$mbertanh.
} 
\and
Andrew H. McCallum\thanks{Board of Governors of the Federal Reserve System. Email: andrew.h.mccallum@frb.gov.
Website: www.andrewhmccallum.com.}
\and
Nathan Seegert
\thanks{Eccles School of Business, University of Utah. Email: nathan.seegert@eccles.utah.edu.
Website: www.nathanseegert.com.} 
}
\fi


\vspace{1cm}

\date{First draft: August 5, 2017
\\
This draft: June 12, 2023}
\maketitle
\thispagestyle{empty}

\vspace{1cm}

\begin{abstract} 
This paper studies the bunching identification strategy for an elasticity parameter that summarizes agents' responses to changes in slope (kink) or intercept (notch) of a
schedule of incentives.
We show that current bunching methods may be very sensitive to implicit assumptions in the literature about unobserved individual heterogeneity.
We overcome this sensitivity concern with new non- and semi-parametric estimators. 
Our estimators allow researchers to show how bunching elasticities depend on different identifying assumptions and when elasticities are robust to them. 
We follow the literature and derive our methods in the context of the iso-elastic utility model and an income tax schedule that creates a piece-wise linear budget constraint.
We demonstrate bunching behavior provides robust estimates for self-employed and not-married taxpayers in the context of the U.S. Earned Income Tax Credit. 
In contrast, estimates for self-employed and married taxpayers depend on specific identifying assumptions, which highlight the value of our approach.
We provide the Stata package \texttt{bunching} to implement our procedures.
\end{abstract}

\vspace{1cm}

\noindent \textbf{JEL:} C14, H24, J20  \\
\noindent \textbf{Keywords:} partial identification, censored regression, bunching, notching

\clearpage
\onehalfspacing
\section{Introduction}
\label{sec:introduction}
\indent 

Estimating agents' responses to incentives is a central objective in economics and many other social sciences.  
Piecewise-linear schedules provide identifying variation in incentives to estimate these responses. 
A continuous distribution of agents that face a piecewise-linear schedule of incentives results in a distribution of responses with mass points located where the slope or intercept of the schedule changes. 
For example, a progressive schedule of marginal income tax rates induces a mass of heterogeneous individuals to report the same income at the level where marginal rates increase.
Many studies in economics use mass points in the response distribution to recover primitive parameters that govern agents' responses to incentives.

Pioneering work by \cite{Saez2010}, \cite{chetty2011}, and \cite{KlevenWaseem2013} develop bunching estimators to use mass points in response distributions to recover primitive parameters.
These estimators are widely applied in economics and rely on the idea that a mass point is larger, the more responsive agents are to incentives. 
The size of the mass point, however, also depends on the unobserved distribution of agents' heterogeneity. 
Current methods are only able to map the size of mass points to primitive parameters because they make specific assumptions about the unobserved distribution.

This paper places bunching estimators on a statistical foundation and makes three contributions on the identification of a primitive parameter that summarizes agents' responses to incentives.
First, we clarify how the mapping of observed variables to an elasticity parameter depends on assumptions about the unobserved distribution of heterogeneity.
The elasticity parameter captures the log percentage change of a response to a log percentage change in an incentive.
A change in the intercept of the incentive schedule admits nonparametric point identification of the elasticity but a change in slope does not. 
Second, we examine the assumptions made by current bunching methods and propose weaker assumptions for partial and point identification of the elasticity.
Third, we revisit the original empirical application of the bunching estimator, which is in the 
literature that examines the largest means-tested cash transfer program in the United States \textemdash the Earned Income Tax Credit (EITC). 
Our weaker assumptions about the unobserved distribution of heterogeneity result in meaningful changes in estimates of individual responses to taxes. 

Our first contribution is to clarify the importance of assumptions about unobserved heterogeneity for the identification of the elasticity.
Many existing estimates are based on an agent optimization problem with a piece-wise linear constraint that has one change in slope or intercept. Slope changes in the constraint are often referred to as ``kinks'' while intercept changes are often called ``notches.''\footnote{We generalize the constraint of the agent's problem to a schedule with multiple changes in intercepts and slopes because agents typically encounter a combination of both kinks and notches. The general problem and solution are in Sections \ref{sec:app:general_prob} and \ref{sec:app:general_prob_sol} of the supplement. For ease of exposition, we keep the problem with one kink or notch in the main text (Section \ref{sec:model}).} 
The literature began with an iso-elastic utility model and an income tax schedule that creates a piece-wise linear constraint. 
We demonstrate our methods in this context and note that our results extend to other contexts with piece-wise linear constraints. 

We highlight two insights about identification with kinks and notches assuming a nonparametric family of distributions for unobserved heterogeneity that have continuous probability density functions (PDFs).
First, if the constraint has at least one notch, it is possible to point identify the elasticity.
Identification comes from using the empty interval in the support of the observed distribution that is created by agents' responses to a notch.
Second, point identification is impossible if the incentive schedule only contains kinks. 
Identification is impossible because there always exists an unobserved distribution that reconciles any elasticity with the observed distribution of responses. 

Our second contribution is to propose three novel identification strategies for the elasticity if the incentive schedule has kinks but no notches. 
Each of these strategies relies on weaker assumptions than those implicit in current implementations of the bunching estimator.
Our first strategy identifies upper and lower bounds  on the elasticity \textemdash partially identifies the elasticity \textemdash by making a mild shape restriction
on the nonparametric family of heterogeneity distributions.
The other two strategies point identify the elasticity using covariates and semi-parametric restrictions on the distribution of heterogeneity.

The first strategy partially identifies the elasticity by assuming a bound on the slope magnitude of the heterogeneity PDF, that is, Lipschitz continuity.
Intuition for identification of the elasticity in this setting is as follows. 
We observe the mass of agents who bunch, which equals the area under the heterogeneity PDF inside an interval.
The length of this bunching interval depends on the unknown elasticity.
The maximum slope magnitude of the PDF implies upper and lower bounds for all possible PDF values inside the bunching interval that are consistent with the observed bunching mass.
This translates into lower and upper bounds, respectively, on the size of the bunching interval, which corresponds to lower and upper bounds on the elasticity. 
These bounds allow researchers to examine the magnitude of the impossibility result in their empirical context.
Depending on the data, it might take an unreasonably high slope magnitude on the heterogeneity PDF to produce bounds that include all possible elasticity values. 
In other settings, the difference between upper and lower bounds may be economically large even for small slope magnitudes.



The next two strategies rely on the fact that bunching can be rewritten as a censored regression model with a middle censoring point. 
We stress that while these strategies necessarily add structure to point identify the elasticity, they do not require fully parametric assumptions, such as normality, on the unconditional distribution of heterogeneity.

The second strategy identifies the elasticity by estimating a maximum likelihood mid-censored model, using data truncated to a window local to the kink. 
The likelihood function assumes that the unobserved distribution conditional on covariates is parametric, but we demonstrate that correct specification of the conditional distribution is not necessary for consistency, as long as the unconditional distribution is correctly specified.
For example, conditional normality yields a mid-censored Tobit model, which has a globally concave likelihood and is easy to implement.
Nevertheless, consistency only requires that the unobserved distribution is a semi-parametric mixture of normals, and that the unconditional distribution implied by the Tobit model matches that; conditional normality is not necessary.
Truncating the sample around the kink point improves the fit of the model and further weakens these distribution assumptions.

The third strategy restricts a quantile of the unobserved distribution, conditional on covariates, and point identification follows existing theory for censored quantile regressions \citep{powell1986,chernozhukov2002,chernozhukov2015}.


Both of the two semi-parametric methods are censored regression models that incorporate covariates.
These approaches extend bunching estimators to control for observable heterogeneity for the first time. 
Observable individual characteristics generally account for substantial variation across agents and leave less heterogeneity unobserved.   
This fact suggests that identification strategies that utilize covariates should be preferred over identifying assumptions that only restrict the shape of the unobserved distribution without covariates. 
In addition, covariates generally allow for more precise estimates.

Our third contribution is to illustrate the empirical relevance of our methods by revisiting \cite{Saez2010}'s original, influential application of bunching in the distribution of U.S. income caused by kinks in the EITC schedule. 
That approach implicitly assumes the unobserved PDF of agents that bunch is linear and uses a trapezoidal approximation to compute the bunching mass. 
This assumption fits poorly when the true density is non-linear or the interval of agents that bunch is large.
We compare elasticity estimates based on our identification assumptions with estimates based on the trapezoidal approximation using annual samples of U.S. federal tax returns from the Internal Revenue Service (IRS).
 
Our partial identification method indicates that households adjust their reported income in response to marginal tax rates by a considerable amount.
Placing a conservative limit on the slope magnitude, the lower bound for the elasticity among self-employed married individuals is 0.48 \textemdash that is, a one percent increase in the marginal tax rate results in a reduction in reported income of at least 0.48 percent. 
This estimate contrasts with the estimate of 0.77 using the trapezoidal approximation.
The difference in these estimates matters. 
For example, \cite{saez2001} shows that the optimal top marginal tax rate for an economy with an elasticity of 0.48 is 48\%, while the optimal tax rate for an elasticity of 0.77 is 37\%\textemdash a difference of 11 percentage points.


The truncated Tobit model with covariates fits well the observed distribution of income making our semi-parametric consistency result operative.
Elasticity estimates from this model differ substantially from estimates based on the trapezoidal approximation for some categories of U.S. taxpayers.
For example, we estimate an elasticity of 0.55 versus a trapezoidal estimate of 0.77 for self-employed and married individuals. 
This large difference highlights the sensitivity of estimates to functional form assumptions, as well as the need for methods that rely on weaker assumptions.

Our three new methods provide a suite of ways to recover elasticities from bunching behavior. 
Each method differs in the assumptions they make about the unobserved distribution to achieve identification.
There is no way to determine which assumption is correct because the unobserved distribution is not fully identified.
Nevertheless, estimates that are stable across many methods indicate that different identifying assumptions do not play a major role in the construction of those estimates.   
On the contrary, estimates that are sensitive to different assumptions are dependent on the validity of those assumptions. 
Therefore, we recommend that researchers examine the sensitivity of elasticity estimates across all available methods as a matter of routine. 

Bunching estimators are widely applied in settings including fuel economy regulations \citep{Sallee2012}, electricity demand \citep{ito2014}, real estate taxes \citep{kopczuk2015}, labor regulations \citep{Garicano2016, goff2022}, prescription drug insurance \citep{EinavFinkelsteinSchrimpf2017}, marathon finishing times \citep{Allen2017}, attribute-based regulations \citep{Ito2018}, education \citep{Dee2019, caetano2020should}, minimum wage \citep{Jales2018,Cengiz2019}, charitable giving \citep{hungerman2021}, and air-pollution data manipulation \citep{ghanem2019}, among others. 
\cite{Kleven2016} and \cite{bertanha2023} provide a recent review of the many applications and branches of the bunching literature; \cite{JalesYu2017} relates bunching to regression discontinuity design (RDD).\footnote{
Variation in the size of the mass point across groups of individuals has also been used as a first stage in a two stage approach to control for endogeneity \citep{chetty2013,Caetano2015,Grossman2019}
An additional complication in many applications arises when the bunching mass is spread over a range instead of being a mass point. 
\cite{blomquist2019} provide a discussion about the potential sources for this complication and \cite{cattaneo2018} propose a filtering method to resolve it.}


In the context of kinks, \cite{blomquist2017a} were the first to prove the impossibility of point identification and the possibility of partial identification in the iso-elastic quasi-linear utility model \textemdash and an earlier paper provides intuition for the impossibility result \citep{blomquist2015}. 
We derive partial identification bounds by assuming the PDF has a bounded slope, whereas \cite{blomquist2017a} assume the PDF of heterogeneity is monotone. 
We developed our partial identification result independently of theirs.
Our partial identification approach has three valuable features that make it novel: closed-form solutions, observed bunching always implies a positive elasticity, and nesting of the original bunching estimator. 
\cite{blomquist2018a} explain that a notch can identify the elasticity and a formal proof of identification appears contemporaneously in an earlier version of our paper, \cite{BMS2018}.
To the best of our knowledge, ours is the first paper to demonstrate point identification using censored regression models, covariates, and semi-parametric assumptions on the distribution of heterogeneity.
More generally, the theory demonstrating that a kink fails to point identify the elasticity relates to the literature on impossible inference reviewed by \cite{bertanha2019}.

The paper proceeds with an utility maximization model subject to a piecewise-linear budget constraint in Section \ref{sec:model}.
Section \ref{sec:identification} investigates the identification of the elasticity in the case of kinks and notches. 
We propose the three identification strategies for the elasticity in Section \ref{sec:solutions} and illustrate these methods empirically in the context of the EITC in Section \ref{sec:application}.
Section \ref{sec:conclusion} concludes. 
Appendix \ref{app} contains all proofs, and supplemental Appendix \ref{app:supp}
collects auxiliary results and examples.
Finally, we developed the Stata command \texttt{bunching} that implements our procedures. 
The Stata package is presented by \cite{bertanha2022stata} and available for download from the website of the authors or the Statistical Software Components (SSC) online repository.\footnote{
    Type \texttt{ssc install bunching} in Stata to install the package.    
    } 

\section{Utility Maximization Subject to Piecewise-Linear Constraints}
\label{sec:model}

\indent 

Firms' and individuals' optimization problems often face piecewise-linear constraints.  
The nature of constraints is dictated by differential tax rates, insurance reimbursement rates, or contract bonuses.
A budget set is fully characterized by a sequence of intercepts and slopes that change at known points.
A change in the intercept is referred to as a notch, and a change in the slope is referred to as a kink. 

\subsection{Model Setup}
\indent

We start with the labor supply characterization employed by the vast majority of the literature, 
which follows the seminal work of \cite{Saez2010} and \cite{KlevenWaseem2013}.
Agents maximize an iso-elastic quasi-linear utility function and choose consumption and labor subject to a piecewise-linear budget set. 
For ease of exposition, we focus on budget sets with one kink or one notch in the main text 
and generalize to multiple kinks and notches in the supplemental appendix, Sections \ref{sec:app:general_prob} and \ref{sec:app:general_prob_sol}

Consider a population of agents  that are heterogeneous with respect to a scalar variable $N^{\ast}$,
referred to as ability.
Ability is distributed according to a continuous probability density function (PDF) $f_{N^{\ast}}$, with support $(0,\infty)$, and a cumulative distribution function (CDF) $F_{N^{\ast}}$. 
Agents know their $N^{\ast}$,  but the econometrician does not observe the distribution of $N^{\ast}$. 

Agents maximize utility by jointly choosing a composite consumption good $C$ and labor supply $L$. Utility is increasing in $C$ and decreasing in $L$. These variables are constrained by a budget set, where the agent may consume all of its labor income net of taxes plus an exogenous endowment $I_0$. For simplicity, we assume the price of labor and consumption are equal to one, such that taxable labor income $Y$ is  equal to $L$.

In the budget constraint with a kink, the tax rate increases from $t_0$ to $t_1$ as income increases above the kink value $K$.
The budget constraint has a notch when the agent is charged a lump-sum tax of $\Delta>0$ as income crosses $K$. 
Agent type $N^\ast$ maximizes utility $U(C,Y;N^\ast)$ as follows,
\begin{eqnarray}
\max_{C,Y} & &  C - \frac{N^*}{1 + 1/\varepsilon} \left(\frac{Y}{N^*} \right)^{1 + \frac{1}{\varepsilon}}   
\label{eq:util_saez}
\\
 s.t. & &
\nonumber
\\
 &  & C=  \mmi\{ Y \leq  K \}[ I_0 + (1 - t_0) Y] 
 + \mmi\{ Y > K \} \left[ I_1 + \left( 1- t_{1} \right) (Y - K) \right],
\label{eq:bf}
\end{eqnarray}
where
$\mmi\{ \cdot \}$ is the indicator function;
the budget line has intercept $I_0$ and slope $1-t_0$ if $Y\leq K$, 
but intercept $I_1 = I_0+K(1-t_0) - \Delta $ with slope $1-t_1$ if $Y > K$;
and $\eps$ is the elasticity of income $Y$ with respect to one minus the tax rate when the solution is interior.
In the case of a kink, $\Delta =0 $, 
and the budget frontier is continuous; 
otherwise, in the case of a notch, it has a jump discontinuity of size $\Delta$ at $Y=K$.
The solution is always on the budget frontier in Equation \ref{eq:bf}.

\subsection{Model Solution}
\label{sec:model:sol}

\indent 

The solution for $Y$ in Problem \ref{eq:util_saez} is well known in the literature, when $K$ is a kink \citep{Saez2010} 
and when $K$ is a notch \citep{KlevenWaseem2013}:
\begin{equation}
Y = 
\left\{
\begin{array}{ccl}
N^{\ast} (1-t_{0})^{\eps} & \text{, if} & 0 < N^{\ast} < \underline{N}
\\
K                    & \text{, if} &  \underline{N} \leq  N^{\ast} \leq \overline{N}
\\
N^{\ast} (1-t_{1})^{\eps} & \text{, if} & \overline{N}    < N^{\ast} ,
\end{array}
\right.
\label{eq:levelsol-onekink}
\end{equation}
where the expressions for the thresholds 
$\underline{N}$
and 
$\overline{N}$
are given below. 
Section \ref{sec:app:general_prob_sol} in the supplement presents the proof of this result and the more general solution in the case of multiple kinks and notches.
We provide a graphical analysis of these solutions in Section \ref{sec:app:figures} in the supplement. 

In the case of a kink, $\underline{N} = K(1-t_0)^{-\eps}$,
and $\overline{N} = K(1-t_1)^{-\eps}$.
The budget frontier is continuous, but its slope suddenly decreases at $Y=K$.
For values of $N^*$ inside the bunching interval $[\underline{N},\overline{N}]$,
the agent's indifference curve is never tangent to the budget frontier, and we have the non-interior solution $Y=K$.
For values of $N^*$ outside of the bunching interval, the indifference curve is always tangent to 
some point on the budget frontier. 

In the case of a notch, the solution is interior for $N^* < \underline{N} = K(1-t_0)^{-\eps}$,
but there are no tangent indifference curves for $N^* \in [K(1-t_0)^{-\eps}, K(1-t_1)^{-\eps}]$, just as in the case of a kink.
Although tangency occurs for $N^* > K(1-t_1)^{-\eps}$, some of the resulting utility levels are lower than the utility at the notch point.
The budget frontier with a jump-down discontinuity at $Y=K$ has an interval of income values $(K,Y^I]$ that no agent ever chooses.
The value $Y^I>K$ corresponds to the interior solution of the agent with $N^*=N^I$; that is, the smallest $N^*$ such that the agent's utility is equal to the utility of the agent choosing $Y=K$.
Thus $\overline{N} = N^I$, and the solution is at $Y=K$ for $N^* \in [ \underline{N}, \overline{N}] $.
As the ability $N^*$ increases above $N^I$, the utility  gets larger than the utility at $K$, and again there is an interior solution.  
Section \ref{sec:app:general_prob_sol} in the supplemental appendix has a formal definition of $N^I$ in Equation \ref{eq:indiff}.

To make the solution more tractable, we take the natural logarithm of all variables.
Define $y= \log(Y)$, $n^*=\log(N^*)$, $\underline{n}=\log(\underline{N})$, $\overline{n}=\log(\overline{N})$, $k=\log(K)$, $s_0 = \log(1-t_0)$, and $s_1=\log(1-t_1)$.
\begin{equation}
y = 
\left\{
\begin{array}{ccl}
n^{\ast} + {\eps} s_0 & \text{, if} & n^{\ast} < \underline{n}
\\
k                  & \text{, if} &\underline{n} \leq  n^{\ast} \leq \overline{n}
\\
n^{\ast} + {\eps} s_1  & \text{, if} & \overline{n}    < n^{\ast} .
\end{array}
\right.
\label{eq:logsol-onekink}
\end{equation}

As ability $n^*$ increases, the optimal choice of $y$ increases, except when $n^*$ falls inside the bunching interval $[\underline{n},\overline{n}]$, in which $y$ remains constant and equal to $k$.


\subsection{Bunching and the Counterfactual Distribution of Income}
\label{sec:model:counterfactual}

\indent 

The solution in the previous section expresses income as a function of the model parameters and $n^*$.
For given values of $(t_0,t_1,k,\eps)$, the continuously distributed $n^*$ maps into a mixed continuous-discrete distribution for $y$.
The model predicts bunching in the distribution of $y$ at a kink or notch point (i.e. $\mmp(y=k)>0$),
but a continuous distribution of $y$ otherwise.
The amount of bunching depends on the elasticity $\eps$ and the unobserved distribution $n^{\ast}$, 
\begin{equation}
B \equiv \mmp \left(y = k \right) 
= \mmp \left(\underline{n} \le n^{\ast}  \le \overline{n} \right)
= \int^{\overline{n}}_{\underline{n}} f_{n^{\ast}} \left( u \right) ~ du
= F_{n^{\ast}}\left(\overline{n} \right)-F_{n^{\ast}}\left(\underline{n} \right),
\label{eq:bunch_j}
\end{equation}
where the length of the interval $[\underline{n},\overline{n}]$ varies with $\eps$. 

The literature typically defines $B$ in terms of the counterfactual distribution of income in the scenario without any kinks or notches. 
Let counterfactual income be $y_0$ in such case. 
The solution to Problem \ref{eq:util_saez} is simply $y_0=n^* + \eps s_0$ for every value of $n^*$.
The variable $y_0$ has continuous PDF $f_{y_0}$ and CDF $F_{y_0}$.
The bunching mass is derived as
\begin{gather}
B=\int_{k}^{k + \Delta y } f_{y_0} \left( u \right) ~ du
=F_{y_0}\left(k + \Delta y \right) - F_{y_0} \left( k \right),
\label{eq:bunch_j_saez}
\end{gather}
where $\Delta y = \eps (s_0 - s_1)$.
Figure \ref{fig:imposs}, Panels a and b,
illustrate the distributions of $y$ and $y_0$, 
and how they relate to each other, to $B$, and to $f_{n^*}$.

\cite{Saez2010}'s insight is that the mass of agents bunching $B$ is increasing in the elasticity $\eps$ for a given distribution of  $y_0$. 
In other words, the more agents shift income to the kink-point $k$, 
the more sensitive they are to changes in tax rates.
All current bunching and notching estimators use this insight to identify the elasticity.
First, the researcher obtains an estimate of the counterfactual distribution of ${y_0}$ 
and the bunching mass $B$.
Plugging these into Equation \ref{eq:bunch_j_saez} allows us to solve for an estimate of the elasticity.

In reality, instead of $y$, researchers typically observe the distribution of $\ti{y}=y+e$, where $e$ is a random variable accounting for optimization and friction errors.
We focus on the identification problem associated with identifying the counterfactual distribution of $y_0$ in the absence of errors $e$. 
In work in progress, \cite{cattaneo2018} show how to solve this problem with a deconvolution method, which is necessary to recover the distribution in the absence of these errors.
Common strategies such as the polynomial strategy, first proposed by \cite{chetty2011}, fails to solve this issue. 
We provide a simple counterexample in the supplemental appendix Section \ref{sec:app:friction_error} where the polynomial strategy fails to recover the true distribution of $y.$
A practical solution that works under limited assumptions is provided by \cite{bertanha2022stata} and implemented in Section \ref{sec:application}.

\section{Identification}
\label{sec:identification}

\indent

This  
section investigates identification with one notch or one kink.
We show that identification is possible with one notch without any restriction
on the distribution of $n^*$.
On the other hand, identification in case of a kink is impossible, unless the researcher imposes restrictions on the distribution of $n^*$.
The general solution to Problem \ref{eq:util_saez} with multiple kinks and notches 
is found in Section \ref{sec:app:general_prob_sol} of the supplement.
That section discusses interesting insights to the identification of the elasticity that arise in that context.

\subsection{Identification from Gaps in the Distribution}

\indent

We show that identification in the case of a notch is possible using the additional information from the gap in the distribution that is not used in previous studies.
Specifically, the gap in the distribution of $Y$ is $(K, Y^I],$ where $Y^I = N^I (1-t_{1})^{\eps}$,
and  $N^I$ is defined above.
Once $Y^I$ is identified from the support of the distribution of $Y$, we numerically solve for $\eps$ that satisfies the indifference condition in Equation \ref{eq:notch-poss} below.



\begin{theorem}
\label{theo:notch-poss}
Suppose the support of $N^*$ is equal to $(0,\infty)$, that $K$ is a notch, and that the upper limit of the empty interval in the support of $Y$ to the right of $K$ is equal to $Y^I$.
Then the indifference condition that defines $Y^I$ is equivalent to
\begin{gather}
Y^I   + \eps K  \left( \frac{K }{Y^I} \right)^{\frac{1}{ \eps}}    
= \left( {1+\eps } \right) 
 \left(
 \frac{C + I_{1} + K(1 - t_{1})}
 {1 - t_{1}}
 \right),
 \label{eq:notch-poss}
\end{gather}
where $C$ is the consumption value on the budget frontier at the notch point.
Moreover, there exists an unique $\eps$ that solves Equation \ref{eq:notch-poss} as a function of $Y^I$, $K$, $C$, $I_{1}$, $t_{1}$.
Therefore the elasticity is identified.
\end{theorem}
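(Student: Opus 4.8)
The plan is to first convert the equation that defines $Y^I$ into the explicit identity in Equation \ref{eq:notch-poss}, and then to show that, with $Y^I$, $K$, $C$, $I_1$, and $t_1$ held fixed, this identity has exactly one positive root $\eps$. \emph{For the identity:} by the model solution, Equation \ref{eq:levelsol-onekink}, $N^I=\overline{N}$ is the ability at which the agent is indifferent between locating at the notch, $(Y,C)=(K,C)$, and taking the interior optimum on the segment $Y>K$. That interior optimum satisfies the first-order condition $(1-t_1)=(Y/N)^{1/\eps}$, hence $Y^I=N^I(1-t_1)^{\eps}$, i.e. $N^I=Y^I(1-t_1)^{-\eps}$. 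I would write $U(C,Y;N^I)$ at the bunching point $(K,C)$ and at the interior point $\bigl(Y^I,\ I_1+(1-t_1)(Y^I-K)\bigr)$ and equate the two. The labor-disutility terms equal $\tfrac{\eps}{1+\eps}(N^I)^{-1/\eps}K^{1+1/\eps}$ and $\tfrac{\eps}{1+\eps}(N^I)^{-1/\eps}(Y^I)^{1+1/\eps}$; substituting $N^I=Y^I(1-t_1)^{-\eps}$ turns these into $\tfrac{\eps}{1+\eps}(1-t_1)K(K/Y^I)^{1/\eps}$ and $\tfrac{\eps}{1+\eps}(1-t_1)Y^I$, and the linear consumption terms simplify via $N^I(1-t_1)^{\eps+1}=(1-t_1)Y^I$. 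Collecting terms and multiplying by $(1+\eps)/(1-t_1)$ gives Equation \ref{eq:notch-poss}; this step is pure bookkeeping.

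For uniqueness, divide Equation \ref{eq:notch-poss} by $1+\eps$ and set $a:=\log(Y^I/K)$, which is strictly positive because $Y^I>K$ in the model solution. The equation becomes $\phi(\eps)=c$, where
\[
\phi(\eps):=\frac{Y^I+\eps K e^{-a/\eps}}{1+\eps},\qquad c:=\frac{C+I_1+K(1-t_1)}{1-t_1}.
\]
Since $\eps e^{-a/\eps}\to 0$ as $\eps\downarrow 0$ and $\eps K e^{-a/\eps}=K\eps-Ka+O(1/\eps)$ as $\eps\to\infty$, one has $\phi(0^{+})=Y^I$ and $\lim_{\eps\to\infty}\phi(\eps)=K$. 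Because the observed support is, by hypothesis, generated by the model with some true $\eps^{*}>0$, that $\eps^{*}$ already solves Equation \ref{eq:notch-poss}, so $c=\phi(\eps^{*})$ lies strictly between $K$ and $Y^I$; it therefore suffices to prove that $\phi$ is strictly monotone.

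To that end, write $v(\eps):=\eps K e^{-a/\eps}$, so $\phi=(Y^I+v)/(1+\eps)$ and the sign of $\phi'(\eps)$ is the sign of $v'(\eps)(1+\eps)-Y^I-v(\eps)$. Using $v'(\eps)=K e^{-a/\eps}(1+a/\eps)$ one gets $v'(\eps)(1+\eps)-v(\eps)=K e^{-a/\eps}\bigl(1+a+a/\eps\bigr)$, so after subtracting $Y^I=Ke^{a}$ the expression becomes $K e^{-a/\eps}\bigl[(1+x)-e^{x}\bigr]$ with $x:=a+a/\eps>0$, which is negative by the elementary inequality $e^{x}>1+x$. Hence $\phi$ is a strictly decreasing bijection from $(0,\infty)$ onto $(K,Y^I)$, so $\phi(\eps)=c$ has a unique solution $\eps>0$. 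Since $Y^I$ is recovered from the observed support of $Y$ and $K,C,I_1,t_1$ are known, the elasticity is identified.

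The only step that requires real care is the strict monotonicity of $\phi$: it is a ratio of two increasing functions $Y^I+v(\eps)$ and $1+\eps$, so monotonicity is not automatic, and the derivative pits the exponentially small factor $e^{-a/\eps}$ against the $1/\eps$ blow-up inside the bracket. The substitution $x=a+a/\eps$ is what collapses the whole expression to $e^{x}>1+x$; locating that reduction is the main obstacle. Everything else — the algebra leading to Equation \ref{eq:notch-poss}, the boundary limits of $\phi$, and the existence of a root — is routine.
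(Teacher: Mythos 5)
Your proof is correct, and while the derivation of the identity \eqref{eq:notch-poss} is essentially the same bookkeeping as in the paper (substitute $N^I=Y^I(1-t_1)^{-\eps}$ into the indifference condition and clear $(1-t_1)/(1+\eps)$), your uniqueness argument takes a genuinely different route. The paper keeps the equation in the form LHS$(\eps)=$RHS$(\eps)$ and proves single crossing by combining analytic facts about the LHS (positive, increasing, convex derivative with limit $K$) with two \emph{economic} inequalities \textemdash $Y^I>c$ and $K<c$, where $c$ is the right-hand constant \textemdash each established by comparing utilities of points on and off the budget frontier. You instead divide by $1+\eps$ and show that $\phi(\eps)=(Y^I+\eps K e^{-a/\eps})/(1+\eps)$ is globally strictly decreasing, collapsing the derivative computation to $e^{x}>1+x$ with $x=a+a/\eps$; combined with the limits $\phi(0^+)=Y^I$ and $\phi(\infty)=K$ and existence of the true root, this gives uniqueness without invoking either economic inequality (indeed your monotonicity implies $K<c<Y^I$ as a byproduct, recovering the paper's auxiliary facts rather than assuming them). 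That makes your argument somewhat more self-contained and purely analytic, at the cost of the slightly less transparent substitution $x=a+a/\eps$; the paper's version makes the economic content of the two bounds explicit. One small caveat: carrying out your ``pure bookkeeping'' actually yields $\bigl(C-I_1+K(1-t_1)\bigr)/(1-t_1)$ on the right (this is also what the paper's appendix derives in the general case), whereas the displayed Equation \ref{eq:notch-poss} has $+I_1$; this is a sign discrepancy internal to the paper's statement, and it is harmless for your uniqueness step, which never uses the value of $c$ beyond the existence of a root.
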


This and all other proofs are given in Appendix \ref{app}.\footnote{
Similar arguments were given by \cite{blomquist2018a} around the same time this result appeared in an earlier version of our paper, \cite{BMS2018}.}
Theorem \ref{theo:notch-poss} and its proof consider the case of a notch from a lump-sum tax,  $ \Delta > 0 $ in Equation \ref{eq:bf}.
A minor change to that proof shows the elasticity is also nonparametrically identified in the case of a notch from a lump-sum subsidy, $ \Delta < 0 $.
Another case where the elasticity is nonparametrically identified is that of a concave kink, that is, a kink arising from a decrease in tax rates, $t_0>t_1$. 
This is formally demonstrated in Section \ref{sec:app:convex:kinks} of the appendix and is the only part of the paper where we refer to a kink caused by $t_0>t_1$. 
When the rest of the paper refers to a kink, we mean a kink generated by $t_0<t_1$. 

The identification for these three cases (notch with $\Delta >0$, notch with $\Delta<0$, and kink with $t_0>t_1$) does not depend on the bunching mass, but instead on the size of the region with missing mass (size of the gaps in the distribution). 
In all the cases we consider, the distribution of income cannot have optimization frictions, which we discuss in detail in Section \ref{sec:app:friction_error} of the supplement.



\subsection{Lack of Identification With One Kink}
\label{sec:identification:imposs}
\indent

Although bunching is increasing in the elasticity for a fixed distribution of $y_0$ or $n^*$, it is also true that, for a fixed elasticity, bunching increases as 
$f_{n^*}$ becomes more concentrated between $\underline{n}$ and $\overline{n}$. 
If all we know about $f_{n^*}$ is that it is continuous with full support and that its integral over $[\underline{n},\overline{n}]$ equals $B$, then there is no way to identify both the elasticity and $f_{n^*}$ using only Equation $\ref{eq:bunch_j}$;
equivalently, there is no way to identify both the elasticity and the distribution of $y_0$ using only Equation $\ref{eq:bunch_j_saez}$.
Intuitively, identification using only $\eqref{eq:bunch_j}$ or $\eqref{eq:bunch_j_saez}$ is impossible because each uses one equation to solve for two unknowns. 
This was first shown in Theorem 1 by \cite{blomquist2017a}, although the idea was first presented by \cite{blomquist2015}. 
We discuss the impossibility result in this section and present our novel identification strategies in the next sections.

Figure \ref{fig:imposs} provides intuition behind this impossibility result. 
It illustrates that the observable PDF $f_y$ in Figure \ref{fig:imposs_panela} is generated by applying Equation \ref{eq:logsol-onekink} to two different combinations of latent variable distributions and elasticities, $f_{n^*,\varepsilon}$ and $f_{n^*,\varepsilon'}$ in Figures \ref{fig:imposs_panelc} and \ref{fig:imposs_paneld}, respectively. 
In fact, for any value of the elasticity $\eps>0$, there exists a  continuous PDF of $n^*$ that justifies the observed distribution $f_y$ according to Equation \ref{eq:logsol-onekink}.
The model assumptions imply no restrictions on $\eps$ over $(0,\infty)$.
Theorem 1 by \cite{blomquist2017a} clarifies that current bunching methods are either implicitly restricting $\mathcal{F}_{n^*}$ or simply inconsistent for the true elasticity.
Section \ref{sec:app:restriction_fn} in the supplement details the implicit restrictions on $\mathcal{F}_{n^*}$ made by the original bunching methods, namely, the affine PDF assumption of \cite{Saez2010} and the uniform PDF assumption of \cite{chetty2011}.
A direct consequence of the impossibility result 
is that restrictions on $\mathcal{F}_{n^*}$ are untestable. 
One may argue that the affine or uniform assumption is a good approximation to any potentially non-linear density $f_{n^*}$ if the bunching interval $\left[k - \eps s_0, k - \eps s_1 \right]$ is small. 
The problem with this argument is that the size of the interval is itself a function of the elasticity. 
It is impossible to state that the interval is small and the linear approximation is a good one without a priori knowledge of the elasticity.

We conclude this section by stating a 
sufficient condition on how flexible $\mathcal{F}_{n^*}$ may be for point identification of $\eps$ to be possible.

\begin{assumption}\label{aspt:cond_inv}
Let $\mathcal{F}_{n^*}$ be a set of all possible CDFs of $n^*$ that are continuously differentiable.
For any  $F_{n^*} \in \mathcal{F}_{n^*}$, consider the possible values of $e \geq 0 $ and $G_{n^*} \in \mathcal{F}_{n^*}$ that satisfy the following system of equations:
\begin{align}
G_{n^*}( u - e s_0 )
& = 
F_{n^*}( u - \eps s_0 )
~~\text{for } ~\forall u < k,
\label{eq:bunch_saez_param1}
\\
G_{n^*} ( u -  es_1 )
& = 
F_{n^*} ( u - \eps s_1  )
~~\text{for } ~\forall u \geq k.
\label{eq:bunch_saez_param2}
\end{align}
The set of distributions $\mathcal{F}_{n^*}$ is restricted to be such that  the only values of $e \geq 0 $ and $G_{n^*} \in \mathcal{F}_{n^*}$ that satisfy 
Equations \ref{eq:bunch_saez_param1} \textendash \ref{eq:bunch_saez_param2} are $e=\eps$ and $G_{n^*}=F_{n^*}$.
\end{assumption}

Intuitively, Assumption \ref{aspt:cond_inv} restricts $\mathcal{F}_{n^*}$ in such a way that knowledge of the tails of an unknown CDF in $\mathcal{F}_{n^*}$
is enough to reconstruct that entire CDF; and no other CDF in $\mathcal{F}_{n^*}$ has the same tails.
The assumption is easily verified in parametric families, e.g., $\mathcal{F}_{n^*} = \{G_{n^*}(n;\theta) ~,~ \theta \in \Theta \}$, for some set of parameters $\Theta \subseteq \mathbb{R}^p$.
In Section \ref{sec:app:restriction_fn}, we show how to verify Assumption \ref{aspt:cond_inv} using an example of the Gaussian family.


\section{Solutions}
\label{sec:solutions}

\indent 

The rest of the paper focuses on methods that identify the elasticity in the kink case.
We present three types of identification assumptions on the distribution of ability,
from less restrictive to more restrictive.
We start with a nonparametric shape restriction that bounds the slope magnitude of $f_{n^*}$,
which leads to partial identification of $\eps$.
Next, we connect bunching to the literature on censored regressions, 
where $n^*$ is the regression error. 
It becomes natural to use covariates to explain $n^*$, and we propose two types of semi-parametric restrictions on the distribution of $n^*$ that point-identify the elasticity.
The first restricts the distribution of $n^*$, conditional on covariates;
and the second restricts a quantile of the distribution of $n^*$, conditional on covariates. 
In general, more data variation and structure are needed to provide any information about the elasticity.

\subsection{Nonparametric Bounds} 
\label{sec:solutions:bounds}
 
\indent 

Our partial identification approach relies on restricting the class  $\mathcal{F}_{n^*}$ to distributions with PDFs, $f_{n^*}$, that are Lipschitz continuous with constant $M \in (0,\infty)$.
In other words, the slope magnitude of any $f_{n^*}$ in this class  is bounded by $M$.
The following theorem gives the partially identified set for $\eps$ as a function of identified quantities and the maximum slope magnitude $M$.

\begin{theorem}\label{theo_partial}
Assume $\mathcal{F}_{n^*}$ contains all distributions with PDF $f_{n^*}$ that are Lipschitz continuous with constant $M \in (0, \infty)$.
Then the elasticity $\eps \in \Upsilon$, where
\begin{gather*}
\Upsilon = 
\left\{
\begin{array}{ll}
\emptyset & \text{, if   } 
B < \frac{  \left|f_{y}(k^+) - f_{y}(k^-) \right| ~ \left[ f_{y}(k^+) + f_{y}(k^-) \right]}{2M} 
\\
\left[\underline{\eps} , \overline{\eps}\right]  & \text{, if   }
\frac{  \left|f_{y}(k^+) - f_{y}(k^-) \right| ~ \left[ f_{y}(k^+) + f_{y}(k^-) \right]}{2M}
\leq
B 
<
\frac{  f_{y}(k^+)^2 + f_{y}(k^-)^2 }{2M}
\\
\left[\underline{\eps} , \infty \right)  & \text{, if   }
\frac{  f_{y}(k^+)^2 + f_{y}(k^-)^2 }{2M} 
\leq
B 
\end{array}
\right.,
\end{gather*}
where $\emptyset$ is the empty set, and
\begin{gather*}
\underline{\eps} = 
\frac{2 \left[f_{y}(k^+)^2/2 + f_{y}(k^-)^2/2 + M ~ B \right]^{1/2} 
- \left( f_{y}(k^+) + f_{y}(k^-) \right) }
{M(s_0 - s_1)}
\\
\overline{\eps}
=\frac{-2 \left[f_{y}(k^+)^2/2 + f_{y}(k^-)^2/2 - M ~ B \right]^{1/2} 
+ \left( f_{y}(k^+) + f_{y}(k^-) \right) }
{M(s_0 - s_1)}.
\end{gather*}
\end{theorem}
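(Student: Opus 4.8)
The plan is to reduce the statement to a one-variable extremal-area problem. Since \(s_0-s_1=\log\frac{1-t_0}{1-t_1}>0\) is identified, pinning down \(\eps\) is equivalent to pinning down the bunching-interval length \(L:=\overline n-\underline n=\eps(s_0-s_1)\) in Equation \ref{eq:bunch_j}. By Equation \ref{eq:logsol-onekink}, \(f_y\) is the shift \(f_{n^*}(u-\eps s_0)\) on \((-\infty,k)\) and \(f_{n^*}(u-\eps s_1)\) on \((k,\infty)\), so the observed side limits fix the endpoint values \(a:=f_y(k^-)=f_{n^*}(\underline n)\) and \(b:=f_y(k^+)=f_{n^*}(\overline n)\), while \(B=\int_{\underline n}^{\overline n}f_{n^*}\). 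Hence the true \(f_{n^*}\), restricted to \([\underline n,\overline n]\), is a nonnegative Lipschitz-\(M\) function with endpoint values \(a,b\) and integral \(B\); in particular \(|a-b|=|f_{n^*}(\overline n)-f_{n^*}(\underline n)|\le ML\), and the pointwise bounds \(\max\{0,\,a-M(x-\underline n),\,b-M(\overline n-x)\}\le f_{n^*}(x)\le\min\{a+M(x-\underline n),\,b+M(\overline n-x)\}\) hold on the interval. Integrating these two piecewise-linear envelopes over \([\underline n,\overline n]\) and calling the results \(I_{\min}(L)\) and \(I_{\max}(L)\), I get \(I_{\min}(L)\le B\le I_{\max}(L)\) together with \(L\ge|a-b|/M\).

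Elementary integration of the lower (``valley'') and upper (``tent'') envelopes gives
\begin{gather*}
I_{\max}(L)=\frac{(a+b)L}{2}+\frac{ML^2}{4}-\frac{(a-b)^2}{4M},\\
I_{\min}(L)=
\begin{cases}
\dfrac{(a+b)L}{2}-\dfrac{ML^2}{4}+\dfrac{(a-b)^2}{4M}, & |a-b|/M\le L\le(a+b)/M,\\[2mm]
\dfrac{a^2+b^2}{2M}, & L\ge(a+b)/M,
\end{cases}
\end{gather*}
where the split of \(I_{\min}\) at \(L=(a+b)/M\) records whether the valley bottom has dropped to \(0\), the two branches agreeing there. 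On \(L\ge|a-b|/M\), \(I_{\max}\) increases strictly from \(\frac{(a+b)|a-b|}{2M}=\frac{|f_y(k^+)^2-f_y(k^-)^2|}{2M}\) to \(+\infty\), while \(I_{\min}\) increases (along its first branch) from the same value to \(\frac{a^2+b^2}{2M}\) and is then constant.

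Then I would read off the three regimes for the set of \(L\ge|a-b|/M\) satisfying \(I_{\min}(L)\le B\le I_{\max}(L)\). If \(B<\frac{|f_y(k^+)^2-f_y(k^-)^2|}{2M}\), then \(I_{\min}(L)>B\) for every such \(L\), so no value is consistent with the data and \(\Upsilon=\emptyset\). If \(B\ge\frac{|f_y(k^+)^2-f_y(k^-)^2|}{2M}\), the constraint \(B\le I_{\max}(L)\) is, by monotonicity, equivalent to \(L\ge\underline L\) where \(\underline L\) is the positive root of the quadratic \(I_{\max}(L)=B\); clearing \(4M\) and using \((a+b)^2+(a-b)^2=2(a^2+b^2)\) gives \(\underline L=\big(2\sqrt{(a^2+b^2)/2+MB}-(a+b)\big)/M\), so dividing by \(s_0-s_1\) reproduces \(\underline{\eps}\), and one checks \(\underline L\ge|a-b|/M\). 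If additionally \(B\ge\frac{a^2+b^2}{2M}\), then \(I_{\min}(L)\le B\) holds for all \(L\ge|a-b|/M\), the binding constraint is only \(L\ge\underline L\), and \(\Upsilon=[\underline{\eps},\infty)\). Otherwise \(\frac{|f_y(k^+)^2-f_y(k^-)^2|}{2M}\le B<\frac{a^2+b^2}{2M}\), so \(I_{\min}(L)\le B\) fails on the flat branch and therefore forces \(L\le\overline L\), the smaller root of \(I_{\min}(L)=B\) on the increasing branch; solving gives \(\overline L=\big((a+b)-2\sqrt{(a^2+b^2)/2-MB}\big)/M\), and dividing by \(s_0-s_1\) reproduces \(\overline{\eps}\). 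Since \(I_{\min}\le I_{\max}\) forces \(\underline L\le\overline L\), and one verifies \(|a-b|/M\le\underline L\le\overline L<(a+b)/M\), the feasible set is \([\underline L,\overline L]\) and \(\Upsilon=[\underline{\eps},\overline{\eps}]\).

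The bulk of the work, and the part to be careful with, is this bookkeeping: checking that the tent and valley are the correct pointwise envelopes (immediate from the Lipschitz bound and nonnegativity, but one must split cases according to where the two linear pieces meet relative to \([\underline n,\overline n]\)), evaluating their areas, and confirming that \(\underline L,\overline L\) land on the right monotone branches with \(|a-b|/M\le\underline L\le\overline L<(a+b)/M\) in the bounded case. For sharpness — that every \(\eps\in\Upsilon\) is attained — I would note that for feasible \(L\) a convex combination \(\lambda g_{\max}+(1-\lambda)g_{\min}\) of the tent and valley (both admissible on \([\underline n,\overline n]\) once \(L\ge|a-b|/M\)) integrates to \(B\) for the appropriate \(\lambda\in[0,1]\), and that any such shape on \([\underline n,\overline n]\) extends to a nonnegative Lipschitz-\(M\) density on \(\mmr\) by descending linearly at slope \(\mp M\) from each endpoint and appending a low, wide, distant plateau carrying the residual mass \(1-B\) (this needs only \(B<1\), and imposing strict positivity costs an arbitrarily small change in \(B\)). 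Two facts also drop out of the closed forms: \(\underline{\eps}>0\) whenever \(B>0\), since \((a-b)^2+4MB>0\) forces \(2\sqrt{(a^2+b^2)/2+MB}>a+b\); and the affine density through \((\underline n,a)\) and \((\overline n,b)\) with integral \(B\) has slope magnitude \(|f_y(k^+)^2-f_y(k^-)^2|/(2B)\le M\) exactly when \(\Upsilon\neq\emptyset\), so the trapezoidal-type estimator of Example \ref{example1} is nested in \([\underline{\eps},\overline{\eps}]\).
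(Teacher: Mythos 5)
Your proposal is correct and follows essentially the same route as the paper's proof: both bound the mass over the bunching interval by the areas of the extremal ``tent'' and ``valley'' (inverted-hat) Lipschitz-$M$ envelopes, use monotonicity of these areas in the interval length $\eps(s_0-s_1)$, and invert the resulting quadratics to obtain $\underline{\eps}$, $\overline{\eps}$, and the three regimes. Your additional sharpness and nesting remarks go beyond what the theorem asserts (and beyond the paper's own proof), but the core argument coincides.
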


Figures \ref{fig:imposs_panelc} and \ref{fig:imposs_paneld} provide the intuition behind the derivation of the bounds in $\Upsilon$.
For a fixed value of $\eps$, the length of the interval $[\underline{n},\overline{n}]$
is fixed. 
If the magnitude of the derivative of $f_{n^*}$ is bounded by $M$, we obtain maximum and minimum areas under $f_{n^*}$ over 
$[\underline{n},\overline{n}]$.
We repeat this exercise for every value of $\eps$ to get a range of possible areas associated with each $\eps$.
Given the probability of bunching $B$ is the area under the true $f_{n^*}$ over 
$[\underline{n},\overline{n}]$,
the partially identified set has all values of $\eps$ whose range of possible areas contains $B$.
The partially identified set is empty if $M$ is not big enough to allow for the existence of a 
continuous function $f_{n^*}$ which connects 
$f_{y}(k^-) = f_{n^*}(k-\eps s_0)$ to $f_{y}(k^+) = f_{n^*}(k-\eps s_1)$.
The partially identified set is unbounded if $M$ is large enough to allow $f_{n^*}$
to be zero inside the interval $[\underline{n},\overline{n}]$.

The uniform approximation made by one of the original estimators (Example \ref{example2} in Section \ref{sec:app:restriction_fn} of the supplement) says that $f_{n^*}$ has zero slope inside the bunching interval, that is, $M=0$.
The trapezoidal approximation (Example \ref{example1} in that same section) implicitly chooses $M=m_0$ such that
$m_0$ is the smallest value of $M$ for which we have bounds that are well defined.
Formally, $m_0$ solves $B= \left|f_{y}(k^+) - f_{y}(k^-) \right| ~ \left[ f_{y}(k^+) + f_{y}(k^-) \right]/ 2m_0$, which makes $\underline{\eps} = \overline{\eps}$ and point-identifies $\eps$.
Thus the exercise of computing bounds necessarily involves assumptions weaker than the uniform and trapezoidal approximations.

Smoothness assumptions such as slope restrictions are now common in the partial identification literature.
For example, \cite{kim2018} study partial identification of average treatment effects under smoothness conditions on the treatment response function; 
\cite{rambachan2023} derive bounds on treatment effects in difference-in-difference designs under smoothness conditions on the class of deviations of the parallel trend assumption.  
A common issue 
in this literature 
is that the researcher must choose the smoothness assumption.
In the case of Theorem \ref{theo_partial}, the researcher must specify the value of $M$.
The impossibility of identifying the elasticity without structure on $\m{F}_{n^*}$ (Section \ref{sec:identification:imposs})
implies that it is impossible to identify, and thus estimate, the value of $M$.

We recommend researchers to conduct a sensitivity analysis by plotting the bounds in Theorem \ref{theo_partial} as a function of  $M$, for a range of values of $M$ that is considered reasonable given the empirical context.
From above, we know that $M=m_0$ yields point identification. 
A  useful reference for $M$ comes from the maximum slope magnitude of the continuous part of $f_y$, say $m_1$.
The PDF $f_y$ is identified and is the shifted PDF of $n^*$.
Thus, the maximum slope of $f_{n^*}$ outside of the bunching interval is identified and equal to $m_1$.
If we assume that the slope of $f_{n^*}$ inside the bunching interval is never bigger than outside, then $M=m_1$.
Thus, a rule of thumb for the range of values of $M$ is to start at $m_0$ and go up to at least $m_1$.
One may also estimate the worst case PDFs of Figures \ref{fig:imposs_panelc} and \ref{fig:imposs_paneld} to evaluate the visual effect of $M$ on the shape of the latent distribution.
Sensitivity analysis of this kind are not new in the partial identification literature. 
The idea is to report what can be learned under a sequence of progressively weaker assumptions.
For examples, we refer the reader to 
\cite{kim2018} 
and
\cite{rambachan2023}.

Theorem \ref{theo_partial} is important to quantify the magnitude of the impossibility problem of identification using kinks. 
If the bounds plotted for a range of $M$ values admit elasticities that are too different in economic terms, then the identifying assumptions play a critical role in determining the elasticity.
We give full details and implement this sensitivity analysis in the empirical section using our \texttt{bunching} Stata package 
(Section \ref{sec:application}).\footnote{It is important to clarify that the problem of choosing $M$ is different than the typical problem of choosing a tuning parameter, e.g., a bandwidth or polynomial order in nonparametric estimation. 
The value of $M$ represents a choice of functional form assumption, while in nonparametric estimation, you typically 
choose the tuning parameter to achieve desirable properties of the estimator for a given functional form assumption.}

We developed Theorem \ref{theo_partial} independently of \cite{blomquist2017a}, who were the first to present a partial identification result for $\eps$.
While we assume the PDF has bounded slope, \cite{blomquist2017a} partially identify the elasticity by assuming the PDF of heterogeneity is monotone. 
Our approach has three valuable properties that make it novel.
The first is that the bounds of our partially-identified set have closed-form solutions.
Second, an observed mass point implies a positive elasticity even for large values of the slope $M$, which is in line with the theoretical prediction that agents respond to a change in incentives. 
Third, it nests and is easily comparable to the original bunching estimator based on the trapezoidal approximation.

We end this subsection with the case of a budget set with several kinks $k_j$,
$j=1,\ldots, J$, but no notches.
One may ask whether the existence of several kinks helps identify the elasticity. 
As noted above, the bunching intervals  do not  overlap across kinks, that is, 
$\overline{N}_{j} = K_{j}(1-t_{j-1})^{-\eps} < K_{j}(1-t_{j})^{-\eps}= \underline{N}_{j}$.
Multiple kinks do not necessarily point-identify $\eps$, because the distribution of ${n^*}$ may be very different across different bunching intervals.

Multiple kinks do help with the identification of $\eps$, as long as the researcher restricts the slope of $f_{n^*}$ and believes the model in Equation \ref{eq:util_saez} applies to all individuals. 
This arises from the fact that every individual is assumed to have the same elasticity parameter $\eps$, and that the bounds of Theorem \ref{theo_partial} vary in length as $B_j$, $f_y(k_j^\pm)$, $s_j$ vary across cutoffs $j=1,\ldots, J$. 
The partially identified set is narrowed down by the intersection of
bounds specific to each one of the multiple kinks.
\begin{corollary}
\label{cor_partial}
Assume the conditions of Theorem \ref{theo_partial} for each kink 
$k_j$, $j=1, \ldots, J$.
Then the elasticity $\eps \in \bigcap_{j=1}^{J} \Upsilon_j$,
where $\Upsilon_j$ is the partially identified set of Theorem \ref{theo_partial} applied to kink $k_j$.
\end{corollary}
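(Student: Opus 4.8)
The plan is to observe that Corollary \ref{cor_partial} follows almost immediately from Theorem \ref{theo_partial}, once one records the correct interpretation of the latter in the multi-kink setting. The key structural fact — already noted in the paragraph preceding the statement — is that in a budget set with kinks $k_1,\ldots,k_J$ and no notches, the bunching intervals $[\underline{N}_j,\overline{N}_j]$ are pairwise disjoint, and the general solution in Section \ref{sec:app:general_prob_sol} reduces, locally around each $k_j$, to exactly the one-kink solution in Equation \ref{eq:logsol-onekink} with slopes $s_{j-1}$, $s_j$ in place of $s_0$, $s_1$. Consequently, for each $j$ the triple $(f_y(k_j^-), f_y(k_j^+), B_j)$ together with $k_j$, $s_{j-1}$, $s_j$ satisfies precisely the hypotheses of Theorem \ref{theo_partial} — the Lipschitz bound $M$ on $f_{n^*}$ holds globally, hence in particular on the $j$-th bunching interval — so the same elasticity parameter $\eps$ that governs every individual's behavior must lie in $\Upsilon_j$.

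First I would state explicitly the locality reduction: restrict attention to a neighborhood of $k_j$ containing no other cutoff (possible since the $k_j$ are isolated and the bunching intervals disjoint), and verify that within this neighborhood the map from $(f_{n^*},\eps)$ to the observed income CDF is the one-kink map $T$ of Section \ref{sec:identification:imposs} with the appropriate slope pair. This step uses only the explicit form of the general solution; I would cite Equation \ref{eq:gen-sol} and the non-overlapping property $\overline{N}_j < \underline{N}_j$ (where $\underline{N}_j,\overline{N}_j$ are indexed as in the corollary's surrounding text). Second, I would apply Theorem \ref{theo_partial} verbatim at each $j$ to conclude $\eps\in\Upsilon_j$. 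Third, since this holds simultaneously for every $j=1,\ldots,J$ — there is a single $\eps$ common to all individuals and all kinks by the maintained model assumption — we get $\eps\in\bigcap_{j=1}^J\Upsilon_j$, which is the claim.

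There is no real obstacle here; the content is entirely in Theorem \ref{theo_partial} and the already-established disjointness of bunching intervals. The one point deserving a sentence of care is that the Lipschitz constant $M$ is a property of the single unconditional density $f_{n^*}$ on all of $(0,\infty)$, so the \emph{same} $M$ feeds into every $\Upsilon_j$; this is what makes the intersection meaningful rather than a collection of bounds under unrelated assumptions. I would also remark (as the preceding discussion does) that this is genuinely informative only because $B_j$, $f_y(k_j^\pm)$, and the slope gaps $s_{j-1}-s_j$ vary across $j$, so the intervals $\Upsilon_j$ generically differ and their intersection is strictly smaller than any single one; but that observation is commentary, not part of the proof. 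The formal proof is therefore a two-line deduction, and I would present it as such.
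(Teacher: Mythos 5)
Your proposal is correct and matches the paper's own (largely implicit) justification: the non-overlap of bunching intervals from the general solution lets Theorem \ref{theo_partial} be applied locally at each kink with the same Lipschitz constant $M$ on the single density $f_{n^*}$, and since one common $\eps$ governs all kinks, it lies in every $\Upsilon_j$ and hence in the intersection. The paper treats this as an immediate consequence in the surrounding text rather than giving a separate proof, so your two-line deduction is exactly the intended argument.
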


\subsection{Semi-parametric Identification with Covariates}
\label{sec:solutions:cov}

\indent 

Identification with kinks is impossible when the distribution of ability $n^*$ belongs to the nonparametric class of all continuous distributions.
Parametric functional form assumptions identify the elasticity, but identification relies on fitting such functional form to non-bunching individuals and extrapolating the functional form to bunching individuals.

This section considers alternative identification assumptions that rely on the existence of additional covariates in the dataset. There is strong empirical evidence suggesting that ability is well explained by individual characteristics, such as
age, demographics,  filing status, etc.
For example, the ability distribution of young workers may have a very different mean and variance, compared to that of older workers. 
Extrapolations based on covariates that predict $n^*$ are much more reasonable than extrapolations solely based on the shape of the PDF of $n^*$. 
The key assumption is that covariates that help explain the distribution of $n^*$ for non-bunching individuals also help explain the distribution of $n^*$ for bunching individuals.

We start by connecting bunching to censored regression models. 
This allows us to relate to the vast econometrics literature in this area.
Consider again the data generating process given by Equation \ref{eq:logsol-onekink}.
The model for $y$ is a mid-censored model,
where the error term is $n^*$, the intercept to the left of the kink is $\eps s_0$, the intercept to the right of the kink is $\eps s_1$, and the censoring point is $k$.
The main difference between \eqref{eq:logsol-onekink} and a typical censored regression model is that the latter has the censoring point at either the minimum or maximum of the distribution of $y$ (see Equation \ref{eq_tobit_min_max} in the next subsection).
Identification, estimation, and inference in these models have been widely studied in econometrics since \cite{Tobin1958}.

There are many advantages of framing the  estimation of $\eps$ as estimation of a censored model.
Surveys of censoring models and their applications are provided by \cite{Maddala1983}, \cite{Amemiya1984}, \cite{Dhrymes1986}, \cite{Long1997}, \cite{Demaris2005}, and \cite{Greene2005}. There are straightforward extensions that account for optimizing frictions. Moreover,  censored models are easily estimated with a number of different techniques that are available in many computer packages.  Most importantly, it becomes extremely practical to add covariates as explanatory factors for the distribution of $n^*$.

Assume the researcher has access to a  vector of covariates $X \in \mathbb{R}^{1 \times (d+1)}$, where $X$ contains one intercept variable and $d$ slope variables, $\mme[X'X]$ has full rank, but otherwise the distribution of $X$ is unrestricted.
We build on censoring models with covariates to identify the elasticity by imposing two types of semi-parametric assumptions on the distribution of $n^*$.

The first type of assumption states that the distribution of $n^*$ is a certain mixture of normal distributions averaged over the distribution of covariates.
This assumption does not imply conditional normality of $n^*$ given $X$ but it is implied by conditional normality of $n^*$.
Although the Tobit model assumes  normality of the unobserved distribution conditional on covariates, we demonstrate that the Tobit estimator remains consistent under our semi-parametric class of normal mixtures \textemdash as long as the unconditional distribution for $y$ implied by the Tobit model matches the true distribution of $y$.
In addition, the researcher may estimate a truncated Tobit model on data in a small neighborhood of the kink point, which requires even weaker distribution assumptions for consistency.
Our main motivation to study the robustness of Tobit to lack of normality comes from practical reasons:
Tobit is extremely popular and easy to implement due to its likelihood function being globally concave and software being ubiquitous.

The second type of assumption imposes a parametric functional form on a quantile of the conditional distribution of $n^*$ given $X$.
Sufficient variation in covariates yields point-identification of the elasticity,
which is consistently estimated by mid-censored quantile regressions.

\subsubsection{Tobit Regression}
\label{sec:solutions:cov:tobit}

\indent

The first type of assumption is formally stated in Assumption \ref{aspt:tobit} below.
In the meantime, we construct the Tobit estimator by simply assuming that
$F_{n^*|X}(n,x) = \Phi\left(\frac{n-x\beta}{\sigma} \right)$
for 
$\beta \in \mmr^{(d+1) \times 1 }$ and   $\sigma >0$,
where $F_{n^*|X}$ denotes the true CDF of $n^*$ conditional on $X$,
and $\Phi(\cdot)$ is the CDF of a standard normal distribution. 
Again, $X$ contains one intercept variable and $d$ slope variables, and $\mme[X'X]$ has full rank.
Our goal is to first relate bunching to Tobit, which is the most popular censoring model.
Conditional normality is assumed for ease of exposition in defining the mid-censored Tobit estimator;
it will be relaxed in Assumption \ref{aspt:tobit} below.


Define the error term $U = n^\ast - X \beta$, the latent variables
$y_{0}^* = {\eps} s_0 + X \beta + U$
and
$y_{1}^* = {\eps} s_1 + X \beta + U$,
where $y_{1}^* < y_{0}^*$, since $\eps>0$ and $s_0>s_1$.
Using Equation \ref{eq:logsol-onekink}, we see that $y$ follows a mid-censored Tobit model: 
\begin{equation}
y = 
\left\{
\begin{array}{ccl}
 y_{0}^*   & \text{, if} & y_{1}^* < y_{0}^* < k
\\
k                  & \text{, if} & y_{1}^* \leq  k \leq y_{0}^* 
\\
y_{1}^*   & \text{, if} & k < y_{1}^* < y_{0}^*
\end{array}
\right\}
= \min\{ y_{0}^*   ; ~ \max \{k  ;~  y_{1}^* \}  \}. 
\label{eq_tobit_min_max}
\end{equation}

This is different from the classic Tobit model, where the censoring point is either at the minimum or at the maximum of the distribution of $y$. 
A possible estimation strategy is to adapt the two-step Heckit estimator to our setting \citep{Heckman1976,Heckman1979}. 
In the first step, we estimate a binary outcome for bunching and not bunching individuals including covariates.  
In the second step, we regress income of not bunching individuals on covariates and the equivalents of the inverse Mills ratio.
It is useful to relate the mid-censored Tobit model to two classic Tobit models (left\textendash  \: and right\textendash censored). 
To see that,
construct the variables $y_{0}=\min\{y,k \}$ and $y_{1}=\max\{k , y \}$.
It turns out that $y_{0}$ follows a right-censored Tobit 
with intercept $\eps s_0 + \beta_0$,
slope coefficients $\beta_1, \ldots, \beta_d$, where $\beta = (\beta_0, \beta_1, \ldots, \beta_d)$.
Similarly, 
$y_{1}$ follows a left-censored Tobit with intercept $\eps s_1 + \beta_0$,
and slope coefficients $\beta_1, \ldots, \beta_d$.
Thus, the elasticity is consistently estimated by the difference of both intercepts $(\eps s_1 + \beta_0) - (\eps s_0 + \beta_0) $  divided by $(s_1-s_0)$.
Although readily implementable in most statistical packages, 
this estimation strategy does not constrain the slope coefficients and variances to be equal on both sides of the kink, which translates into loss of efficiency. 
The mid-censored Tobit likelihood naturally takes these constraints into account and provides the most efficient estimates.
It is therefore our preferred implementation. 

Let $(y_i, X_i)$, $i=1,\ldots,n$, be an iid sample of observations. 
The maximum likelihood estimator (MLE) for the true parameters $(\eps, \beta, \sigma)$
is constructed by maximizing the log-likelihood function with respect to $(e,b,s)$ given the sample data,
\begin{align}
L(y_1,\ldots,y_n | X_1, \ldots, X_n ; e, b, s )   & 
\nonumber
\\
   & \hspace{-2cm}   = \frac{1}{n} \sum_{i=1}^{n}
\mmi\left\{ y_{i}<k \right\}
\log \left[ \frac{1}{s}\phi\left( \frac{ y_{i}- e s_{0}-X_i b }{ s } \right) \right]
\nonumber
\\
     & \hspace{-1.2cm} +
\mmi\left\{ y_{i}=k \right\} 
\log \left[ 
 \Phi\left(\frac{ k- e s_1 - X_i b}{s}\right) 
 -
 \Phi\left(\frac{ k- e s_{0} - X_i b }{s} \right) 
 \right]
 \nonumber
\\
  & \hspace{-1.2cm} +
\mmi\left\{ y_{i} > k \right\}
\log\left[ 
    \frac{1}{s}\phi\left(\frac{y_{i}- e s_{1}- X_i b}{s} \right)
\right]
\nonumber
\\
 & \hspace{-2cm} \equiv \frac{1}{n} \sum_{i=1}^n \ell_i( e, b, s).
\label{eq:likelihood}
\end{align}

Regardless of what the true distribution $F_{n^*|X}$ is, 
the MLE based on \eqref{eq:likelihood} is consistent for the parameters that 
maximize the population average of the log-likelihood function, 
that is,  
$(e^*, b^*, s^*) = \arg \max_{e,b,s} \mme[\ell_i(e, b, s)]$, where the solution is unique (see, e.g., \cite{hayashi2000}, Section 8.3.).
We say the elasticity is identified by a mid-censored Tobit when $e^* = \eps$.
This occurs when $F_{n^*|X}(n,x) = \Phi\left(\frac{n-x\beta}{\sigma} \right)$, but we would like to relax the normality assumption 
and still have  $e^* = \eps$.
We pursue this exercise in the next paragraphs.
The Tobit estimator is extremely practical to implement, and we believe it is worth investigating a set of assumptions on $F_{n^*|X}$ that are weaker than normality and yet sufficient for consistency. 
We start by describing these conditions on the set of possible distributions $\m{F}_{n^*|X}$.

\begin{assumption}\label{aspt:tobit}
The set of true conditional CDFs of $n^*$ given $X$ is denoted  $\mathcal{F}_{n^*|X}$.
We assume $\mathcal{F}_{n^*|X}$ is such that:
\begin{enumerate}
    \item[(i)] for every ${F}_{n^*|X} \in \mathcal{F}_{n^*|X}$, 
    the unconditional CDF 
    satisfies 
    $F_{n^*}(n) = \mme\left[ {F}_{n^*|X}(n,X) \right] = \mme\left[ \Phi\left(\frac{n-X b}{s} \right) \right]$
    for some $b \in \mmr^{(d+1) \times 1}$ and $s>0$.
    The set of all unconditional CDFs $\mathcal{F}_{n^*} = \left\{ F_{n^*}(n) = \mme\left[ {F}_{n^*|X}(n,X) \right] 
        \text{ for } {F}_{n^*|X}
        \in 
        \mathcal{F}_{n^*|X} 
        \right\}$
        satisfies Assumption \ref{aspt:cond_inv};
        
    \item[(ii)] recall that 
    $\underline{n} = k-\eps s_0$,
    $\overline{n} = k-\eps s_1$,
    $B = \mme\left[ {F}_{n^*|X}(\overline{n}, X) - {F}_{n^*|X}(\underline{n}, X)   \right]$;
    define 
    $D=\mmi\{n^* \geq \overline{n} \}$ and
    $B_N(X,\delta,\theta,s) = \Phi\left(( \overline{n} - \delta - X \theta  )/s \right) - \Phi\left(( \underline{n} - X \theta)/ s \right)$;
    for every ${F}_{n^*|X} \in \mathcal{F}_{n^*|X}$, 
    we have 
    $\mme\left[ {F}_{n^*|X}(n,X) \right] = \mme\left[ \Phi\left(\frac{n-X \beta}{\sigma} \right) \right]$
    for $\beta$ and $\sigma$ that satisfy
        \begin{align}
        (0,\beta,\sigma) =  \arg \min_{\delta, \theta, s} &  ~ \frac{ 1-B }{2} 
        \left\{ 
        \log(s^2) + \frac{1}{s^2} \mme\left[ \left(n^* - D \delta - X \theta \right)^2 | n^* \not \in [\underline{n}, \overline{n}] \right]
        \right\}
        \notag
        \\
        & ~ -
        B ~ \mme\left[
        \log\left( B_N(X,\delta,\theta,s) \right) | n^* \in [\underline{n}, \overline{n}]
        \right].
        \label{aspt:tobit2}
        \end{align}
\end{enumerate}

\end{assumption}

Assumption \ref{aspt:tobit}(i) says that if we take any conditional distribution of $n^*$ given $X$ from the set $\m{F}_{n^*|X}$ and integrate it over $X$ we obtain a marginal distribution of $n^*$
which is a mixture of normals. 
The mixture is averaged over the distribution of $X$ and 
the more variation in covariates one has, the richer the set $\m{F}_{n^*}$  is.
Assumption \ref{aspt:tobit}(i) also assumes the set of mixtures of normals satisfies Assumption \ref{aspt:cond_inv}, so it is sufficient for point identification of the elasticity.
To see that, note that each mixture of normals in $\m{F}_{n^*}$ is characterized by $b \in \mmr^{(d+1) \times 1}$ and $s>0$, that is, $G_{n^*}^{b,s}(n) = \mme\left[ \Phi\left(\frac{n-X b}{s} \right) \right]$.
That along with a value of the elasticity $e \geq 0$ imply a distribution of $y$,
\begin{equation*}
G_y^{e,b,s}(y) = 
\left\{
\begin{array}{ccl}
G_{n^*}^{b,s}(y - e s_0) 
& \text{, if } & y < k,
\\[0.25cm]
G_{n^*}^{b,s}(y - e s_1) 
& \text{, if } & y \geq k.
\end{array}
\right.
\end{equation*}
If we search for a mixture of normals in $\m{F}_{n^*}$ and an elasticity value that matches the observed distribution of $y$, we will solve uniquely for the true elasticity and a distribution $G_{n^*} \in \mathcal{F}_{n^*}$ (Assumption \ref{aspt:cond_inv}). 
Uniqueness of $G_{n^*}$ does not necessarily mean that $G_{n^*}$ is indexed by a unique set of parameter values $(b,s)$.
The assumption still allows for cases where different values of parameters $(b,s)$ lead to the same CDF, $G_{n^*}$.
For example, let $X=[1,W]$ and $(n^*,W)$ be distributed as a standard bivariate normal with correlation $\rho$.
The CDF of $n^*$ conditional on $X$ equals $\Phi( n-\rho W )$.
The unconditional CDF of $n^*$ is equal to $\Phi(n)$, which is the same as $\mme[\Phi( n-\rho W )]$ or $\mme[\Phi( n )]$. 
Thus, $\Phi(n)$ belongs to $\mathcal{F}_{n^*}$ with $s=1$ and either $b=(0,\rho)$ or $b'=(0,0)$.
Coming back to Assumption \ref{aspt:tobit}, part (i) says that $G_y^{e,b,s}=F_y$ for some value of $(e,b,s)$ where $e$ is always equal to $\eps$.

The mid-censored Tobit estimator produces a value for the elasticity $e^*$ and picks the mixture of normals characterized by $(b^*, s^*);$ both of these imply a distribution of $y$:
$G_y^{*} = G_y^{e^*,b^*,s^*}$.
The Tobit best-fit distribution $G_y^*$ may or may not match the observed distribution of $y$, even if Assumption \ref{aspt:tobit}(i) is true;
in case it does match, the equality $F_y=G_y^*$ implies $e^*=\eps$ by virtue of Assumption \ref{aspt:tobit}(i).
In general, we may not have $F_y=G_y^*$ because the Tobit MLE does not necessarily minimize
the distance between $F_{y}$ and $G_{y}^{e,b,s}$ for a choice of parameters $(e,b,s)$.
In fact, the Tobit MLE minimizes the Kullback-Leibler divergence between two conditional distributions of $y$ given $X$ averaged over $X$, 
where the two conditional distributions are 
$F_{y|X}$ and
$G_{y|X}^{e,b,s}$ for a choice of parameters $(e,b,s)$.
These are two different minimization problems in general.
Assumption \ref{aspt:tobit}(ii) states a condition on $F_{n^*|X}$ that guarantees 
that
minimizing the average Kullback-Leibler divergence is equivalent 
to minimizing
the distance between $F_{y}$ and $G_{y}^{e,b,s}$.

To interpret Assumption \ref{aspt:tobit}(ii), 
note that the objective function of the minimization problem in \eqref{aspt:tobit2} 
is a known function of $(\delta, \theta, s)$ once we fix $F_{n^*|X} \in \m{F}_{n^*|X}$ plus knowledge of the distribution of $X$, which is identified. 
That objective function  is the  average of two terms weighted by the bunching mass $B$ and $1-B$.
The term on the LHS is the objective function of the least-squares problem conditional on $n^* \not \in [\underline{n}, \overline{n}]$, and the term on the RHS is an average of the log of the bunching mass conditional on $X$ in case the distribution of $n^*$ given $X$ were normal.
Thus, the objective function is a ``penalized'' least-squares problem. 
To gain further insight, assume $\eps \approx 0$, so that $B \approx 0$ and $\underline{n} \approx \overline{n}$.
In this case, \eqref{aspt:tobit2} is approximately 
\[
(0,\beta,\sigma) =  \arg \min_{\delta, \theta, s}    
\left\{ 
\log(s^2) + \frac{1}{s^2} \mme\left[ \left(n^* - D \delta - X \theta \right)^2  \right]
\right\}.
\]

This is equivalent to saying that the population regression of $n^*$ on $(D,X)$ must produce $(0,\beta)$ as coefficients,
and that the variance of the regression error must be $\sigma^2$.
This translates to linear restrictions on the first two moments of the joint distribution of $\left( n^*, \mmi\{n^* \geq \overline{n} \}, X \right)$, but that distribution is otherwise unrestricted. 
In particular, a Gaussian distribution
${F}_{n^*|X}(n,X)  = \Phi\left(\frac{n-X b}{s} \right)$ belongs to the set $ \m{F}_{n^*|X}$,
but not every element of $\m{F}_{n^*|X}$ is Gaussian. 
We give examples of $F_{n^*|X}$ that satisfy Assumption \ref{aspt:tobit} and are not Gaussian in the simulation experiments at the end of this section 
(Figures \ref{fig:simulation1}\textendash  \ref{fig:simulation2}) 
and in Section \ref{sec:supp:tobit:robust} of the supplemental appendix.
Lemma \ref{lemma:tobit_robust} below  shows that the mid-censored Tobit MLE identifies the elasticity under Assumption \ref{aspt:tobit}, and  the proof is in Section \ref{sec:app:tobit} of Appendix \ref{app}.

\begin{lemma}
\label{lemma:tobit_robust}
Let $F_y$ be the CDF of the observed distribution of $y$ and
$G_y^*$ be the CDF of the Tobit best-fit distribution for $y$ as defined above.
Suppose Assumption \ref{aspt:tobit}(i) holds. 
If $G_y^* = F_y$, then $e^*=\eps$.
Moreover, suppose Assumption \ref{aspt:tobit}(ii) holds. 
Then, $G_y^* = F_y$ and $e^*=\eps$.

\end{lemma}

If the Tobit best-fit distribution of $y$ matches the true distribution of $y$,
Lemma \ref{lemma:tobit_robust} guarantees that the elasticity estimated by the Tobit is consistent for the true elasticity, regardless of whether $F_{n^*|X}$ is normal.
We provide an example of this in the second simulation experiment at the end of this section (Figure \ref{fig:simulation2}), as well as in Section \ref{sec:supp:tobit:robust} of the supplemental appendix (Figure \ref{fig:simulation3}). 
Standard quasi-MLE asymptotic inference procedures apply here. 
Namely, the MLE $(\hat \eps, \hat \beta, \hat \sigma)$ obtained from \eqref{eq:likelihood} and centered at $(e^*, b^*, s^*)$ 
is asymptotically normal, with  zero mean and the usual variance-covariance matrix in the ``sandwich form.''

One of the features of  bunching estimators is the reliance on data local to the kink point.
With the mid-censored Tobit model, the researcher may also restrict the sample to observations of $y$ lying in a small neighborhood of $k$ and estimate a truncated Tobit.\footnote{
The truncated Tobit model has log-likelihood that is slightly different from \eqref{eq:likelihood}. 
Instead of the log-likelihood of $y|X$, we maximize the log-likelihood of $y|X,k-\delta<y<k+\delta$ for $\delta>0$,
which has a truncated normal distribution censored at $k$.}
The truncated Tobit is an attractive estimation strategy, because consistency of $\hat \eps$ 
has weaker requirements in terms of Assumption \ref{aspt:tobit} and the $G_y^* = F_y$ condition.
Assumption \ref{aspt:tobit} only needs to hold for the distribution of $n^*$ conditional on $n^*$ being in a small interval containing $[\underline{n}, \overline{n}]$.
Moreover, the smaller the truncation window, the easier it is to fit the unconditional distribution of $y$ with a Tobit, and the stronger is the robustness result of Lemma \ref{lemma:tobit_robust}.

As a matter of routine, we recommend researchers  estimate 
a truncated Tobit model for various window sizes around the kink point and  examine two things:
first, the plot of the estimated elasticity as a function of the size of the truncation window;
second, the plot of the best-fit Tobit distribution of $y$ compared to the histogram of $y$ for various  sizes of truncation windows.
The distribution fit tends to improve as the size of the window decreases. 
The better the fit, the more likely the conditions of Lemma \ref{lemma:tobit_robust} are met, and the closer is the elasticity to the truth.
We illustrate this exercise with simulated data below and with real data in Section \ref{sec:application}.

To end this section, we carry out two simulation experiments to illustrate the robustness property of our Tobit estimator to lack of normality.
In both experiments, we use the Skewed Generalized Error Distribution (SGED) with parameters
$\mu$, $\sigma$, $k$, and $\lambda$ (see \cite{theodossiou2000}, Section 5A).
We denote the PDF $f_{n^*}(n)$ of a SGED as $SGED(n;\mu,\sigma,k,\lambda)$. 
Parameters $\mu \in \mmr$ and $\sigma \in \mmr_+$ equal the mean and standard deviation of the distribution, respectively.
The parameter $k \in \mmr_+$ relates to kurtosis, while $\lambda \in (-1,1)$ regulates skewness.
SGED nests the Laplace distribution ($\lambda=0$, $k=1$), the normal distribution ($\lambda=0$, $k=2$),
and the uniform distribution ($\lambda=0$, $k \to \infty$) as special cases.
The data generating processes of these experiments have some empirical features that resemble those of the EITC data in Section \ref{sec:application}, e.g., location of the kink and range of the distribution. 

Experiment 1 has two goals. First, the experiment shows that a mixture of normals can be almost any distribution as long as the distribution of $X$ is rich enough. 
The unconditional distribution of $n^*$ does not need to be ``locally normal'' or ``locally symmetric'' at the kink. 
The second goal of Experiment 1 is to show that truncation without covariates may require a much smaller truncation window to fit the distribution of $y$ compared to truncation with covariates.
The key parameters of Equation \ref{eq:logsol-onekink} are $\eps=1$, $k=2.0794$, $s_0=0.2624$, and $s_1=-0.1054$.
Figure \ref{fig:simulation1:a} displays the distribution of $n^*$, which is a mixture of two SGEDs: 
$f_{n^*}(n)= (1/2)SGED(n; 1.6,0.75,4,-.5) + (1/2)SGED(n; 6,0.75,1,.5)$.
The random variable $X$ is a scalar and $\beta=1$.
We specify $F_{n^*|X}(n^*) = \Phi\left((n-X)/0.0717\right)$ and solve numerically for the distribution of $X$ that satisfies $F_{n^*}(n) = \mme\left[\Phi\left((n-X)/0.0717\right) \right]$ (Figure \ref{fig:simulation1:b}).
We generate 50,000 observations of $(y,X)$ according to this model,
and the histogram of $y$ is displayed in Figure \ref{fig:simulation1:c}.
Although $f_{n^*|X}$ is normal, it is clear from the figures that $f_{n^*}$ is very far from Gaussian, even within smaller truncation windows.

We estimate two different Tobit models using data from Experiment 1. 
The first model is correctly specified with covariate $X$.
We start with the full sample of simulated data and produce estimates for truncation windows that are symmetric around the kink point and shrink in size. 
For example, Figures \ref{fig:simulation1:c}\textendash \ref{fig:simulation1:e} show the histogram of simulated data for $y$, and the best-fit Tobit distributions for three truncation sizes, 100\%, 60\%, and 20\%.
Figure \ref{fig:simulation1:f} displays the elasticity estimate as a function of the percentage of data used in each truncated estimation. 
As expected, the elasticity estimate is stable over all truncation windows, because the model is correctly specified. 
The Tobit fits the distribution of $y$ perfectly for all truncation windows, and the estimated elasticity is approximately equal to the truth.

The second model we estimate with data from Experiment 1 omits the covariate $X$.
The Tobit model is misspecified because $f_{n^*}$ is not normal.
As expected, estimation using all of the data does not fit the distribution of $y$ (Figure \ref{fig:simulation1:g}).
As a general rule, the smaller the truncation window, the better the fit to the distribution of $y$ (Figures \ref{fig:simulation1:g}\textendash \ref{fig:simulation1:i}), although a perfect fit is not always guaranteed for reasonable sample sizes. 
It is only in the last feasible truncation window of 20\% that the fit becomes reasonable and the elasticity estimate reaches the true value (Figure \ref{fig:simulation1:j}).
This experiment shows the importance of covariates for the fit of the distribution and that truncation does not always yield the same fit that the inclusion of relevant covariates does.

We move to Experiment 2, which again has two goals. 
First, not only normality of $n^*$ is not required for consistency; conditional normality of $F_{n^*|X}$ is not required either.
Second, consistency of the Tobit elasticity and perfect fit of the distribution of $y$ do not require truncation in models without conditional normality (i.e., models with $F_{n^*|X}$ misspecified).
Figure \ref{fig:simulation2:a} plots the PDF of $n^*$, which is approximately the mixture of two SGEDs,
$(1/2)SGED(n;1,0.75,4,-.5) + (1/2)SGED(n;6,0.75,1,.5)$.
The distribution of scalar $X$ is discrete with 20 mass points (Figure \ref{fig:simulation2:b}) and is chosen such that
$F_{n^*}(n) = \mme\left[\Phi\left((n-X)/0.1919\right) \right]$
approximates the CDF of the mixture of two SGEDs.
We solved numerically for non-normal conditional distributions of $n^*$ given $X$ 
that satisfy Equation \ref{aspt:tobit2}.
Figure \ref{fig:simulation2:d} displays the true PDFs $f_{n^*|X=x}$ in black and 
the normal PDFs $g_{n^*|X=x}$ assumed by the Tobit in gray, for all values of $x$.
We clearly see that $f_{n^*|X}$ is not normal.
We then generate 50,000 observations of $(y,X)$ and fit our Tobit model with the covariate $X$ to the entire sample. 
Despite the lack of conditional normality and truncation, the Tobit model fits the distribution of $y$ (Figure \ref{fig:simulation2:c}) and estimates the elasticity at $\ha\eps = 1.0083$ (S.E. 0.0073).
Section \ref{sec:supp:tobit:robust} in the supplement repeats this experiment for the case $n^*$ has uniform distribution. 

This section gives a practical rational for using the truncated Tobit model but other more flexible model assumptions, such as index models or the semi-parametric Tobit model of \cite{chen2011}  are also possible. 
For example, and in terms of our notation, Section 7.3 of \cite{chen2011} assumes $n^* = X \beta + \sigma(X) \nu$, with $G$ being the CDF of the distribution of $\nu$ conditional on $X$.
Both $\sigma$ and $G$ are unknown but smooth functions, and $\eps$ and $\beta$ are partially identified.
The more structure one imposes on $\sigma$ and $G$, the narrower the partially identified set gets, eventually getting to point identification as in our case.
\cite{chen2011} then construct confidence regions for $\eps$ and $\beta$ that are valid regardless of point or partial identification using a sieve MLE bootstrap. 
\cite{chen2018} provide a computationally attractive alternative to the sieve MLE bootstrap that is based on Monte Carlo simulations.
Overall these methods constitute helpful approaches to performing sensitivity analysis of model assumptions and suggest a rich area for future work. 

\subsubsection{Censored Quantile Regressions}\label{sec:solutions:cov:cqreg}
\indent 

Another type of  semi-parametric assumption on the ability distribution consists of restricting  a  quantile of the distribution of $n^*$, conditional on $X$. 
Namely, for $\tau \in (0,1)$, we assume
that there exists 
$\beta(\tau) \in \mmr^{1 \times (d+1)}$
such that
\begin{equation}
Q_{\tau} \left(n^{\ast} \mid X \right) =  X \beta(\tau),
\label{aspt:clad}
\end{equation}
where $Q_{\tau}$
denotes the $\tau$-th quantile of a distribution.
A common choice in applied work is $\tau=1/2$ or the median regression.
The restriction in \eqref{aspt:clad} may be a flexible one if one includes transformations of $X$ on the right-hand side, e.g., polynomials and interaction terms.

Equation  \ref{eq_tobit_min_max} leads to
$y= \min\{ \eps s_0 + n^*   ; ~ \max \{k  ;~  \eps s_1 + n^* \}  \}$,
which is an increasing and continuous function of $n^{\ast}$. 
The quantile of an increasing and continuous function of $n^*$ is equal to that same function evaluated at the quantile of  $n^*$.
Using Equation \ref{aspt:clad}, 
\begin{equation}
Q_{\tau} \left(y \mid X \right) = \min\{\eps s_0 +  X \beta(\tau); ~ \max \{k  ;~ \eps s_1 + X \beta(\tau)\}  \}.
\label{eq:quant_y_min_max}
\end{equation}

For those observations such that $X \beta(\tau)< k - \eps s_0$ or 
$X \beta(\tau)> k - \eps s_1$, the quantile $Q_{\tau} \left(y \mid X \right)$ varies linearly with $X$; otherwise, it is constant and equal to $k$. 
Intuitively, if there is enough variation in $X$ for uncensored observations, then the slope coefficients and the intercepts are identified. This leads to identification of $\eps$. 

\begin{lemma}
\label{lemma:clad}
Define 
$\tilde{X}=
\left[X, ~
\mmi\left\{ Q_{\tau} \left(y \mid X \right) > k \right\} \right]$,
a random vector in $\mathbb{R}^{1 \times (d+2) }$.
Assume  \\
$\mme
\left[
\mmi\left\{ Q_{\tau} \left(y \mid X \right) \neq k \right\}  
\tilde{X}' \tilde{X}
\right] $ has full rank and that Equation \ref{aspt:clad} holds.
Then $\eps$ is point identified.
\end{lemma}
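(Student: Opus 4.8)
The plan is to turn the identification claim into a linear-algebra statement about the observable object $Q_\tau(y\mid X)$. Since the conditional $\tau$-quantile of $y$ given $X$ is a feature of the joint distribution of $(y,X)$, it is identified, and so are the event $\{Q_\tau(y\mid X)\neq k\}$ and the regressor $\tilde X$. First I would read off the regime structure from Equation \ref{eq:quant_y_min_max}: writing $a=\eps s_0+X\beta(\tau)$ and $b=\eps s_1+X\beta(\tau)$, we have $b<a$ because $\eps>0$ and $s_0>s_1$, and a short case analysis of $\min\{a;\max\{k;b\}\}$ gives $Q_\tau(y\mid X)=a<k$ when $a<k$, $Q_\tau(y\mid X)=k$ when $b\le k\le a$, and $Q_\tau(y\mid X)=b>k$ when $b>k$. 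Hence $\{Q_\tau(y\mid X)\neq k\}$ is the disjoint union of $\{a<k\}$ and $\{b>k\}$, and $\mmi\{Q_\tau(y\mid X)>k\}$ equals $1$ exactly on $\{b>k\}$ and $0$ on $\{a<k\}$, so this indicator selects the correct linear branch.

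Next I would combine the two branches into a single exact linear model on the quantile-uncensored subpopulation $A\equiv\{Q_\tau(y\mid X)\neq k\}$. On $A$,
\[
Q_\tau(y\mid X)=a+\mmi\{Q_\tau(y\mid X)>k\}\,(b-a)=\eps s_0+X\beta(\tau)+\mmi\{Q_\tau(y\mid X)>k\}\,\eps(s_1-s_0)=\tilde X\gamma,
\]
with $\gamma=\big(\eps s_0+\beta_0(\tau),\ \beta_1(\tau),\dots,\beta_d(\tau),\ \eps(s_1-s_0)\big)'$. Writing $\mmi_A$ for $\mmi\{Q_\tau(y\mid X)\neq k\}$, the identity $\mmi_A\,Q_\tau(y\mid X)=\mmi_A\,\tilde X\gamma$ holds pointwise (both sides vanish off $A$); multiplying by $\tilde X'$ and taking expectations gives $\mme[\mmi_A\,\tilde X'Q_\tau(y\mid X)]=\mme[\mmi_A\,\tilde X'\tilde X]\,\gamma$. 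The full-rank hypothesis makes $\mme[\mmi_A\,\tilde X'\tilde X]$ invertible, so $\gamma$ is a known functional of the observable distribution; its last coordinate equals $\eps(s_1-s_0)$, and dividing by the identified nonzero number $s_1-s_0$ recovers $\eps$.

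The step needing the most care is checking that the stated full-rank condition already encodes ``enough variation in $X$ among uncensored observations,'' rather than requiring a separate argument: if $\mmi\{Q_\tau(y\mid X)>k\}$ were a.s.\ constant on $A$, the last column of $\tilde X$ would either vanish on $A$ or be collinear with the intercept column on $A$, so $\mme[\mmi_A\,\tilde X'\tilde X]$ would be singular — exactly what the hypothesis rules out, and in particular the hypothesis forces $A$ to contain units on both sides of the kink. Beyond this, the only ingredients are the quantile-equivariance fact under the monotone continuous map $n^*\mapsto y$ already used to derive Equation \ref{eq:quant_y_min_max}, together with elementary linear algebra.
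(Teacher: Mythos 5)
Your proof is correct and follows essentially the same route as the paper's: multiplying Equation \ref{eq:quant_y_min_max} by $\mmi\{Q_{\tau}(y\mid X)\neq k\}$ to obtain the exact linear relation $D\,Q_{\tau}(y\mid X)=D\,\tilde X\tilde\beta(\tau)$, pre-multiplying by $\tilde X'$, inverting the full-rank moment matrix, and recovering $\eps$ from the last coordinate divided by $s_1-s_0$. Your explicit case analysis of $\min\{a;\max\{k;b\}\}$ and the remark on why the rank condition already encodes variation on both sides of the kink are just slightly more detailed versions of what the paper leaves implicit.
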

 
The quantile method does not contradict the impossibility of point identification discussed in Section \ref{sec:identification:imposs}, that is, we \emph{do} need  restrictions on the distribution of $n^*$ conditional on $X$ to point identify the elasticity.
These restrictions are Equation \ref{aspt:clad} and the rank condition of Lemma \ref{lemma:clad}. 
To see that, note that in the absence of covariates, the rank condition is never satisfied.
When we have covariates, we are not free to specify $Q_{\tau} \left(n^* \mid X \right)$ as flexible as desired. 
For a fixed distribution of $X$, the rank condition eventually fails as we increase the flexibility of $Q_{\tau} \left(n^*\mid X \right)$. 

We illustrate this fact with a simple example. 
Suppose the researcher has two dummy variables, $W_1$ and $W_2$, and wants to be fully flexible. 
An unrestricted $Q_{\tau} \left( n^* \mid W_1,W_2 \right)$ contains four parameters, because the conditional quantile function takes at most four different values.
That is, $Q_{\tau} \left( n^* \mid W_1,W_2 \right) = \beta_0 + \beta_1 W_1 + \beta_2 W_2 + \beta_3 W_1 W_2$.
In terms of Lemma \ref{lemma:clad}, 
$X=[1,W_1,W_2,W_1 W_2]$ is $1 \times 4$,
$\ti{X}$ is $1 \times 5$,
which implies $d=3$.
In the best case scenario for identification, four values of $Q_{\tau} \left( n^* \mid W_1,W_2 \right)$ translate to four values of $Q_{\tau} \left( y \mid W_1,W_2 \right)$ that are all different from $k$, so that $\mmi\left\{ Q_{\tau} \left(y \mid X \right) \neq k \right\}=1$.
The matrix 
$\mme
\left[
\mmi\left\{ Q_{\tau} \left(y \mid X \right) \neq k \right\}  
\tilde{X}' \tilde{X}
\right] $
$=\mme
\left[
\tilde{X}' \tilde{X}
\right] $
is $5 \times 5$
but has rank equal to $4$ at most.
Thus, $Q_{\tau} \left(n^* \mid X \right)$ must be restricted to fewer parameters for point identification to be possible.

As seen in the example with dummies, the amount of variation in covariates trades off with the degree of flexibility in specifying 
the parametric functional form for $Q_{\tau} \left( n^* \mid X \right)$.
The more variation researchers have (e.g., continuous vs. discrete $X$), the more flexible the parametric functional form may be, and the less damaging misspecification errors will be.
Unfortunately, our data in Section \ref{sec:application} only have binary covariates, which severely limits our ability to obtain sensible estimates using the quantile identification method.
An interesting extension to Lemma \ref{lemma:clad} relates to the work by \cite{hong2003} and \cite{khan2009}, which connects censored quantile regressions to moment inequalities and partial identification. 
Although increasing the flexibility in the specification of $Q_{\tau} \left( n^* \mid X \right)$ may violate the rank condition of Lemma \ref{lemma:clad}, researchers may still use those methods to partially identify the elasticity.
For inference, researchers may use the simulation-based confidence regions of \cite{chen2018}, which are valid regardless of point or partial identification.

Theoretical work  on estimation and inference of parameters in censored quantile regression (CQR) models dates back to the 1980s (\cite{powell1984, powell1986}).
Recent advances include the computationally attractive three-step estimator by \cite{chernozhukov2002}, and 
CQR with endogeneity by  \cite{hong2003}, \cite{khan2009}, and \cite{chernozhukov2015}.
In the simpler case of $Q_{\tau} \left(y \mid X  \right) = X \beta(\tau)$,
\cite{koenker1978} show that a consistent estimator for $\beta(\tau)$ is obtained by the solution to the problem
\begin{equation}
\min_{ b \in \mathbb{R}^{d+1}}
\sum_{i=1}^n \left[\rho_{\tau} \left(y_{i} - X_i b\right) \right],
\end{equation}
where $(y_i,X_i)$ $i=1,\ldots,n$ is an iid sample and $\rho_{\tau} \left(u \right)= \left(\tau - 1 \left(u \le 0 \right)\right)u$ is the so-called ``check function.''
In our case, the parametric conditional quantile function $Q_{\tau} \left(y \mid X  \right)$ is given in Equation  \ref{eq:quant_y_min_max}.
The slope and intercept coefficients are estimated by
\begin{equation}
(\hat{b}(\tau),\hat{\delta}(\tau)) 
= \arg \min_{ b \in \mmr^{d}, \delta \in \mmr}
\sum_{i=1}^n  \left[\rho_{\tau} \left(y_{i} - \min\{X_i' b; ~ \max \{k  ;~ X_i' b + \delta \}  \} \right) \right],
\label{eq:opt_quan_min_max}
\end{equation}
where $\hat{b}(\tau)$ is consistent for $\beta(\tau) + [\eps s_0, ~ 0, ~  \ldots, ~ 0]'$,
and $\hat{\delta}(\tau)$ is consistent for $\eps(s_1 - s_0)$. 
Therefore the elasticity is consistently estimated by 
$\hat\eps = \hat\delta/(s_1-s_0)$ and is asymptotically normal. 

The optimization problem in Equation  \ref{eq:opt_quan_min_max} is computationally difficult.
For the left (or right) censored case, \cite{chernozhukov2002} proposed a fast and practical estimator that consists of three steps.
Our case of middle censoring requires a straightforward modification of their method. 
We delineate practical steps to obtain $\hat\eps$ and its standard error using CQR in Section \ref{sec:supp:clad:implement} of the supplemental appendix.

\section{Application to EITC}
\label{sec:application}

\indent 

We demonstrate and compare our new methods using bunching behavior created by kinks in the earned income tax credit (EITC). 
Each method differs in the assumptions they make about the unobserved distribution to achieve identification.
There is no way to determine which assumption is correct because the unobserved distribution is not fully identified.
Nevertheless, estimates that are stable across many methods indicate that different identifying assumptions do not play a major role in the construction of those estimates.   
On the contrary, estimates that are sensitive to different assumptions are dependent on the validity of those assumptions.
\cite{PatelSeegertSmith2016} provide an empirical illustration of this sensitivity. 

First, we use our nonparametric bounds to provide initial information about how sensitive the elasticity estimate is to different shapes of the underlying ability distribution. 
When the bounds are tight, then the shape of the underlying distribution is not critical.  
But when the bounds are wide, then the shape is critical. 
In this case, reducing the range of possible elasticities requires either stronger restrictions on the shape of the ability distribution or additional data on determinants of ability.

Second, we combine observed determinants of ability with our semi-parametric approach to point identify the elasticity. 
We compare the resulting best-fit Tobit income distribution to the observed distribution for alternative samples that range from using all observations to using only data local to the kink. 
When the best-fit Tobit distribution coincides with the observed distribution, the estimated elasticity is consistent (Lemma \ref{lemma:tobit_robust}).
Furthermore, if the Tobit elasticity is within narrow nonparametric bounds, then the identifying assumptions are inconsequential; if within wide bounds, then the identifying assumptions are not contradictory and the covariates provide point identification. 
In contrast, if the Tobit elasticity is outside of the bounds, then the elasticity estimate is not robust to the two alternative identifying assumptions. 
Finally, when the best-fit Tobit distribution does not coincide with the observed distribution, the determinants of ability used for estimation are uninformative or the semi-parametric assumption is inappropriate. 

We recommend that researchers examine the sensitivity of elasticity estimates across all available methods as a matter of routine. 
We illustrate these steps in the context of the EITC in the rest of this section. 

\subsection{Data}
\label{sec:application:data}
\indent

We use data from the Individual Public Use Tax Files, constructed by the IRS. The annual cross-section for each year 1995 to 2004 includes sampling weights which allow interpretation of any estimates as being based on the population of U.S. income tax returns. This data was initially used by \cite{Saez2010} to demonstrate how to use bunching to estimate an elasticity.

The income distribution for individuals with one child demonstrates clear bunching around the \$8,580 kink (year 2008 dollars) in the EITC schedule (Figure \ref{figure:trunc_est_all}). 
Because the marginal tax rate increases from $-34$ percent to $0$ percent at \$8,580, individuals have strong incentives to report more income up to the kink point.

Observed bunching in the distribution of income suggests that people do respond to changes in tax rates. To effectively set tax rates, however, it is imperative to quantify this response precisely. 
Small variation  in elasticity estimates imply large differences in optimal tax rates \citep{saez2001}. 
For example, a true elasticity of 0.2 implies the optimal top marginal tax should be 69\%. 
If instead, the true elasticity is 0.3, the optimal top marginal tax should be 60\%.\footnote{
    This example comes from \cite{saez2001}. In particular, Equation 9 states $\bar{\tau} = (1 - g)/(1 - g + \varepsilon^{u}  +\varepsilon^{c}(a -1)),$ where $g$ is defined as the value the government has for the marginal consumption of high income earners (often set to 0), $a$ is the Pareto parameter (with baseline value of 2), and $\varepsilon^{c}$ and $\varepsilon^{u}$ are the compensated and uncompensated elasticities of taxable income. For the calculation in the text, we utilize $\varepsilon^{u}= \varepsilon^{c}$, a Pareto parameter of 2, and a $g$ value of 0.1.} 
As demonstrated above, identifying the elasticity requires information on the amount of bunching and the income distribution. The following sections show how different methods leverage different types of variation to identify the elasticity. 



The methods of this paper are designed for data without friction errors or sharp bunching.
For examples of sharp bunching data, see Figure 4 by \cite{glogowsky2018},
Figure 1 by \cite{goncalves2018}, and Figure 8 by \cite{goff2022}.
Nevertheless, the IRS data do have friction errors as the excess mass due to bunching is visibly dispersed in a small interval near the kink (for example, Figure 5 by \cite{weber2016}).
Therefore, to apply our procedures to the IRS data, we first need to filter reported income out of friction error.
A proper deconvolution theory must be developed to tackle this problem, but it is beyond the scope of this paper.%
    \footnote{There are several works in progress investigating frictions. See, for example, \cite{aronsson2018,alvero2020,cattaneo2018}, and \cite{McCallumNavarrete2022}.
    }
For now, we simply need a practical way of removing friction error before applying the different bunching estimators, so that they may be properly compared.

Following the intuition of  \cite{chetty2011},
we fit a seventh-order polynomial to the empirical CDF of reported income with friction errors $\tilde{y}$. 
As does \cite{Saez2010}, we exclude observations that lie within \$1,500 of the kink and allow an intercept change at the kink. The extrapolation of the fitted polynomial to the excluded region results in a CDF with a jump discontinuity at the kink. This is an estimate for the CDF of income without friction error, that is, $F_{y}(y)$. The size of the discontinuity equals the bunching mass.  We then rely on the fact that $y = F_{y} \left( F_{\tilde{y}}^{-1}(\tilde{y}) \right)$ and use the estimated CDFs to transform $\tilde{y}$ into $y$.

Our filtering procedure is different from the polynomial strategy discussed in Section \ref{sec:model:counterfactual} and Section \ref{sec:app:friction_error} of the supplement.
We simply aim at removing the friction error from the sample, while the the polynomial strategy 
aims to remove friction error and recover the counterfactual distribution of income, which requires much stronger restrictions according to the discussion in Section \ref{sec:identification:imposs}.
Our filtering procedure consistently estimates the CDF of $y$ under the following conditions: 
(i) frictions only affect bunching individuals additively;
(ii) friction error is independent of unobserved heterogeneity $n^*$ and has known support containing zero, i.e., $[-1500,+1500]$;
(iii) the CDF of $y$ is represented by a seventh order polynomial with an intercept change at the kink.
Section \ref{sec:supp:filter} of the supplement has the proof of this claim. 
A more general filtering method is deferred to future work.\footnote{Section \ref{sec:supp:saez_filter} of the supplement recomputes our estimates using the filtering procedure employed by \cite{Saez2010}.}

\subsection{Estimates Across Methods}
\label{sec:estimates_across_methods}
\indent 

Table \ref{tbl:tobit} reports estimates of the elasticity of taxable income using a classic bunching method, nonparametric bounds, and Tobit models with covariates. Each of these estimates relies on a different set of assumptions to identify the elasticity of taxable income, and together they provide insights into which assumptions are most defensible in the context of the EITC.  

Column 1 reports our estimates of the elasticity of taxable income using a trapezoidal approximation (Example \ref{example1}).\footnote{
    We estimate the PDF of the variables in logs rather than in levels, which simplifies the elasticity formula based on the trapezoidal approximation in Example \ref{example1}.
} 
This method assumes the unobserved PDF is linear in the bunching region, which prior literature believed to approximate non-linear distributions well. 
In practice, the appropriateness of this approximation depends on the true distribution and length of the bunching region, which are both unobserved. 
Linearity may be inappropriate if the distribution is sufficiently non-linear or the bunching region is wide.

Column 1 demonstrates substantial heterogeneity in estimates across different subsamples. In particular, the elasticity estimate is 0.32 for the all filers sample, 0.81 for self-employed individuals, 0.77 for self-employed married individuals, and 0.84 for self-employed not married individuals.

The guidelines for implementation of our nonparametric bounds in Section \ref{sec:solutions:bounds} utilizes a range of values for $M$ that includes the maximum slope magnitude of $f_y$.
We reiterate that $M$ is unidentified and that the slope of $f_y$ provides a starting point. 
The \texttt{bunching} Stata package consistently estimates the maximum slope of $f_y$ by taking the maximum slope in the histogram of $y$ across all consecutive bins.
We find that the slope is never bigger than $0.5$ across our subsamples. 
For a more conservative view, we report our nonparametric bounds using $M=0.5$ and $M=1$ in Table \ref{tbl:tobit}, Columns 2 and 3, and plot bounds for $M$ up to $2$ in Figure \ref{figure:bounds_est}.
The vertical lines in these figures designate the minimum and maximum slope, such that both the upper and lower bounds are finite numbers. 
The first line is the smallest slope that allows a continuous PDF to be consistent with both the bunching mass and observed income distribution.
At the minimum slope, both lower and upper bounds are equal to the estimate based on the trapezoidal approximation, reported in column 1.

As $M$ increases, the set of possible PDF shapes in the bunching region becomes richer.
The second line is the maximum slope before the set of possible distributions allows for a PDF that touches zero in the bunching interval.
In that case, the bunching mass remains constant for arbitrarily large $\eps$, and the upper bound is infinity (Theorem \ref{theo_partial}).

A large range between lower and upper bounds in Figure \ref{figure:bounds_est} suggests the estimates change substantially with the shape of the unobserved distribution. For example, the bounds are uninformative for the self-employed married sample, even for small values of $M$. This indicates that the data will not provide precise information on the elasticity unless the researcher imposes further functional form restrictions on the distribution of $n^*$. 
In contrast, we learn the most in the case of all filers and self-employed not married, where the bounds are narrower than in other subsamples for $M=0.5$.  
The lower bound is always defined for larger choices of $M$, which gives partial information on the elasticity without the need of being precise with the choice of $M$. 
For the exceedingly high value of $M=2$, the lower bound is about 0.25 for all filers and at least 0.4 for the other three subsamples.

Columns 4\textendash7 report our estimates of the Tobit model using the full sample and truncated samples at 75\%, 50\%, and 25\% of the data.
Figures \ref{figure:trunc_est_all}\textendash\ref{figure:trunc_est_self_notmarried} complement these estimates by graphing the actual distribution and the implied distribution from the Tobit estimates at different levels of truncation in panels a through e.  

These estimates incorporate a list of indicator variable covariates for whether a tax return filer: filed as married, used a tax preparer, claimed a real estate interest deduction, claimed mortgage interest as a business expense, received unemployment compensation, contributed to charity, received social security benefits, took the educator expenses deduction, paid into an Individual Retirement Account (IRA) for the primary filer, claimed a student loan interest deduction, claimed exemption for children living away from home, received pension income, capital gains, farm profit or loss, a positive tax credit, positive wages, owed positive income tax before tax credits but zero after, took a self-employed health insurance deduction, paid into a Keogh retirement plan, received income from a business or farm, filed a 1040 form instead of simpler form, claimed secondary taxpayer exemption, and paid into an IRA for a secondary filer. 
We  include year dummies variables with 1995 as the excluded year. 

The fit of the Tobit model generally improves as we truncate the sample closer to the kink,  which implies that the semi-parametric assumption of mixed normals is more reasonable locally than globally.
The minimum truncation necessary for a reasonable fit varies by subsample. 
For example, for self-employed not married, the fit seems reasonable using 60\% or less of the data, but for all filers, the fit only becomes reasonable at around 20\%.
It is interesting to observe that the Tobit with covariates fit the distribution better in narrower cuts of the data than for all filers.

Panel f in Figures \ref{figure:trunc_est_all}\textendash\ref{figure:trunc_est_self_notmarried} graph the elasticity estimate as a function of the percentage of data used. 
The estimates tend to plateau as the distribution fits improve.
For example, in the self-employed not married sample depicted in Figure \ref{figure:trunc_est_self_notmarried}, the estimates are all around 0.75,  using less than 60\% of the data. 
It is worth pointing out that truncated samples with less than 20\% of the data lead to numerical issues, such as perfect collinearity of covariates and lack of convergence in the likelihood maximization. 
This leads to imprecise estimates, as indicated by an upward bend in left extremity of the curves depicted in panel f of Figures \ref{figure:trunc_est_all}\textendash\ref{figure:trunc_est_self_notmarried}.\footnote{
    We note that the filtering procedure that removes optimizing errors dictates the shape of the distribution close to the kink.
    Researchers should keep in mind, as part of their robustness checks, that different filtering methods may yield different shapes of the distribution close to the kink, which may in turn lead to different elasticity estimates.
 }

\subsection{Comparisons Across Methods}

\indent

Comparisons across methods provide insights into the reasonableness of different assumptions used to estimate the elasticity. The trapezoidal approximation is always within the bounds, because its estimate is based on a linear interpolation of the PDF in the bunching region.
The slope of such line equals the minimum slope for which the bounds are defined. 
In contrast, the Tobit model using 100\% of the data is often below the lower bounds, but the Tobit distribution fails to fit the observed distribution of income globally.
Truncated Tobit estimates generally enter the bounds as the truncation window decreases and, as a result, the fit of the Tobit distribution improves.
For the all filers sample, an M larger than 0.5 is needed for the bounds to cover the Tobit estimate truncated at 25\%.  This reiterates our previous discussion that the Tobit fit for all filers is poor until we use 20\% or less of the data.

Consider self-employed married and self-employed not married filers. 
Figures \ref{figure:trunc_est_self_married} and \ref{figure:trunc_est_self_notmarried} demonstrate that the bunching mass is approximately 6 times larger for self-employed not married individuals than for self-employed married individuals. 
This difference in bunching mass might lead a researcher to conjecture that the elasticity is much larger for self-employed not married individuals.
Whether this conjecture is true depends, however, on differences in the underlying distribution of heterogeneity. 
Estimates based on the trapezoidal approximation in column 1 of Table \ref{tbl:tobit} indicate a higher elasticity for self-employed not married individuals and so do the lower bounds of our partial identification method and Tobit estimates.
They disagree in the magnitudes.
The trapezoidal estimate in column 1 says the elasticity for self-employed not married individuals is 9\% higher compared to self-employed married individuals;
conservative lower bounds in column 3 say it is 53\% higher while the truncated Tobit estimates in column 7 point to a 46\% difference.
The disagreement across methods for these subsamples indicates that assumptions on the distribution of heterogeneity are critical to obtain informative elasticity estimates.

\section{Conclusion}
\label{sec:conclusion}

\indent 

We show how to use bunching from piecewise-linear budget constraints 
to identify elasticities, under conditions weaker than those used in the literature on kinks and notches.
The key theoretical point is that bunching is determined by the elasticity parameter and the shape of an unobserved distribution.
Additional assumptions or data are needed to identify the elasticity. 

We propose a suite of estimation techniques that allows researchers to tailor their estimation to 
different assumptions 
and data variation.  
These include nonparametric bounds and semi-parametric censored models with covariates.
The nonparametric bounds are the least restrictive method and also nest estimators from the previous literature.

These techniques have wide applicability, because piecewise-linear budget constraints are common across fields, from public finance and labor, to industrial organization and accounting.
Our estimation strategies also provide a foundation for future advances in techniques that will account for different empirical hurdles. 
Of particular interest are extensions that consider optimization and friction errors, extensive margin responses, and panel data methods.

\ifid
{
\section{Acknowledgements}
\label{sec:acknow}
The views expressed in this paper are those of the authors and do not necessarily reflect the views of the Federal Reserve Board or the Federal Reserve System.
We would like to thank 
Matias Cattaneo,
Bill Evans,
Roger Gordon,
Jim Hines,
Dan Hungerman, 
Michael Jansson,
Henrik Kleven,
Brian Knight,
Erzo Luttmer,
Byron Lutz,
Dayanand Manoli,
Magne Mogstad,
Marcelo Moreira,
Whitney Newey,
Andreas Peichl,
Emmanual Saez,
Dan Silverman,
and
Joel Slemrod
for valuable comments and discussions.
The paper also benefited from feedback received from seminar participants at 
the UCSD Workshop on Bunching Estimators,
Econometric Society,
International Association for Applied Econometrics,
International Institute of Public Finance,
National Tax Association,
Dartmouth College,
Federal Reserve Board,
and 
University of Michigan.
Jessica C. Liu, 
Michael A. Navarrete, 
and Alexis M. Payne 
provided excellent research assistance. 
All remaining errors are our own.
Bertanha acknowledges financial support received while visiting the Kenneth C. Griffin Department of Economics, University of Chicago.}
\fi

\singlespacing

\bibliographystyle{chicago.bst}
\bibliography{02references.bib}


{\color{white}
\section{Figures and Tables}
}

\clearpage
\singlespacing


\begin{figure}[tbp]
\caption{Identification of the Elasticity in the Case of a Kink}
\label{fig:imposs}
\begin{subfigure}[b]{0.49\textwidth}
\caption{\centering Distribution of Observed Income}
\includegraphics[width=1\linewidth]{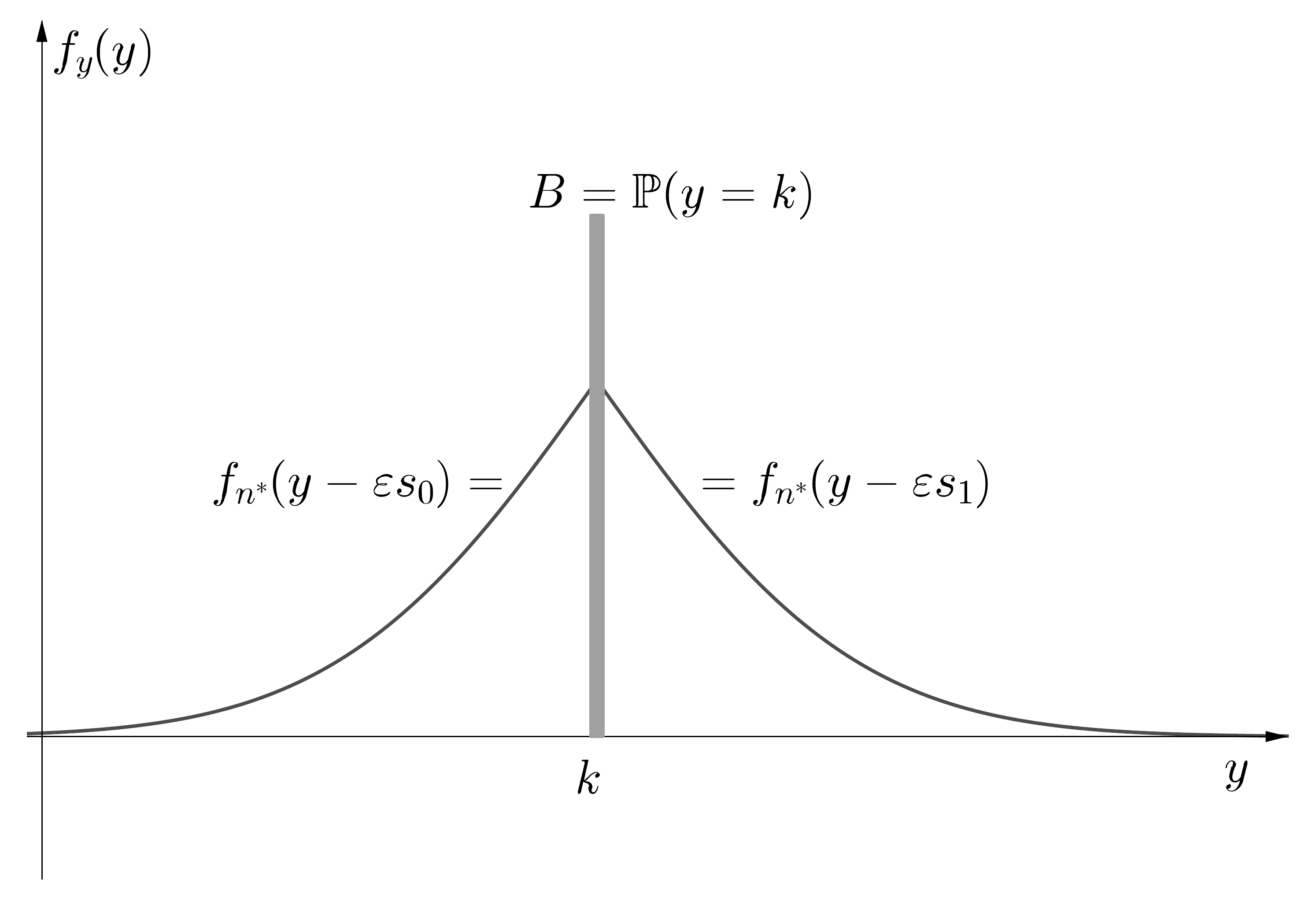}
\label{fig:imposs_panela}
\end{subfigure}
\hfill
\begin{subfigure}[b]{0.49\textwidth}
\caption{\centering Counterfactual Distribution of Income in the Absence of Kink}
\includegraphics[width=1\linewidth]{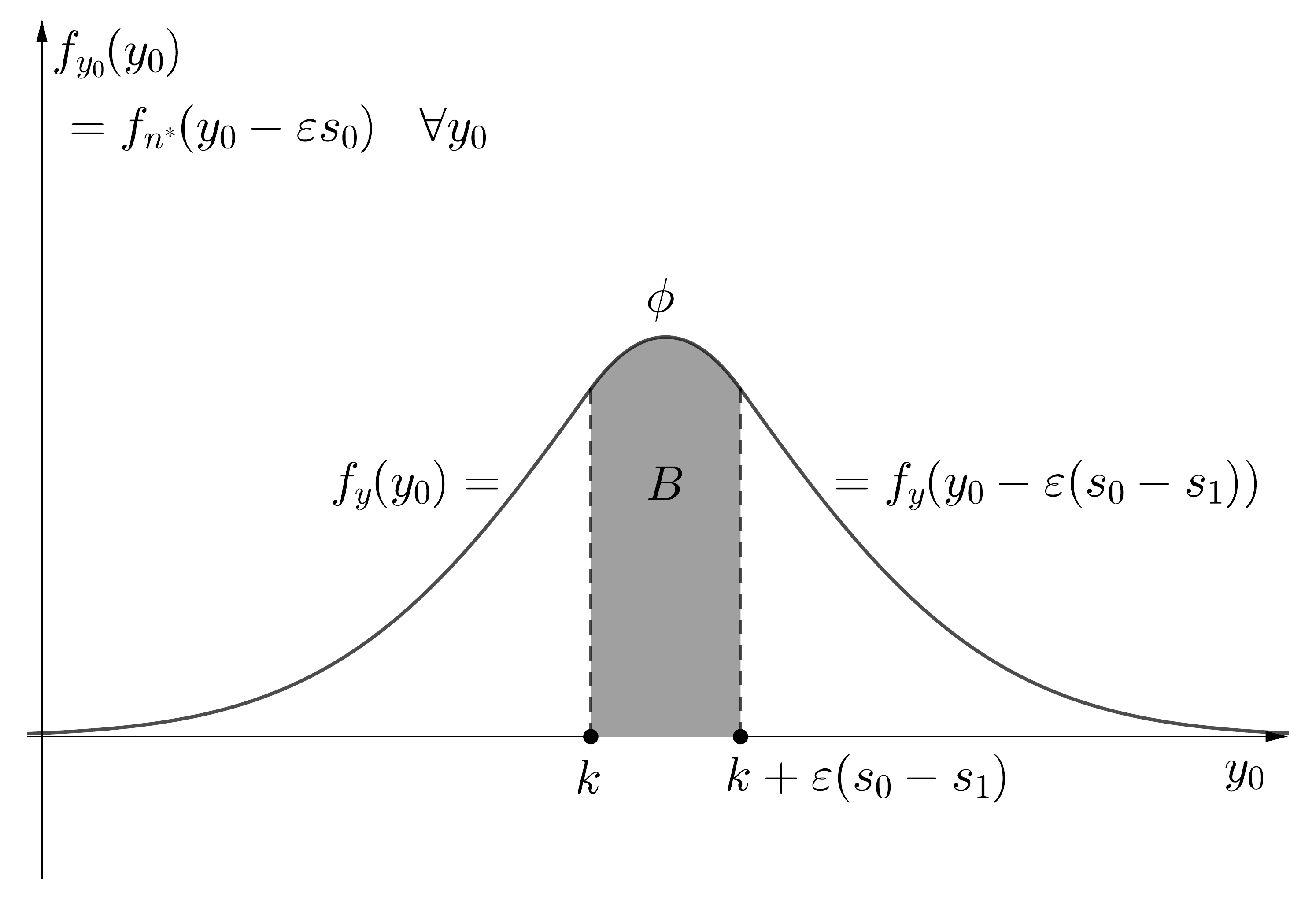}
\label{fig:imposs_panelb}
\end{subfigure}
\medskip
\begin{subfigure}[b]{0.49\textwidth}
\caption{\centering Distribution of Ability Consistent with Observed Income and Higher Elasticity}
\includegraphics[width=1\linewidth]{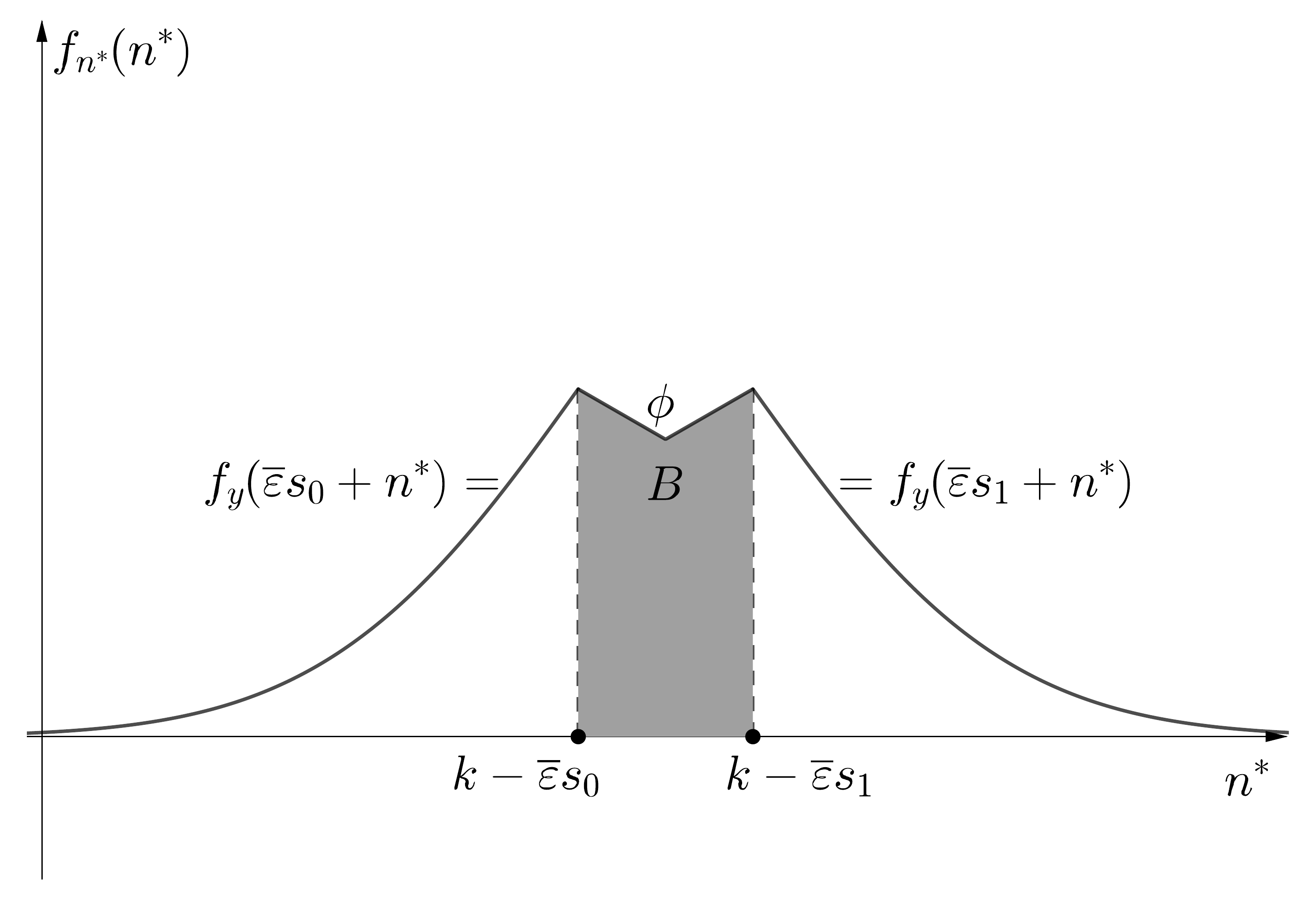}
\label{fig:imposs_panelc}
\end{subfigure}
\hfill
\begin{subfigure}[b]{0.49\textwidth}
\caption{\centering Distribution of Ability Consistent with Observed Income and Lower Elasticity}
\includegraphics[width=1\linewidth]{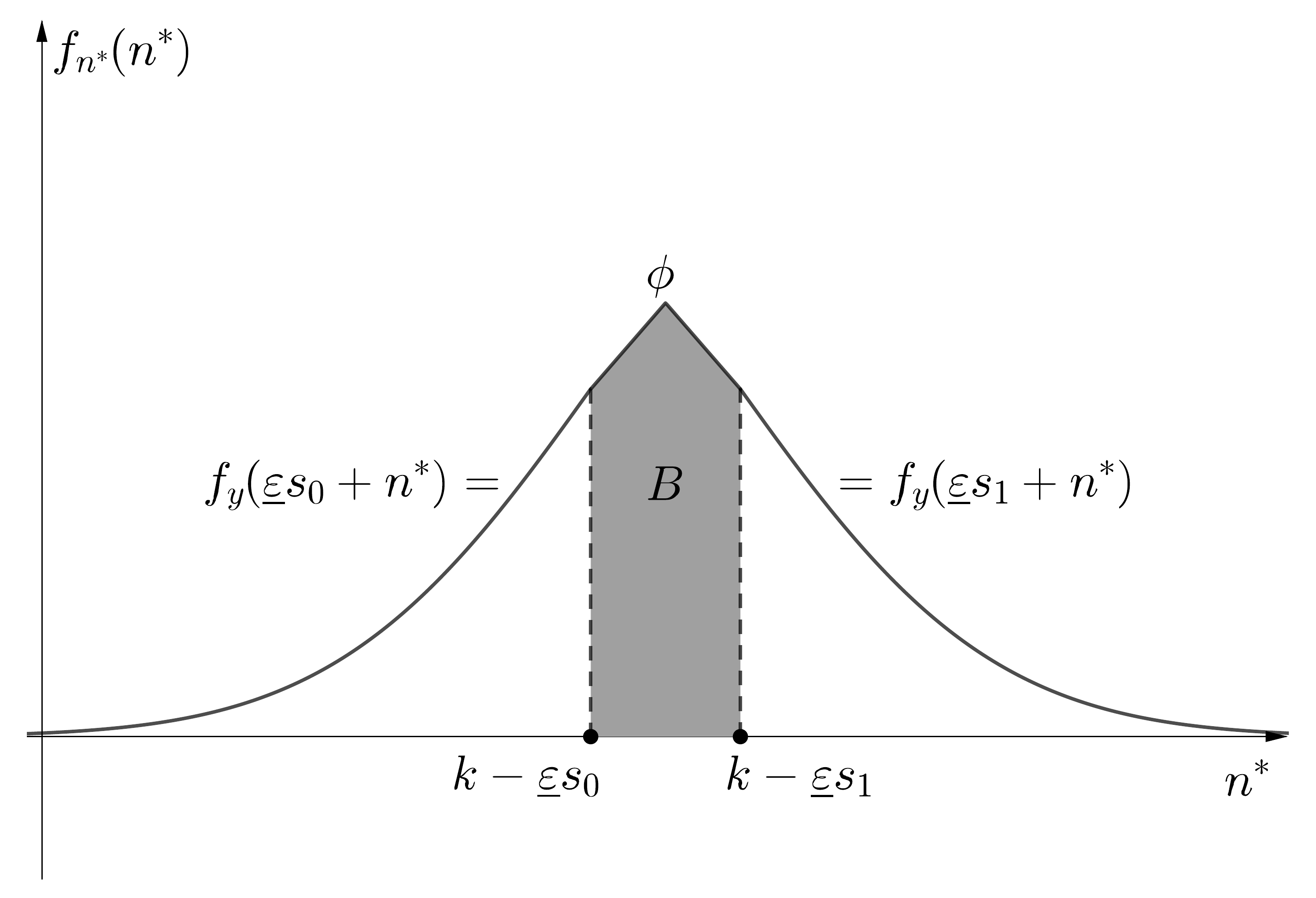}
\label{fig:imposs_paneld}
\end{subfigure}
\flushleft
{\footnotesize
\textit{Notes:}
Panel~\ref{fig:imposs_panela} plots an example of PDF of $y$. The continuous portions are equal to the PDF of ability $n^*$ shifted by $\eps s_0$ for $y < k$, and by $\eps s_1$ for $y > k$, respectively. 
The shaded area represents a discrete mass point with probability $B=\mmp\left(y=k\right)$, that is, the probability of bunching. 
Panel~\ref{fig:imposs_panelb} shows the counterfactual PDF of $y_0$, that is, the distribution of income if tax rates did not change at the kink. 
The PDF of $y_0$ is continuous, and equals the PDF of $n^*$ shifted by $\eps s_0$.
It is also equal to the PDF of $y$ before the kink, and to the shifted PDF of $y$ after the kink. However, the distribution of $y$ does not reveal the shape of the PDF of $y_0$ in the bunching region (i.e., $\phi$).
The shaded area under $\phi$ integrates to the probability of bunching $B$.
The last two panels (Panels~\ref{fig:imposs_panelc}-\ref{fig:imposs_paneld}) display two different distributions of $n^*$ that generate the same distribution of income $y$ 
(Panel~\ref{fig:imposs_panela})
with two different elasticities, $\underline{\eps}<\bar{\eps}$, according to Equation \ref{eq:logsol-onekink}.
The PDF of $n^*$ outside of the bunching region is equal to the PDF of $y$ shifted by $\eps s_0$, if $n^*<k-\eps s_0$; or shifted by $\eps s_1$, if $n^*>k-\eps s_1$.
Aside from $B$, the distribution of income does not contain any information about the shape of $\phi$ in the PDF of $n^*$.
If we assume $f_{n^*}$ is Lipschitz continuous with known constant, it is possible to derive upper and lower bounds for $\phi$, which correspond, respectively, to lower and upper bounds on the elasticity (Theorem \ref{theo_partial}).
\vspace{-29pt}} 
\end{figure}

\newgeometry{headheight=1mm,tmargin=10mm,bmargin=15mm, lmargin=10mm,rmargin=10mm}

\begin{landscape}
\begin{figure}[tbp]
\caption{Robustness of Tobit Estimates to Lack of Normality\textemdash  Experiment 1}
\label{fig:simulation1}
\centering

\begin{subfigure}[b]{0.30\linewidth}
\centering
\caption{Probability Density Function of $n^*$}
\includegraphics[width=0.7\linewidth]{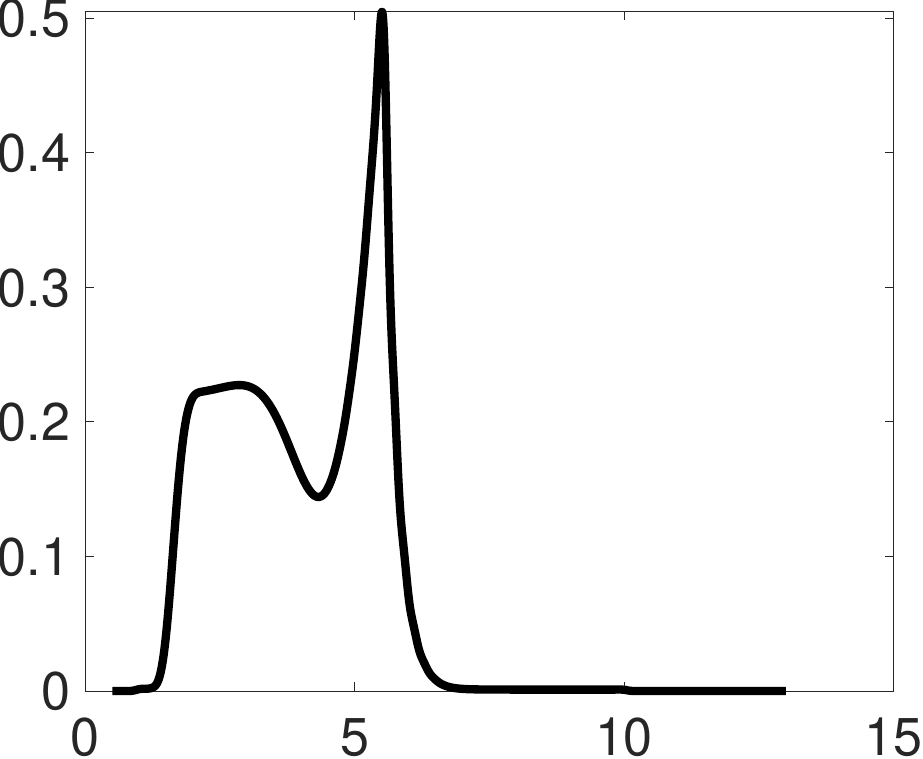}
\label{fig:simulation1:a}
\end{subfigure}
\begin{subfigure}[b]{0.30\linewidth}
\centering
\caption{Probability Mass Function of $X$}
\includegraphics[width=0.7\linewidth]{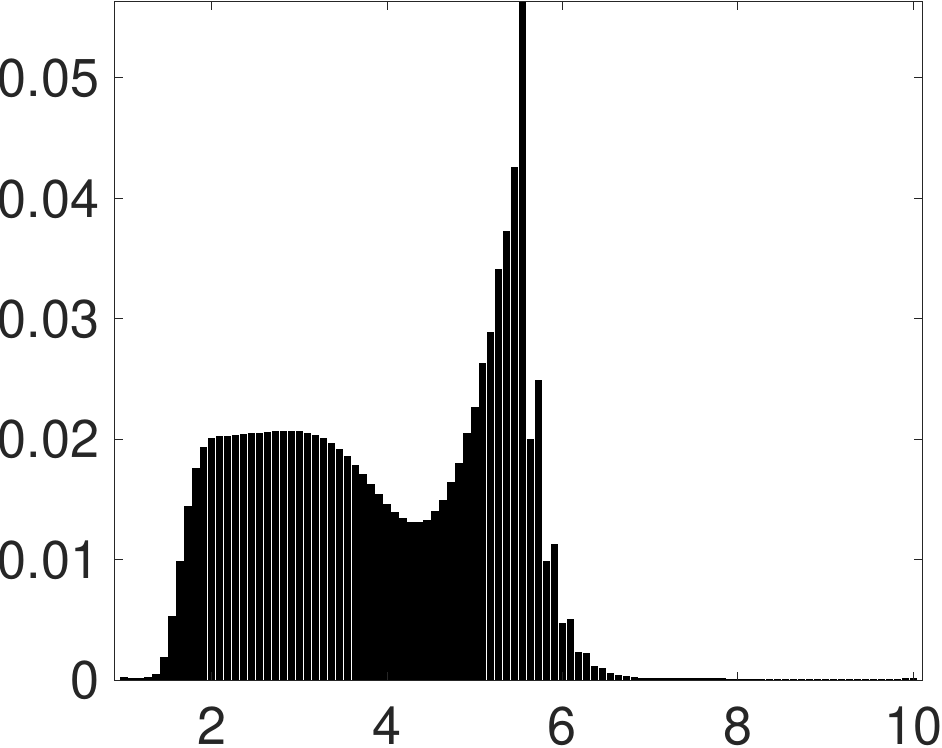}
\label{fig:simulation1:b}
\end{subfigure}

\medskip


\begin{subfigure}[b]{0.22\linewidth}
\centering
\caption{100\% of the data used}
\includegraphics[width=\linewidth]{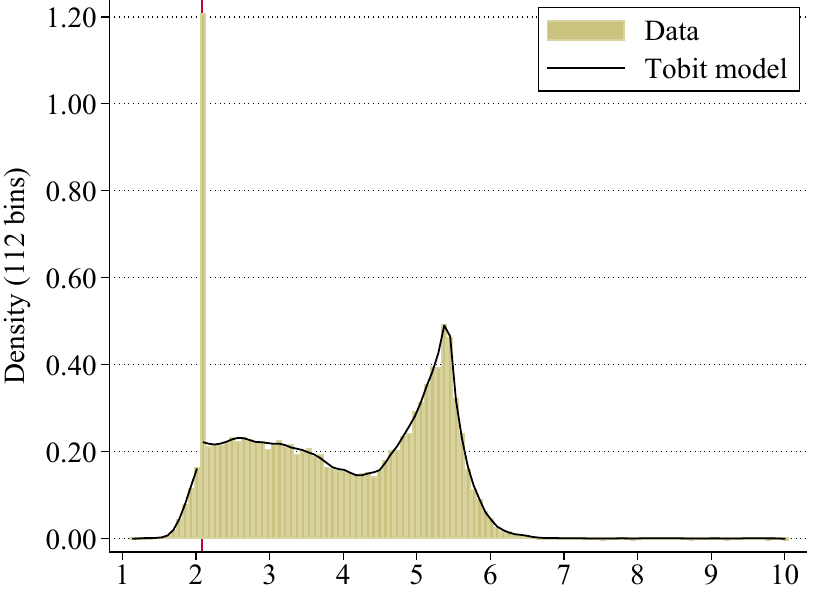}
\label{fig:simulation1:c}
\end{subfigure}
\begin{subfigure}[b]{0.22\linewidth}
\centering
\caption{60\% of the data used}
\includegraphics[width=\linewidth]{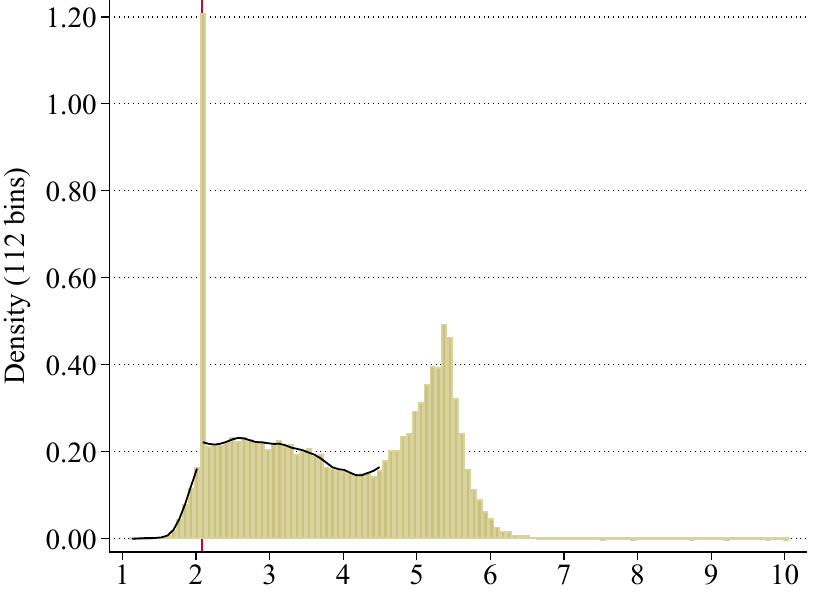}
\label{fig:simulation1:d}
\end{subfigure}
\begin{subfigure}[b]{0.22\linewidth}
\centering
\caption{20\% of the data used}
\includegraphics[width=\linewidth]{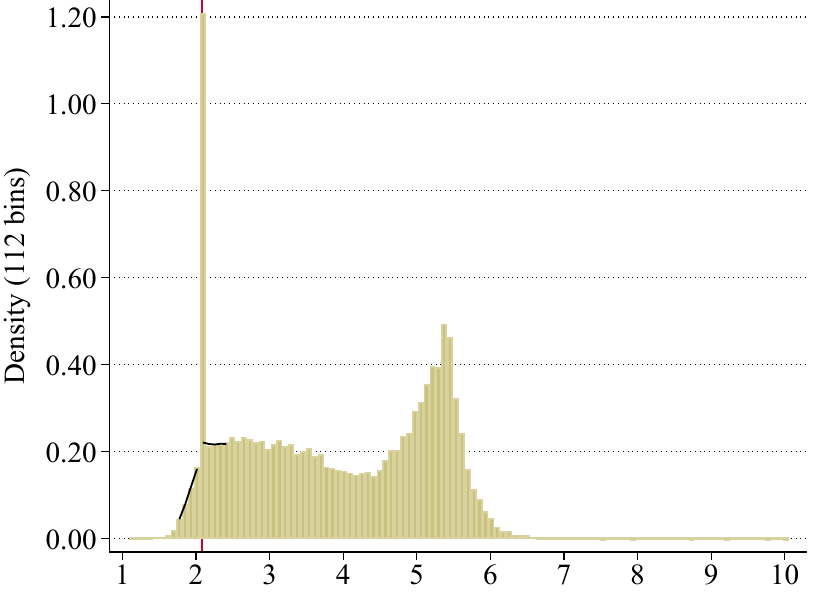}
\label{fig:simulation1:e}
\end{subfigure}
\begin{subfigure}[b]{0.22\linewidth}
\centering
\caption{Elasticity by percent used}
\includegraphics[width=\linewidth]{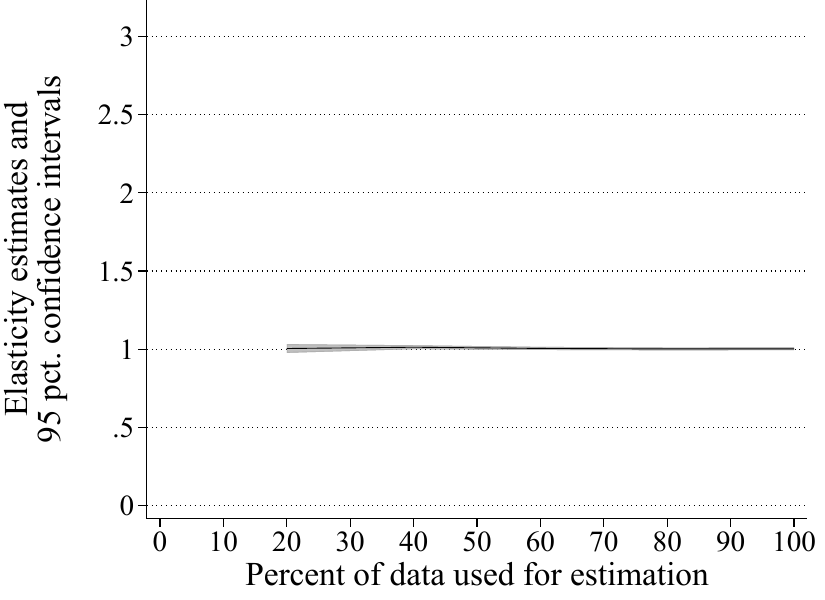}
\label{fig:simulation1:f}
\end{subfigure}

\medskip

\begin{subfigure}[b]{0.22\linewidth}
\centering
\caption{100\% of the data used}
\includegraphics[width=\linewidth]{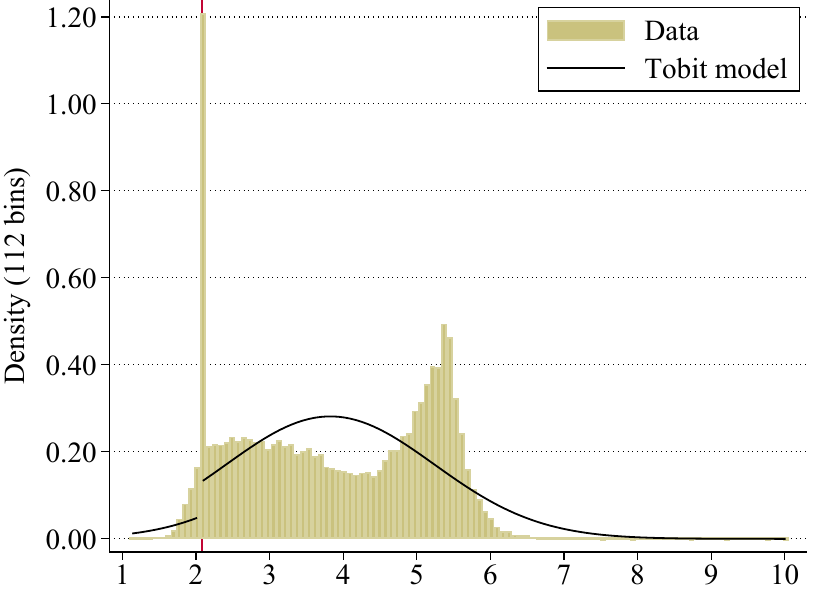}
\label{fig:simulation1:g}
\end{subfigure}
\begin{subfigure}[b]{0.22\linewidth}
\centering
\caption{60\% of the data used}
\includegraphics[width=\linewidth]{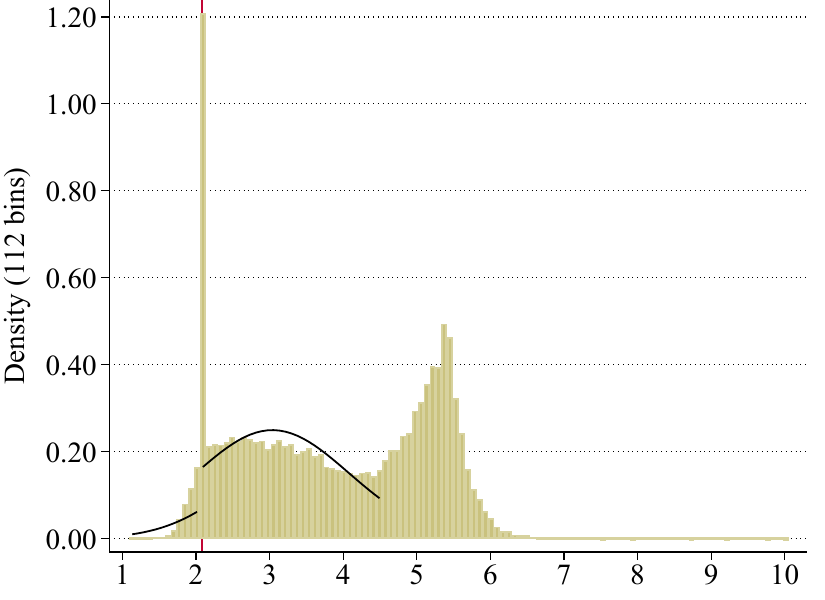}
\label{fig:simulation1:h}
\end{subfigure}
\begin{subfigure}[b]{0.22\linewidth}
\centering
\caption{20\% of the data used}
\includegraphics[width=\linewidth]{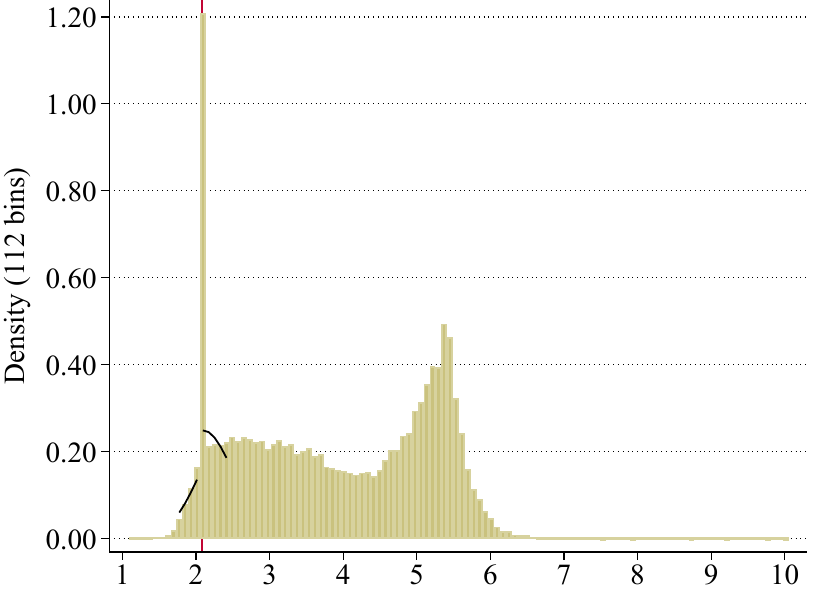}
\label{fig:simulation1:i}
\end{subfigure}
\begin{subfigure}[b]{0.22\linewidth}
\centering
\caption{Elasticity by percent used}
\includegraphics[width=\linewidth]{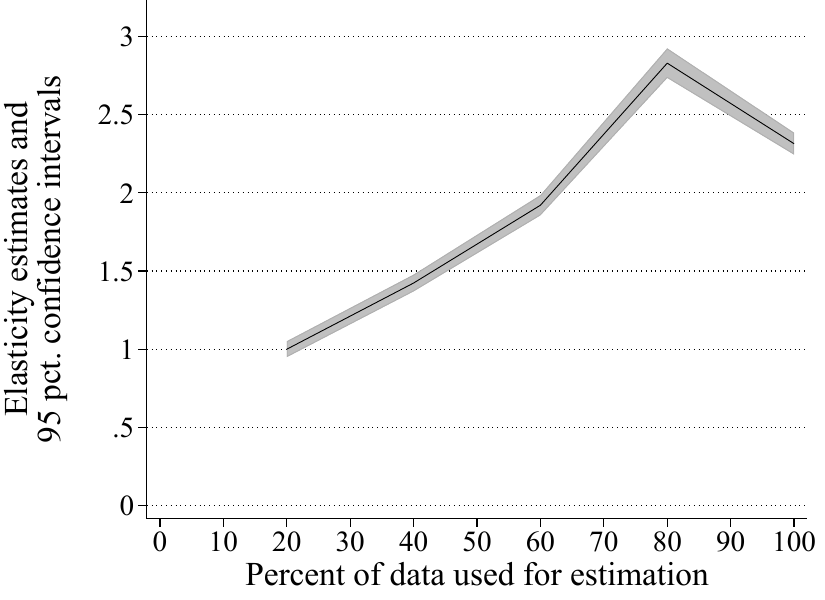}
\label{fig:simulation1:j}
\end{subfigure}
\caption*{
\footnotesize 
\textit{Notes:} This simulation experiment illustrates that the mid-censored Tobit model is able to fit non-normal distributions of $n^*$ and retrieve the right elasticity using covariates and truncation. 
We generate 50,000 observations of $y$ and a scalar $X$ following Experiment 1 detailed in Section \ref{sec:solutions:cov:tobit}. 
The variable $n^*$ is distributed as a mixture of two Skewed Generalized Error Distributions (Panel a), $n^*|X$ is Gaussian, 
the kink point is at $k=2.0794$, and $\eps=1$.
Panels c\textendash f correspond to estimates from a Tobit model that is correctly specified with covariate $X$.
Panels c\textendash e show the histogram of simulated data for $y$ and the best-fit Tobit distributions
for three truncation sizes.
Panel f displays the elasticity estimate as a function of the percentage of data used in each truncated estimation, along with 95\% confidence intervals.
Similarly, Panels g\textendash j correspond to estimates from a Tobit model that is incorrectly specified without covariate $X$.
}
\end{figure}


\end{landscape}

\restoregeometry

\newgeometry{headheight=1mm,tmargin=10mm,bmargin=15mm, lmargin=10mm,rmargin=10mm}

\begin{landscape}

\begin{figure}[tbp]
\caption{Robustness of Tobit Estimates to Lack of Normality\textemdash  Experiment 2}
\label{fig:simulation2}
\centering

\begin{minipage}{0.2\linewidth}
\vspace{6mm}
\centering

\begin{subfigure}[b]{\linewidth}
\centering
\caption{PDF of $n^*$}
\includegraphics[width=\linewidth]{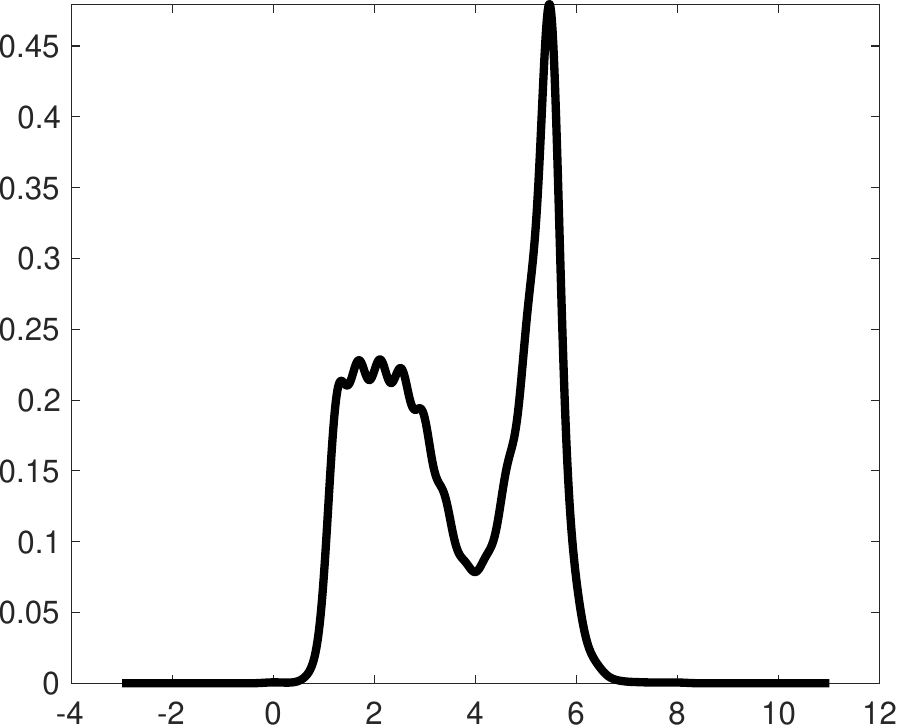}
\label{fig:simulation2:a}
\end{subfigure}

\begin{subfigure}[b]{\linewidth}
\centering
\caption{PMF of $X$}
\includegraphics[width=\linewidth]{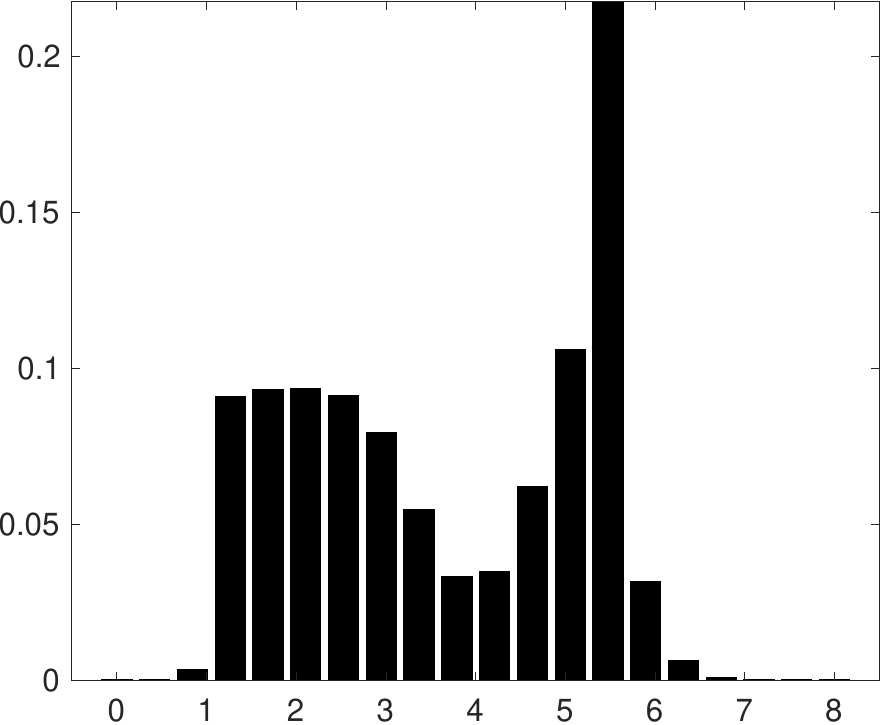}
\label{fig:simulation2:b}
\end{subfigure}

\begin{subfigure}[b]{\linewidth}
\centering
\caption{100\% of the data used}
\includegraphics[width=\linewidth]{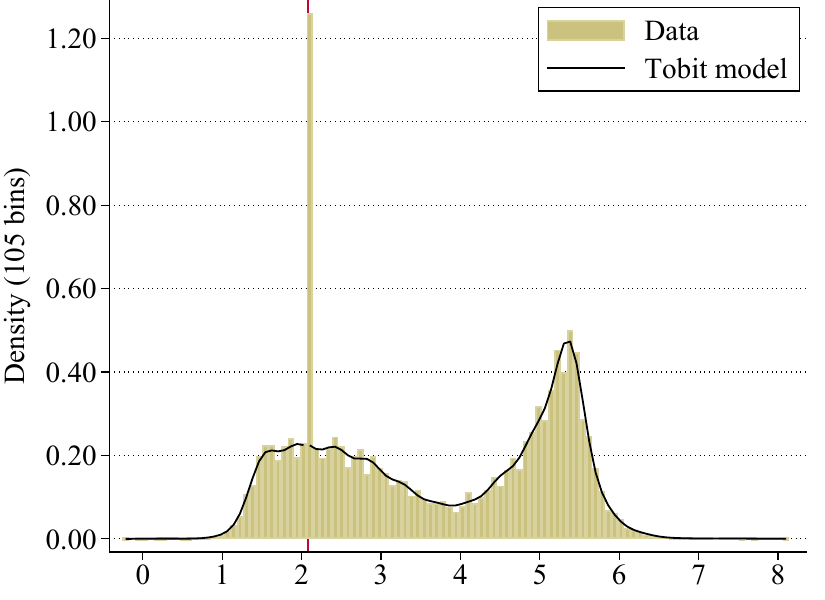}
\label{fig:simulation2:c}
\end{subfigure}

\end{minipage}
\hspace{5mm}
\begin{minipage}{0.67\linewidth}
\centering

\begin{subfigure}[c]{\linewidth}
\centering
\caption{Conditional Probability Density Functions $n^* | X=x$}
\includegraphics[width=\linewidth]{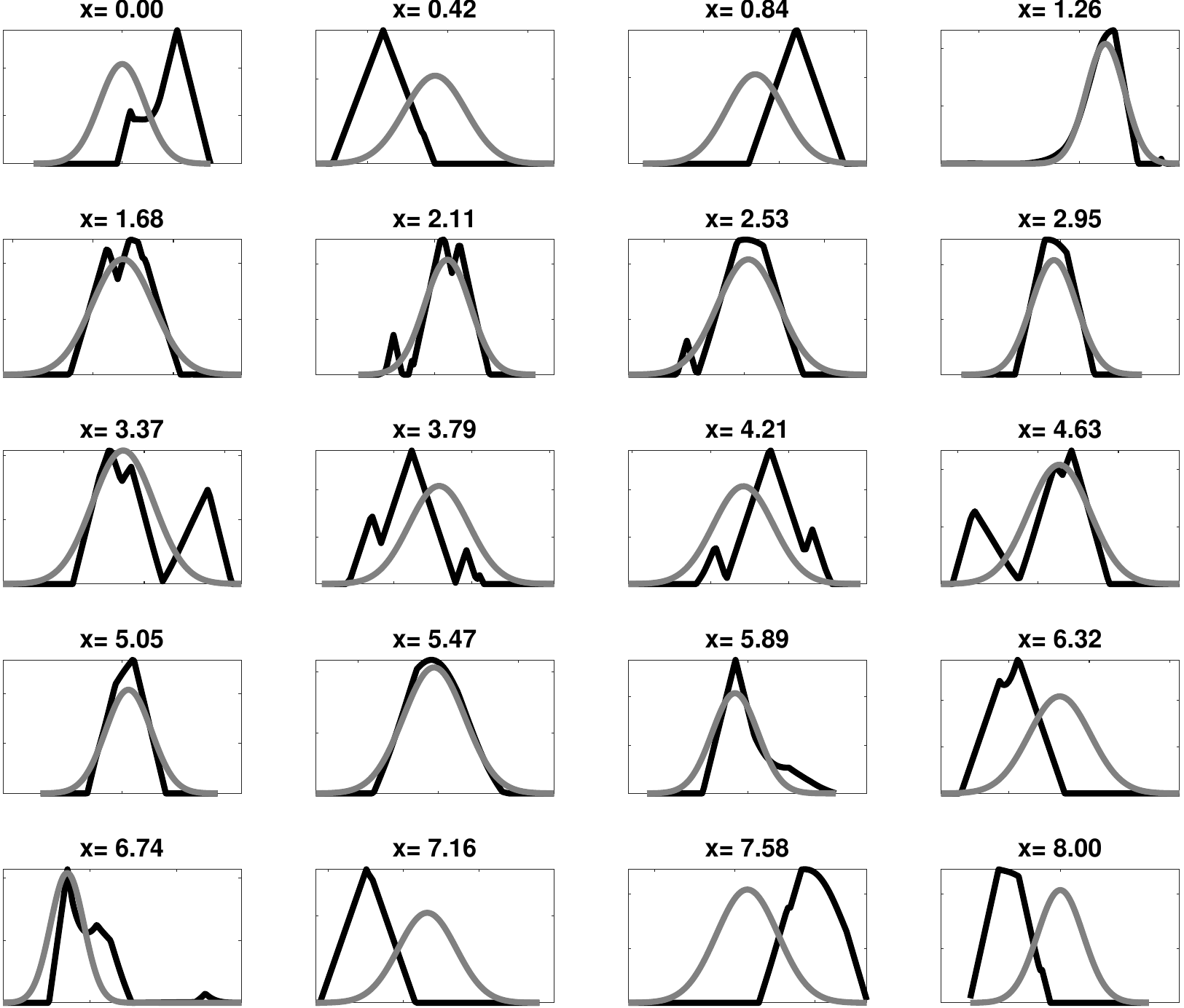}
\label{fig:simulation2:d}
\end{subfigure}

\end{minipage}

\caption*{
\footnotesize 
\textit{Notes:} 
This simulation experiment illustrates that the mid-censored Tobit model is able to fit non-normal distributions of $n^*$ and retrieve the right elasticity even when the conditional distribution $n^*|X$ is not Gaussian and there is no truncation.
We generate 50,000 observations of $y$ and a scalar $X$ following Experiment 2 detailed in Section \ref{sec:solutions:cov:tobit}. 
The variable $n^*$ is approximately a mixture of two Skewed Generalized Error Distributions (Panel a), the distribution of $X$ is discrete (Panel b), 
the kink point is at $k=2.0794$, and $\eps=1$.
Panel c shows the histogram of simulated data for $y$ and the best-fit Tobit distribution using covariate $X$ and no truncation.
Panel d displays the true conditional PDFs of $n^*|X$ in black along with the Gaussian PDFs in gray that are assumed by the Tobit model.
The elasticity is estimated at $\ha\eps = 1.0083$ (S.E. 0.0073).
}
\end{figure}
\end{landscape}

\restoregeometry

\clearpage
\begin{landscape}
\begin{table}
\caption{Estimates Using U.S. Tax Returns 1995--2004}%
\label{tbl:tobit}
\centering
\scalebox{0.94}{
\begin{tabular}{lccccccc|lc}
\hline\hline
 & (1) & (2) & (3) & (4) & (5) & (6) & (7) & \multicolumn{2}{c}{(8)} \\ 
Statistical Model & Trapezoidal & Theorem \ref{theo_partial} & Theorem \ref{theo_partial} & Tobit & Tobit & Tobit & Tobit & \\ 
 & Approximation & Bounds & Bounds & Full Sample & Trunc. 75\% & Trunc. 50\% & Trunc. 25\% & \multicolumn{2}{c}{Sample} \\ 
 & & M = 0.5 & M = 1 & & & & & \multicolumn{2}{c}{details} \\ \hline 
\textit{All} & & & & & & & & Obs. & 188.3m \\ 
\hspace{2mm} Elasticity $\left(\varepsilon\right)$ & 0.323 & $\left[ 0.297, 0.364 \right]$ & $\left[ 0.276, 0.448 \right]$ & 0.163 & 0.242 & 0.250 & 0.278 & Avg. & \$53.5k \\ 
 &  & & & (0.0001) & (0.0002) & (0.0002) & (0.0002) & Std. & \$64.6k \\ 
 & & & & & & & & \\ 
\textit{Self-employed} & & & & & & & & Obs. & 33.4m \\ 
\hspace{2mm} Elasticity $\left(\varepsilon\right)$ & 0.811 & $\left[0.686,1.183\right]$ & $\left[0.612,\infty \right]$ & 0.568 & 0.752 & 0.746 & 0.750 & Avg. & \$60.7k \\ 
 & & & & (0.0005) & (0.0007) & (0.0008) & (0.0008) & Std. & \$77.2k \\ 
 \textit{Self-employed, } & & & & & & & & & \\ 
 \textit{married} & & & & & & & & Obs. & 23.9m \\ 
\hspace{2mm} Elasticity $\left(\varepsilon\right)$ & 0.770 & $\left[0.563,\infty \right]$ & $\left[0.475, \infty \right]$ & 0.304 & 0.464 & 0.540 & 0.554 & Avg. & \$73.6k \\ 
 &  & & & (0.0006) & (0.0009) & (0.0010) & (0.0010) & Std. & \$84.4k \\ 
\textit{Self-employed,} & & & & & & & & \\ 
\textit{not married} & & & & & & & & Obs. & 9.5m \\ 
\hspace{2mm} Elasticity $\left(\varepsilon\right)$ & 0.842 & $\left[0.777, 0.936 \right]$ & $\left[0.728, 1.107\right]$ & 0.875 & 0.774 & 0.745 & 0.807 & Avg. & \$28.3k \\ 
 &  & & & (0.0010) & (0.0009) & (0.0010) & (0.0015) & Std. & \$39.6k \\ 

 & & & & & & & & \\ \hline
\end{tabular}
}
\exhibitnote{\textit{Notes:} The table shows estimates of the elasticity for four different subsamples of the IRS data, and using three different approaches discussed in the paper. 
The first approach (column 1) uses the trapezoidal approximation to point-identify the elasticity (Example \ref{example1}). 
We obtained non-parametric estimates of the side limits of $f_y$ at the kink using the method of \cite{cattaneo2019}.
The estimate for the bunching mass equals the sample proportion of $y$ observations that equals the kink point (see discussion on friction errors in Section \ref{sec:application:data}).
The second approach (columns 2 and 3) uses the same estimates of the bunching mass and side limits to compute partially identified sets for the elasticity (Theorem \ref{theo_partial}).
Upper and lower bounds are calculated for two choices of M, that is, the maximum slope of the PDF of the unobserved heterogeneity $n^*$.
Column 4 has Tobit MLE estimates of the elasticity that utilizes the full sample of data, along with robust standard errors in parentheses.
Columns 5 through 7 report truncated Tobit MLE estimates.
As we move from column 5 to column 7, we restrict the estimation sample to shrinking symmetric windows around the kink that utilizes 75\% to 25\% of the data.
The set of covariates that enters the Tobit estimation is kept constant across different truncation windows and are listed in Section \ref{sec:estimates_across_methods}.
}
\end{table}
\end{landscape}

\begin{figure}[tbp]
\caption{Partial Identification Bounds for the Elasticity}
\label{figure:bounds_est}
\centering

\begin{subfigure}[b]{0.495\textwidth}
\centering
\caption{All Filers}
\includegraphics[width=\linewidth]{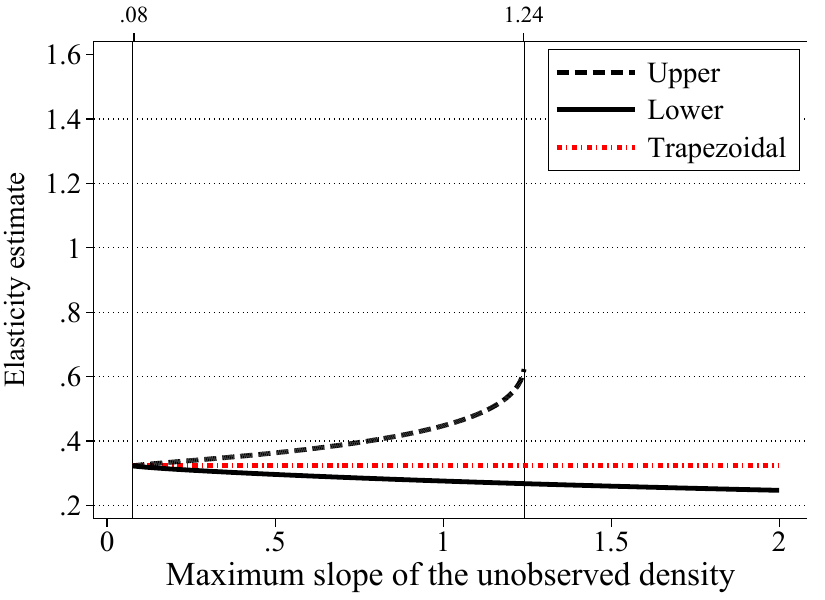}
\label{fig:bounds_est_panela}
\end{subfigure}
\begin{subfigure}[b]{0.495\textwidth}
\centering
\caption{Self-Employed Filers}
\includegraphics[width=\linewidth]{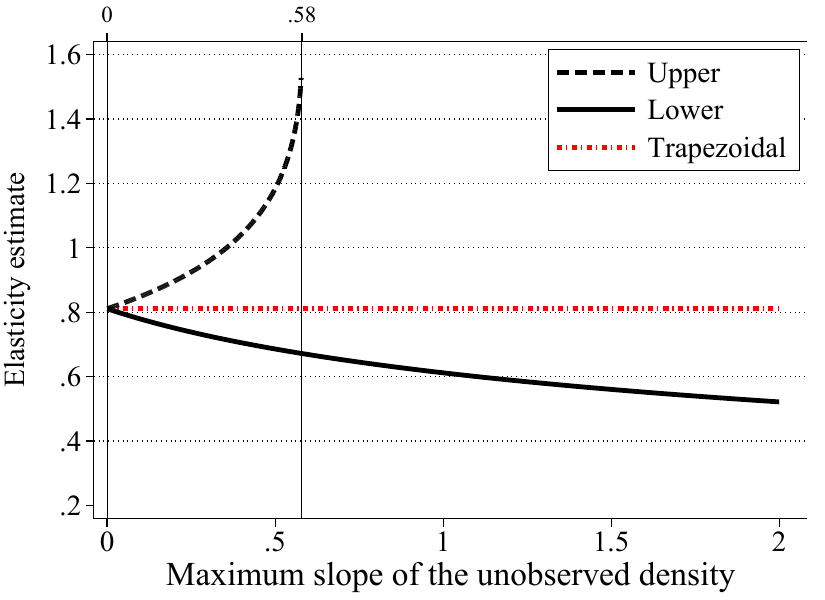}
\label{fig:bounds_est_panelb}
\end{subfigure}
\begin{subfigure}[b]{0.495\textwidth}
\centering
\caption{Self-employed and Married Filers}
\includegraphics[width=\linewidth]{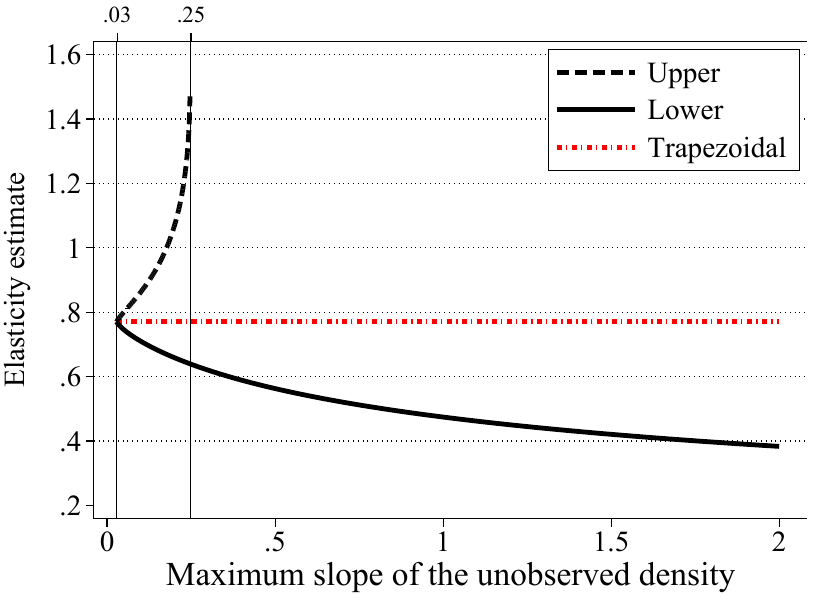}
\label{fig:bounds_est_panelc}
\end{subfigure}
\begin{subfigure}[b]{0.495\textwidth}
\centering
\caption{Self-employed and Not Married Filers}
\includegraphics[width=\linewidth]{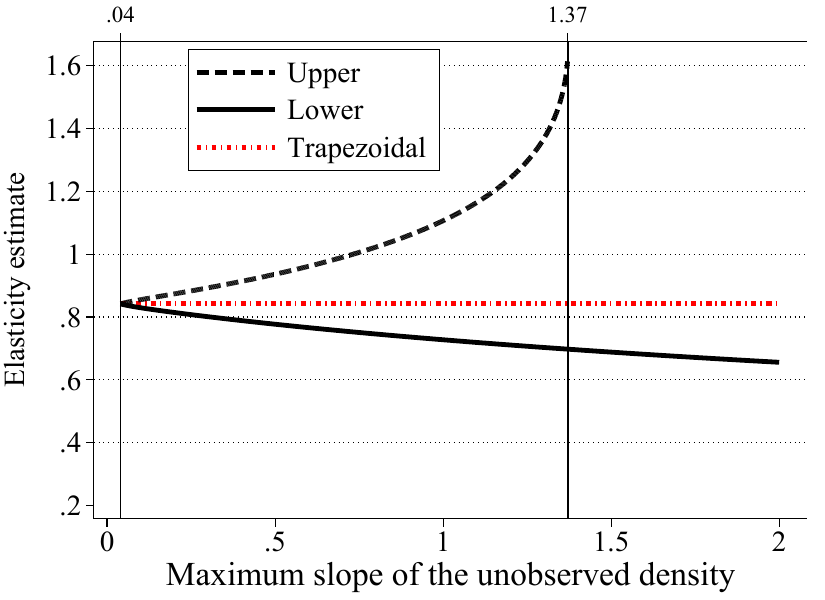}
\label{fig:bounds_est_paneld}
\end{subfigure}
\caption*{
\footnotesize 
\textit{Notes:} Panels a through d display partially identified sets for the elasticity for all filers with one child, and three other subsamples defined by employment and marital status.
The y-axis has elasticity values between lower and upper bounds given various choices of $M$ on the x-axis, that is, the maximum slope magnitude of the PDF of the unobserved heterogeneity $n^*$ 
(Theorem \ref{theo_partial}).
Each panel has two vertical lines.
The line on the left corresponds to the smallest choice  of $M$ for which the bounds are defined.
At the smallest $M$, upper and lower bounds are equal to the elasticity estimate based on the trapezoidal approximation (Example \ref{example1}).
The vertical line on the right corresponds to the largest choice of $M$ for which the upper bound is finite.
Higher slopes allow for the possibility of PDFs that are zero in the bunching window. As a result, we may have a finite bunching mass for any arbitrarily large elasticity.  
}
\end{figure}


\begin{landscape}

\begin{figure}[tbp]
\caption{Truncated Tobit -  All Filers}
\label{figure:trunc_est_all}
\centering

\begin{subfigure}[b]{0.455\textwidth}
\centering
\caption{100\% of the data used for estimation}
\includegraphics[width=\linewidth]{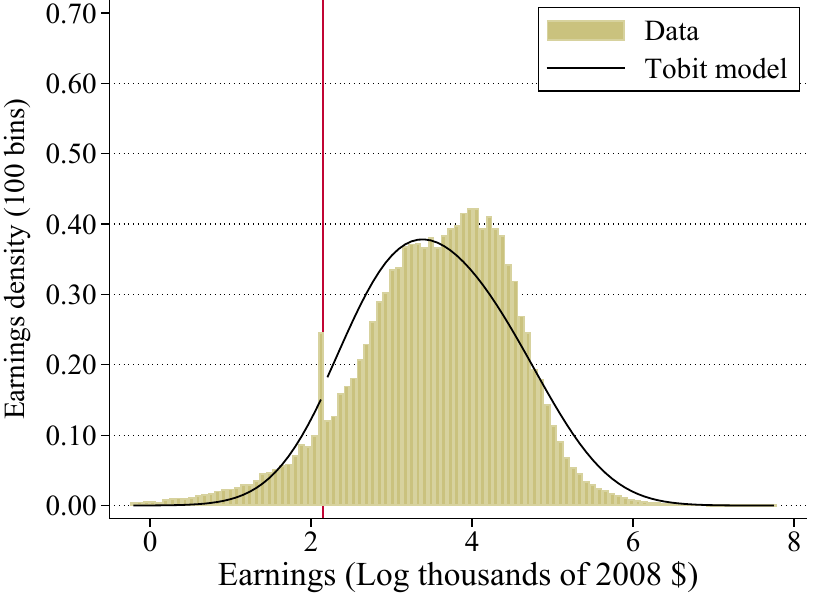}
\label{figure:trunc_est_all_1}
\end{subfigure}
\begin{subfigure}[b]{0.455\textwidth}
\centering
\caption{80\% of the data used for estimation}
\includegraphics[width=\linewidth]{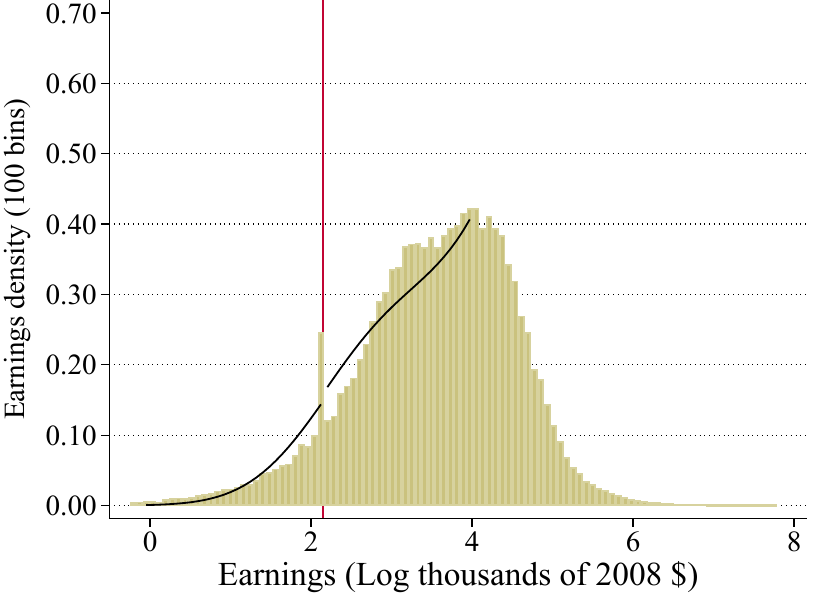}
\label{figure:trunc_est_all_2}
\end{subfigure}
\begin{subfigure}[b]{0.455\textwidth}
\centering
\caption{60\% of the data used for estimation}
\includegraphics[width=\linewidth]{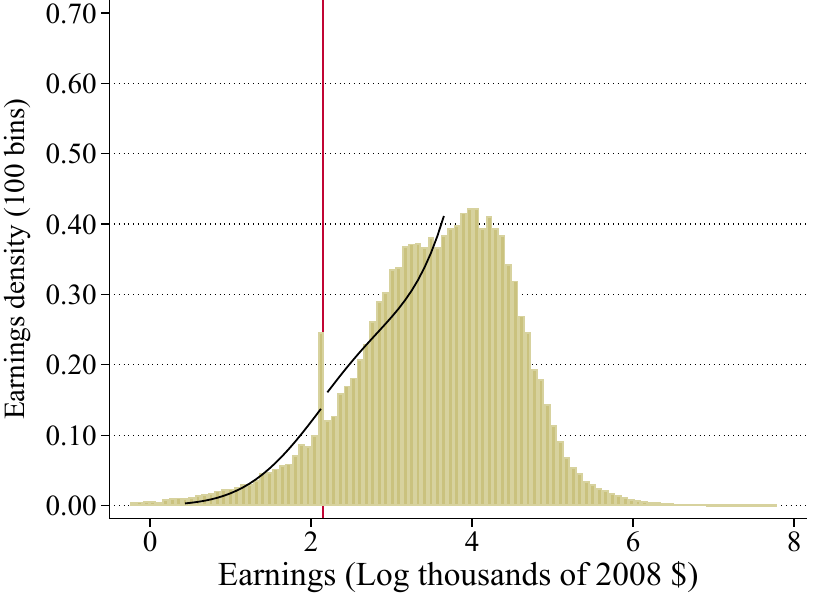}
\label{figure:trunc_est_all_3}
\end{subfigure}
\medskip

\begin{subfigure}[b]{0.455\textwidth}
\centering
\caption{40\% of the data used for estimation}
\includegraphics[width=\linewidth]{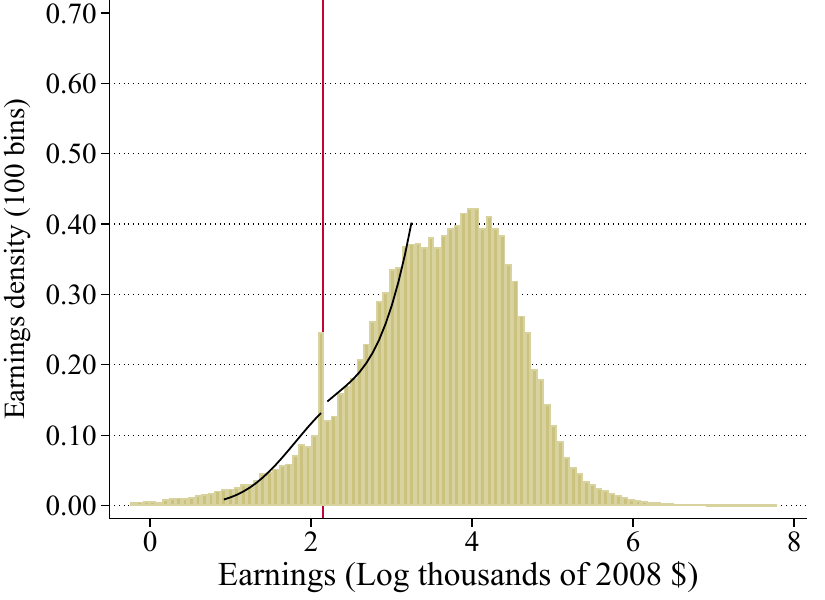}
\label{figure:trunc_est_all_4}
\end{subfigure}
\begin{subfigure}[b]{0.455\textwidth}
\centering
\caption{20\% of the data used for estimation}
\includegraphics[width=\linewidth]{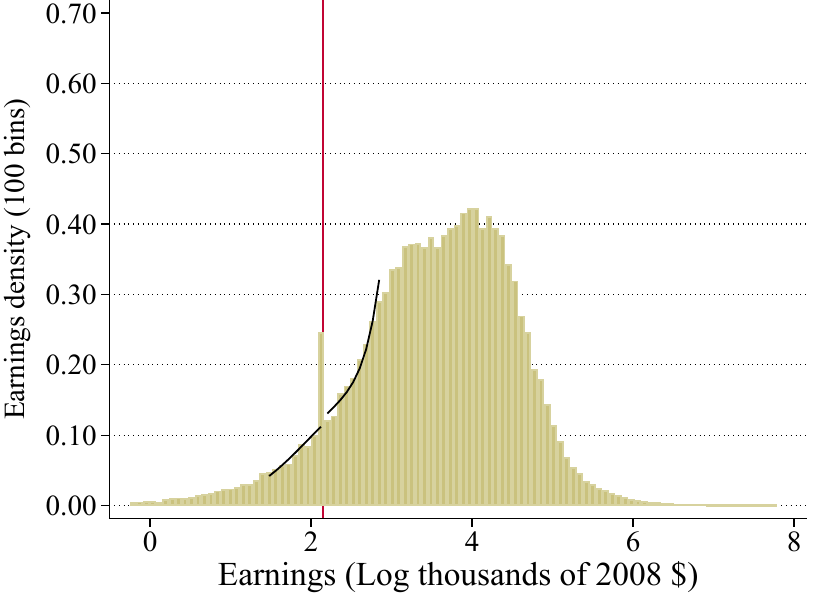}
\label{figure:trunc_est_all_5}
\end{subfigure}
\begin{subfigure}[b]{0.455\textwidth}
\centering
\caption{Elasticity by percent used}
\includegraphics[width=\linewidth]{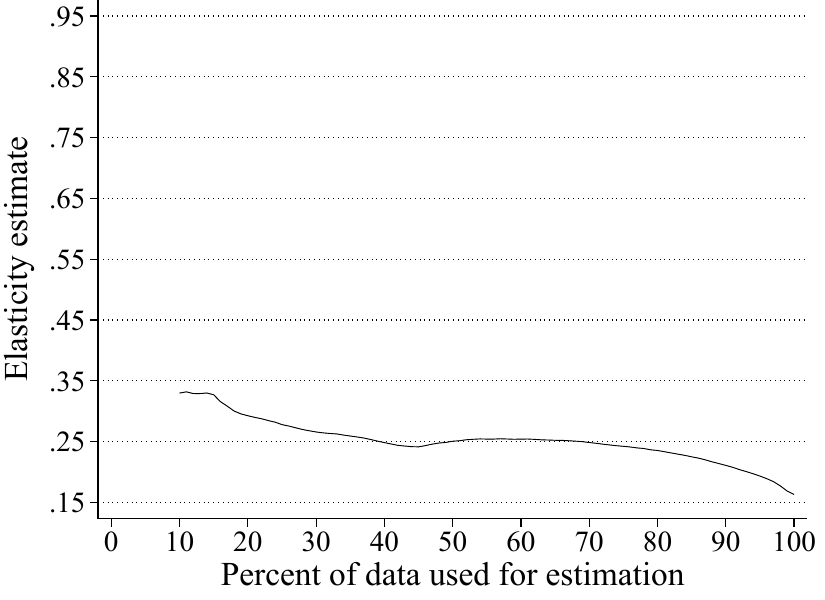}
\label{figure:trunc_est_all_6}
\end{subfigure}
\caption*{
\footnotesize 
\textit{Notes:} the figure displays best-fit Tobit distributions and elasticity estimates for various choices of a symmetric truncation window around the kink point.
The set of covariates that enters the Tobit estimation is kept constant across different truncation windows and are listed in Section \ref{sec:estimates_across_methods}.
Panels a through e show the histogram of income for all filers  (bars), along with the best-fit Tobit PDF for each truncation window (line).
The best-fit PDF is constructed using the truncated Tobit likelihood averaged over  covariate values in the sample.
Panel f displays the Tobit elasticity estimate as a function of the percentage of data used in estimation. 
}
\vspace{-41pt}
\end{figure}

\end{landscape}

\begin{landscape}

\begin{figure}[tbp]
\caption{Truncated Tobit -  Self-employed Filers}
\label{figure:trunc_est_self}
\centering

\begin{subfigure}[b]{0.455\textwidth}
\centering
\caption{100\% of the data used for estimation}
\includegraphics[width=\linewidth]{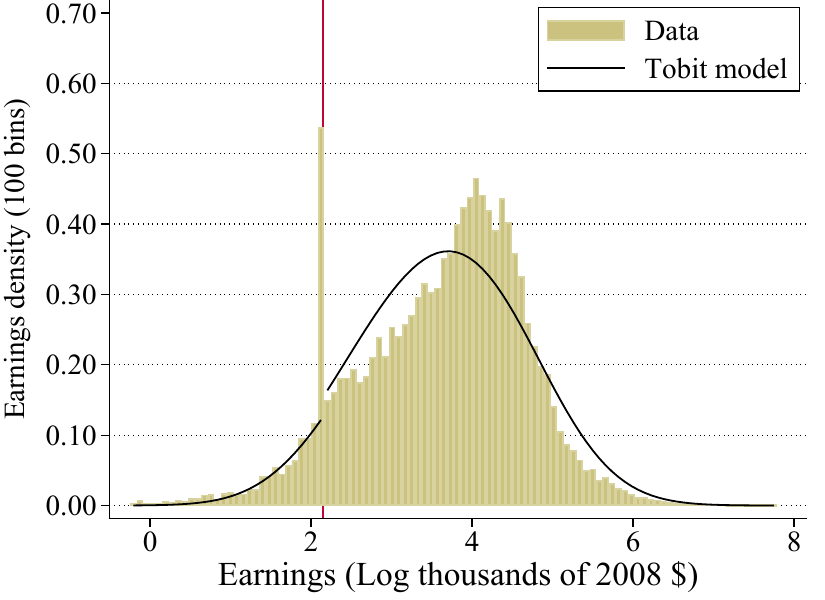}
\label{figure:trunc_est_self_1}
\end{subfigure}
\begin{subfigure}[b]{0.455\textwidth}
\centering
\caption{80\% of the data used for estimation}
\includegraphics[width=\linewidth]{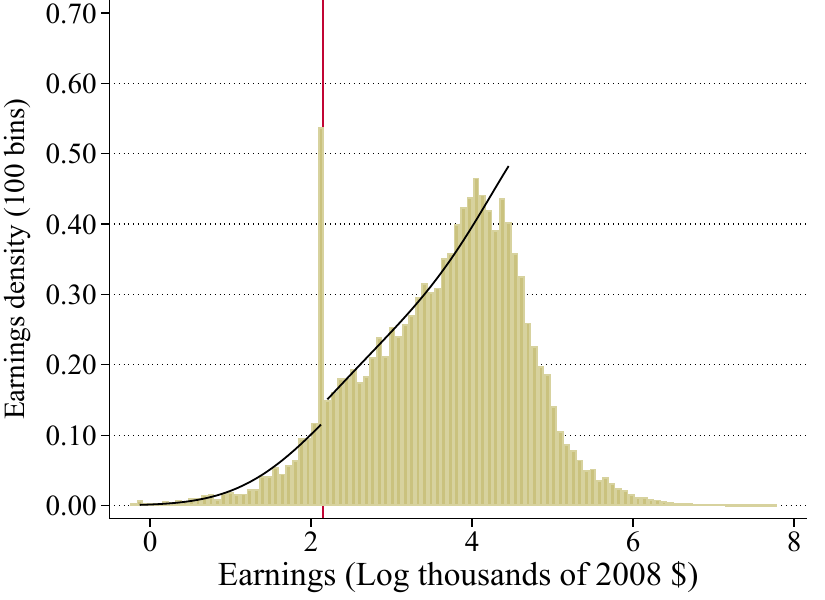}
\label{figure:trunc_est_self_2}
\end{subfigure}
\begin{subfigure}[b]{0.455\textwidth}
\centering
\caption{60\% of the data used for estimation}
\includegraphics[width=\linewidth]{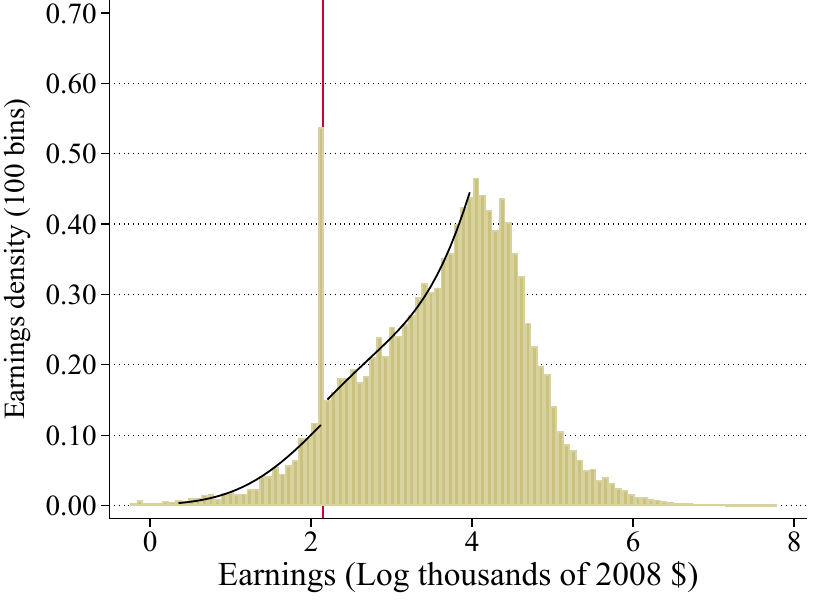}
\label{figure:trunc_est_self_3}
\end{subfigure}
\medskip

\begin{subfigure}[b]{0.455\textwidth}
\centering
\caption{40\% of the data used for estimation}
\includegraphics[width=\linewidth]{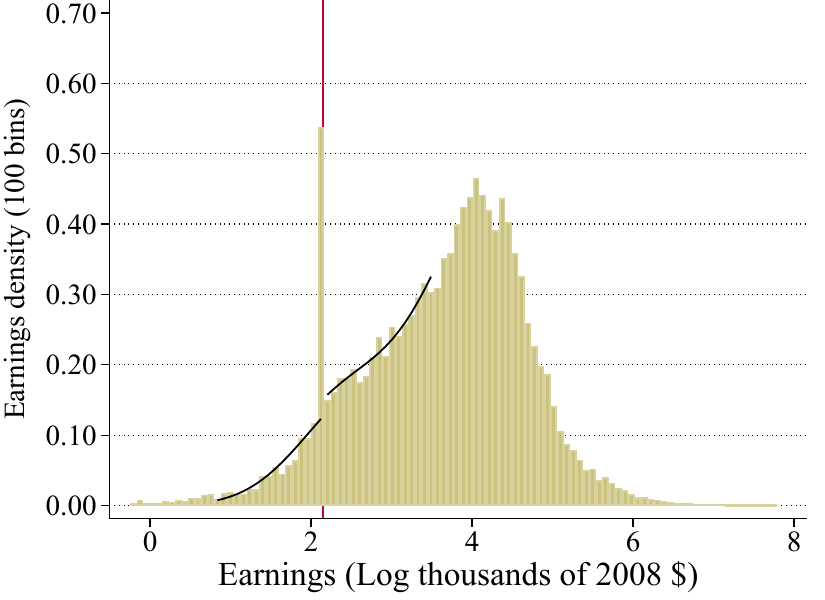}
\label{figure:trunc_est_self_4}
\end{subfigure}
\begin{subfigure}[b]{0.455\textwidth}
\centering
\caption{20\% of the data used for estimation}
\includegraphics[width=\linewidth]{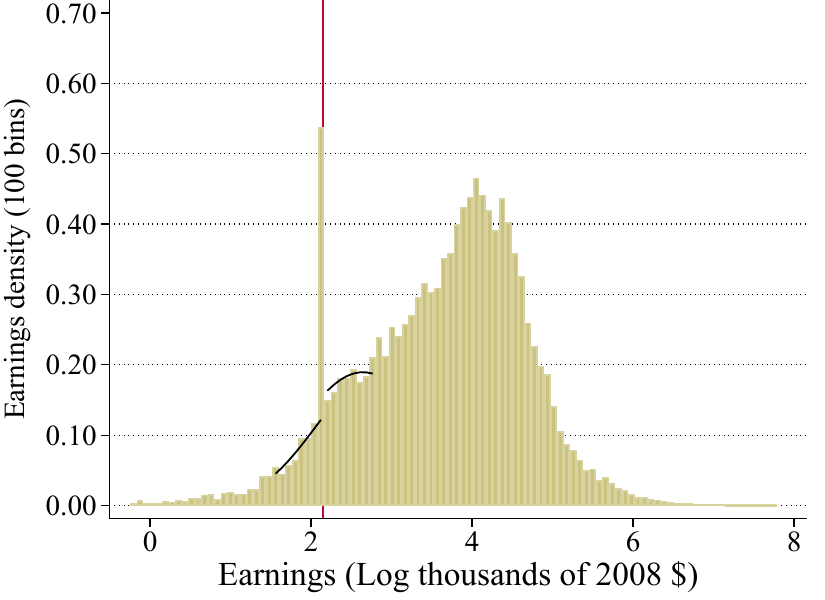}
\label{figure:trunc_est_self_5}
\end{subfigure}
\begin{subfigure}[b]{0.455\textwidth}
\centering
\caption{Elasticity by percent used}
\includegraphics[width=\linewidth]{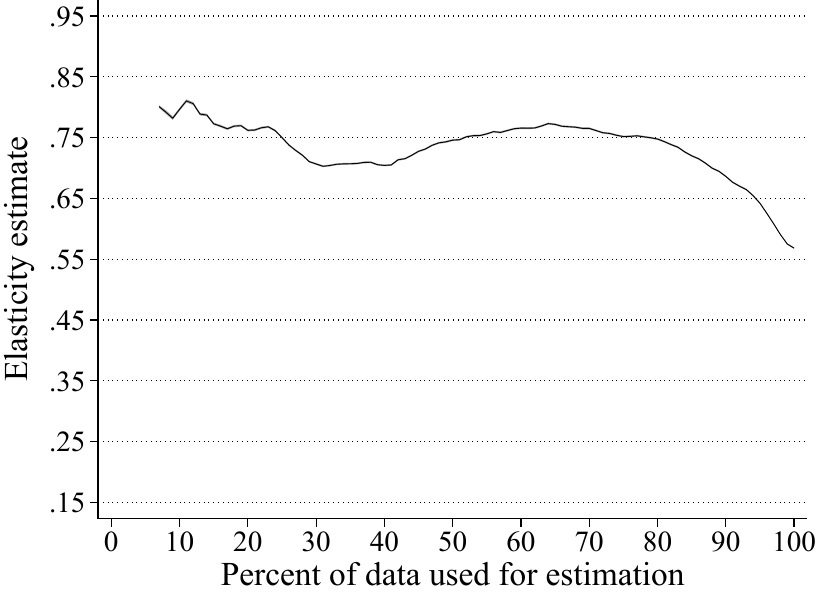}
\label{figure:trunc_est_self_6}
\end{subfigure}
\caption*{
\footnotesize 
\textit{Notes:} 
the figure displays best-fit Tobit distributions and elasticity estimates for various choices of a symmetric truncation window around the kink point.
The set of covariates that enters the Tobit estimation is kept constant across different truncation windows and are listed in Section \ref{sec:estimates_across_methods}.
Panels a through e show the histogram of income for self-employed filers  (bars), along with the best-fit Tobit PDF for each truncation window (line).
The best-fit PDF is constructed using the truncated Tobit likelihood averaged over  covariate values in the sample.
Panel f displays the Tobit elasticity estimate as a function of the percentage of data used in estimation. 
\vspace{-41pt}
}
\end{figure}

\end{landscape}

\begin{landscape}

\begin{figure}[tbp]
\caption{Truncated Tobit -  Self-employed and Married Filers}
\label{figure:trunc_est_self_married}
\centering

\begin{subfigure}[b]{0.455\textwidth}
\centering
\caption{100\% of the data used for estimation}
\includegraphics[width=\linewidth]{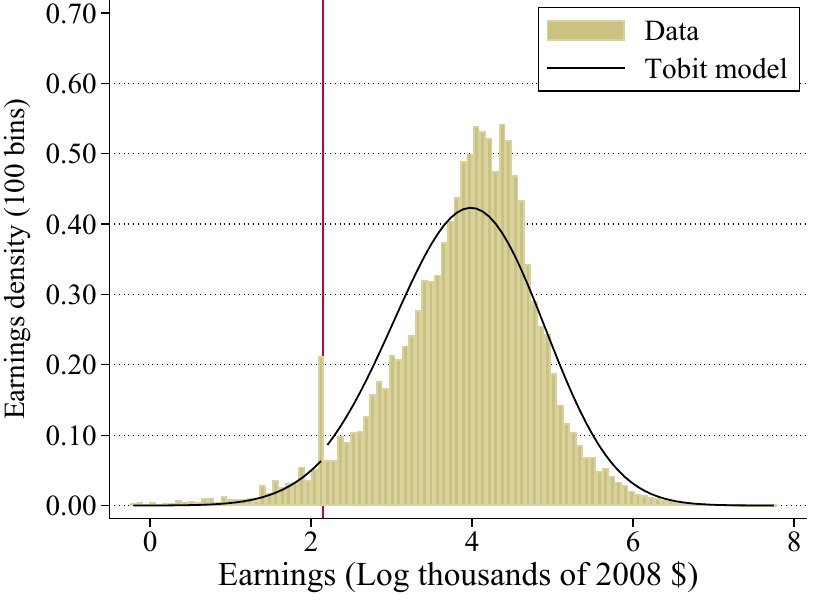}
\label{fig:trunc_est_self_married_1}
\end{subfigure}
\begin{subfigure}[b]{0.455\textwidth}
\centering
\caption{80\% of the data used for estimation}
\includegraphics[width=\linewidth]{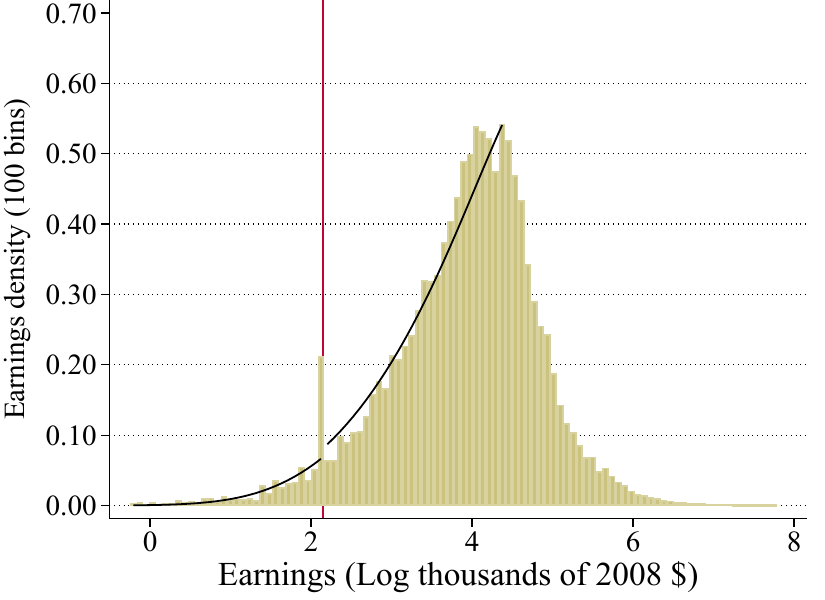}
\label{fig:trunc_est_self_married_2}
\end{subfigure}
\begin{subfigure}[b]{0.455\textwidth}
\centering
\caption{60\% of the data used for estimation}
\includegraphics[width=\linewidth]{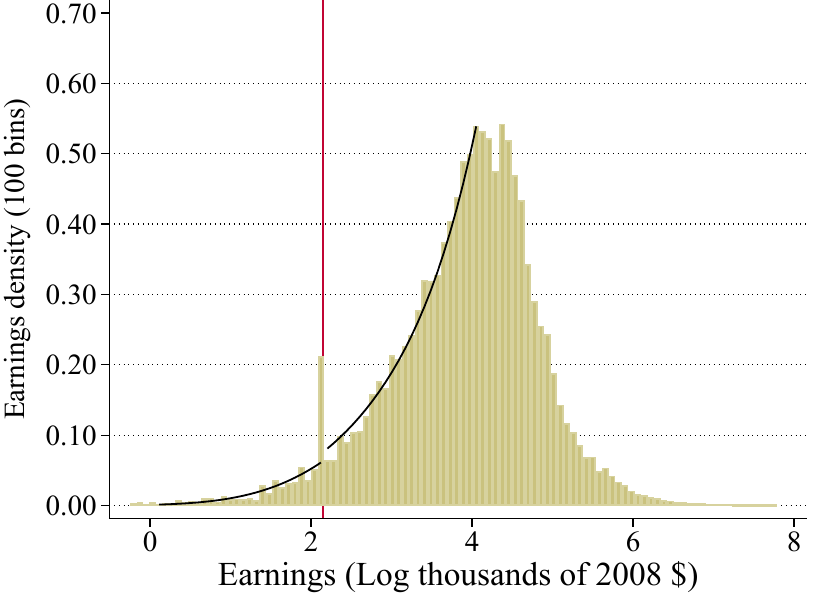}
\label{fig:trunc_est_self_married_3}
\end{subfigure}
\medskip

\begin{subfigure}[b]{0.455\textwidth}
\centering
\caption{40\% of the data used for estimation}
\includegraphics[width=\linewidth]{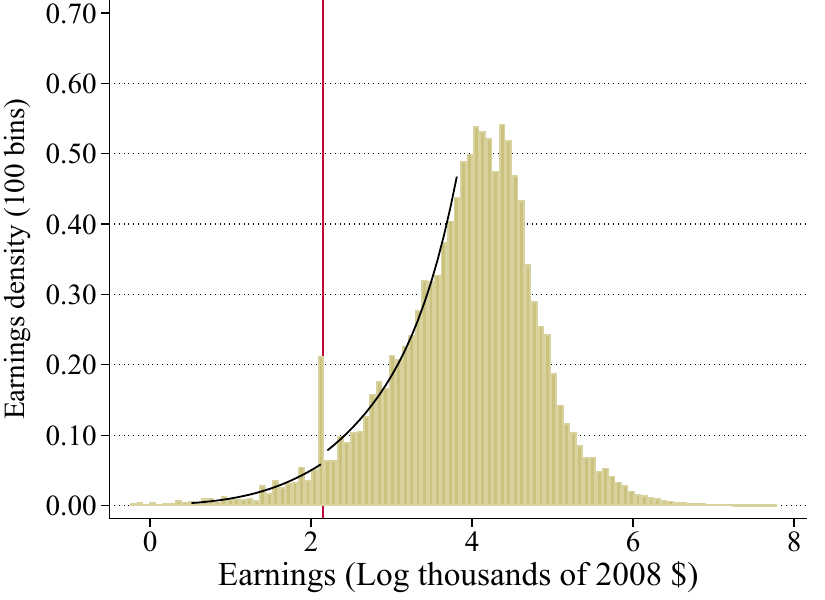}
\label{fig:trunc_est_self_married_4}
\end{subfigure}
\begin{subfigure}[b]{0.455\textwidth}
\centering
\caption{20\% of the data used for estimation}
\includegraphics[width=\linewidth]{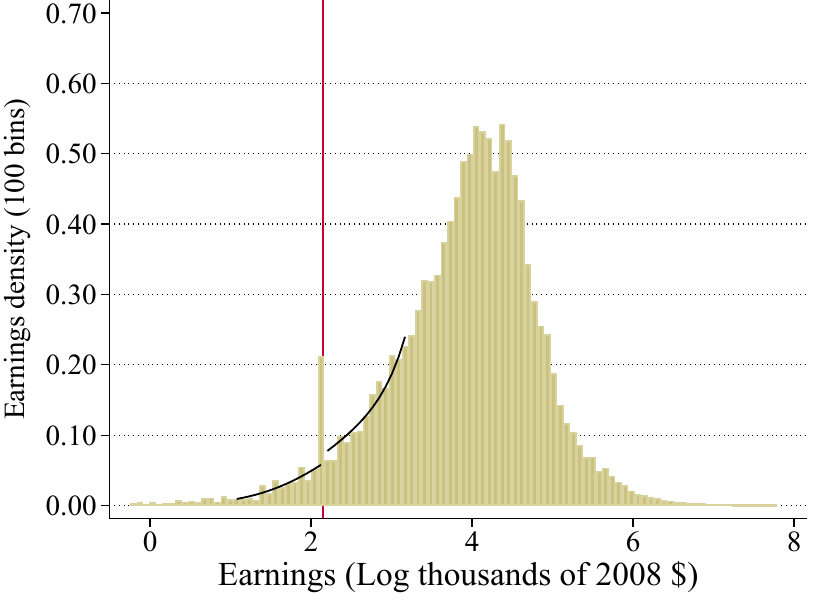}
\label{fig:trunc_est_self_married_5}
\end{subfigure}
\begin{subfigure}[b]{0.455\textwidth}
\centering
\caption{Elasticity by percent used}
\includegraphics[width=\linewidth]{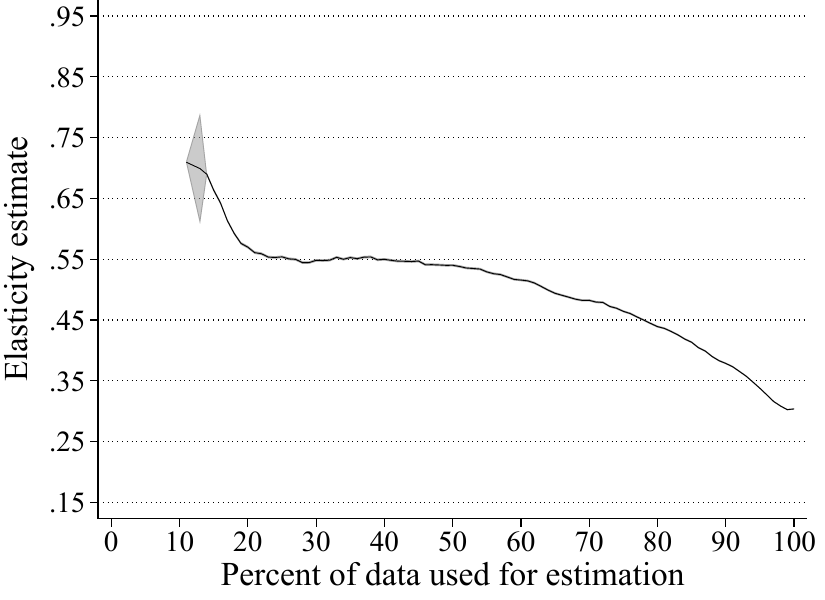}
\label{fig:trunc_est_self_married_6}
\end{subfigure}
\caption*{
\footnotesize 
\textit{Notes:} 
the figure displays best-fit Tobit distributions and elasticity estimates for various choices of a symmetric truncation window around the kink point.
The set of covariates that enters the Tobit estimation is kept constant across different truncation windows and are listed in Section \ref{sec:estimates_across_methods}. 
Panels a through e show the histogram of income for self-employed and married  filers  (bars), along with the best-fit Tobit PDF for each truncation window (line).
The best-fit PDF is constructed using the truncated Tobit likelihood averaged over  covariate values in the sample.
Panel f displays the Tobit elasticity estimate as a function of the percentage of data used in estimation. 
\vspace{-41pt}}
\end{figure}

\end{landscape}

\begin{landscape}

\begin{figure}[tbp]
\caption{Truncated Tobit - Self-employed and Not Married Filers}
\label{figure:trunc_est_self_notmarried}
\centering

\begin{subfigure}[b]{0.455\textwidth}
\centering
\caption{100\% of the data used for estimation}
\includegraphics[width=\linewidth]{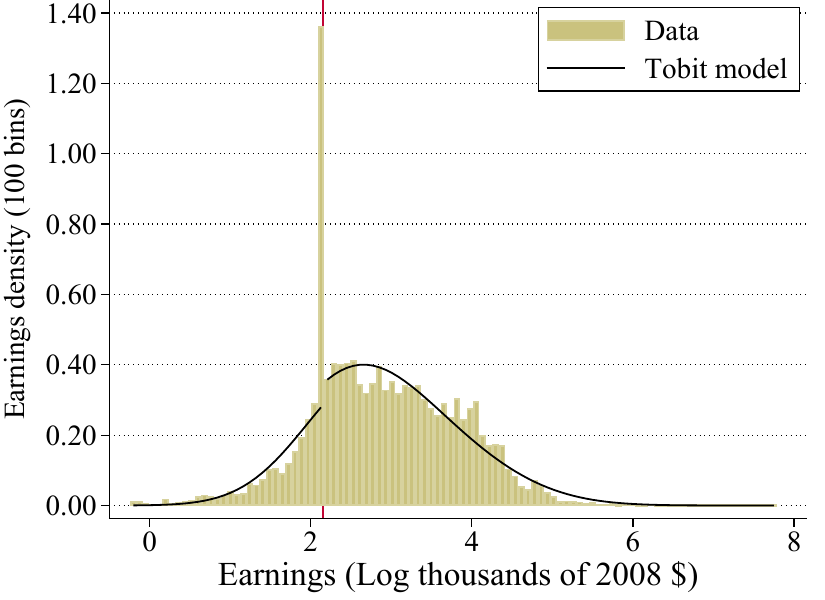}
\label{fig:trunc_est_self_notmarried_1}
\end{subfigure}
\begin{subfigure}[b]{0.455\textwidth}
\centering
\caption{80\% of the data used for estimation}
\includegraphics[width=\linewidth]{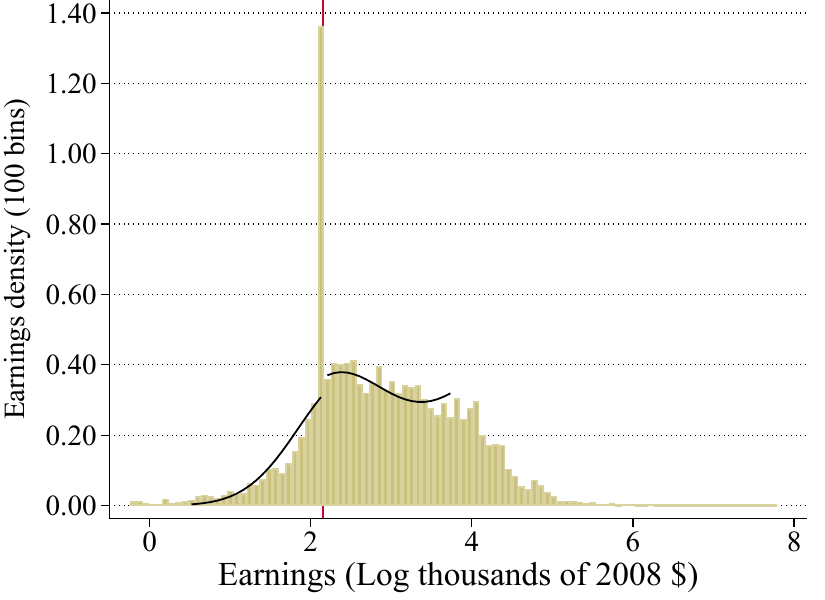}
\label{fig:trunc_est_self_notmarried_2}
\end{subfigure}
\begin{subfigure}[b]{0.455\textwidth}
\centering
\caption{60\% of the data used for estimation}
\includegraphics[width=\linewidth]{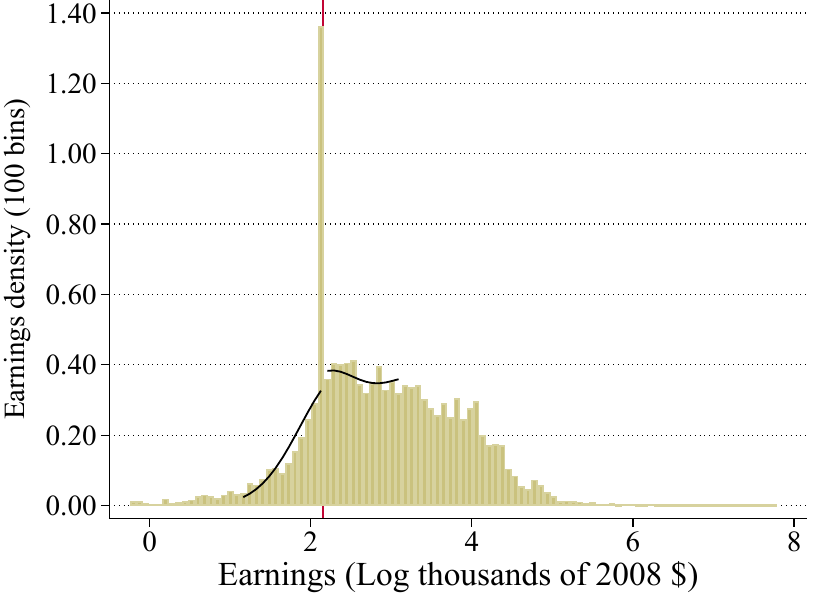}
\label{fig:trunc_est_self_notmarried_3}
\end{subfigure}
\medskip

\begin{subfigure}[b]{0.455\textwidth}
\centering
\caption{40\% of the data used for estimation}
\includegraphics[width=\linewidth]{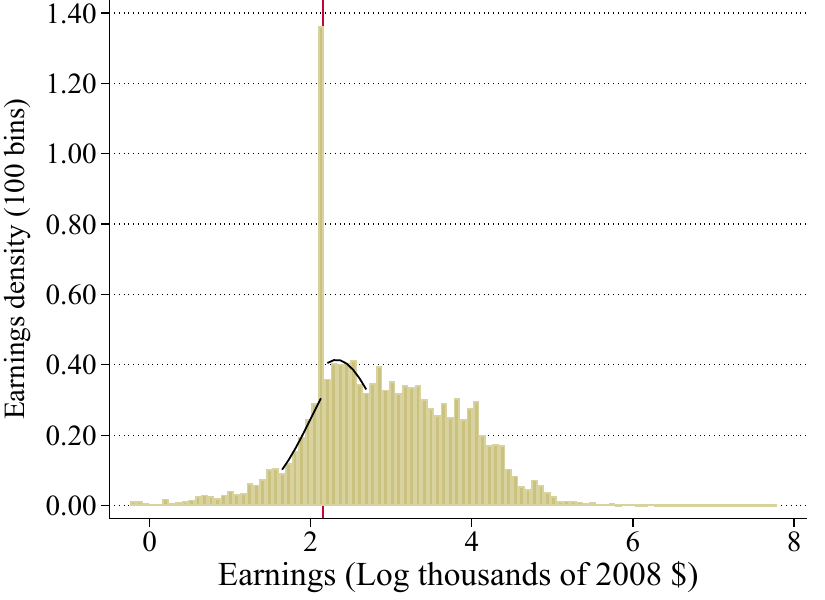}
\label{fig:trunc_est_self_notmarried_4}
\end{subfigure}
\begin{subfigure}[b]{0.455\textwidth}
\centering
\caption{19\% of the data used for estimation}
\includegraphics[width=\linewidth]{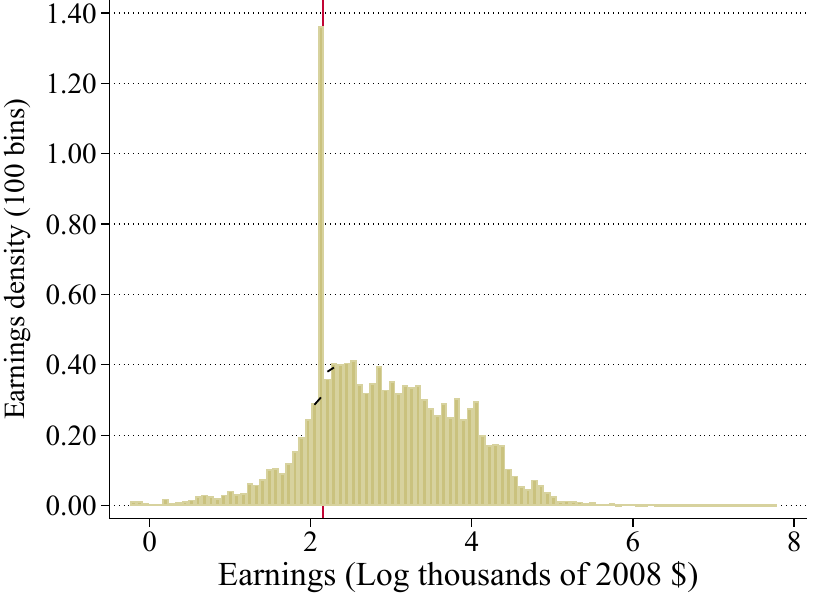}
\label{fig:trunc_est_self_notmarried_5}
\end{subfigure}
\begin{subfigure}[b]{0.455\textwidth}
\centering
\caption{Elasticity by percent used}
\includegraphics[width=\linewidth]{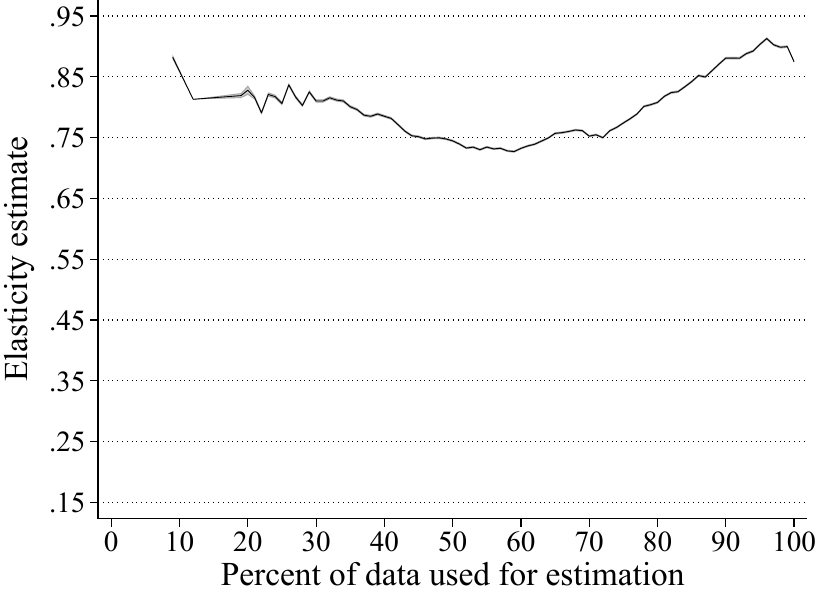}
\label{fig:trunc_est_self_notmarried_6}
\end{subfigure}
\caption*{
\footnotesize 
\textit{Notes:} the figure displays best-fit Tobit distributions and elasticity estimates for various choices of a symmetric truncation window around the kink point.
The set of covariates that enters the Tobit estimation is kept constant across different truncation windows and are listed in Section \ref{sec:estimates_across_methods}.
Panels a through e show the histogram of income for self-employed and not married filers  (bars), along with the best-fit Tobit PDF for each truncation window (line).
The best-fit PDF is constructed using the truncated Tobit likelihood averaged over  covariate values in the sample.
Panel f displays the Tobit elasticity estimate as a function of the percentage of data used in estimation. 
\vspace{-41pt}}
\end{figure}

\end{landscape}

\appendix
\singlespacing

\numberwithin{equation}{section}
\numberwithin{lemma}{section}
\numberwithin{theorem}{section}

\section{Appendix}
\label{app}

\subsection{Identification with a Notch - Proof of Theorem \ref{theo:notch-poss}}
\label{app:notch-poss}

\indent 

We present the proof of Theorem \ref{theo:notch-poss} in the more general case of multiple tax changes with at least one notch (Sections \ref{sec:app:general_prob} and \ref{sec:app:general_prob_sol} in the supplemental appendix).
Let $p \in \{1, \ldots, L\}$ be the index of the smallest notch $K_{p}$.
As explained in the text, the presence of a notch may remove the next tax change $K_{p+1}$ from the solution to the utility maximization problem with multiple kinks and notches (Lemma \ref{lemma:sol} in the supplemental appendix). 
Let $q \in \{p+1, \ldots, L\}$ be 
the index of the next tax change that appears in the solution. 
Following the proof of Lemma \ref{lemma:sol}, the distribution of $Y$ does not have any mass in the interval 
$( K_{p} ~;~ Y^I_{p} ]$ where $Y^I_{p} = N^I_{p} (1-t_{q-1} )^{\eps}$,
and $N^I_{p}$ is defined as part of the solution in Equation  \ref{eq:gen-sol} in the supplemental appendix.
The econometrician observes the value of $Y^I_p$, which is  between $K_{q-1}$ and $K_{q}$.
The goal is to solve for $\eps$ using this information.

The proof of Lemma \ref{lemma:sol} says $N^I_{p}$ satisfies the equation below.
\begin{gather*}
N^I_{p} (1-t_{q-1})^{1+\eps} + \eps \left( N^I_{p} \right)^{-1/ \eps} (K_{p})^{\frac{1+\eps}{\eps}} 
= (1+\eps) 
\left[ C_{p} - I_{q-1} + K_{q-1}(1 - t_{q-1}) \right]
\end{gather*}
Use the fact that $Y_p^I = N^I_{p} (1 - t_{q-1})^{\eps}$ and 
$\left( Y_p^I \right)^{-\frac{1}{ \eps}} (1 - t_{q-1}) = \left( N^I_{p} \right)^{-\frac{1}{ \eps}} $ and substitute these in the equation above to get
\begin{gather}
Y^I_{p} (1-t_{q-1}) + \eps \left( Y_p^I \right)^{-\frac{1}{ \eps}} (1 - t_{q-1})  (K_{p})^{\frac{1+\eps}{\eps}} 
= (1+\eps) 
\left[ C_{p} - I_{q-1} + K_{q-1}(1 - t_{q-1}) \right]
\nonumber
\\ 
Y^I_{p}  + \eps K_{p} \left( \frac{K_p}{Y_p^I} \right)^{\frac{1}{ \eps}}    
= \left( {1+\eps } \right) 
 \left(
 \frac{C_{p} - I_{q-1} + K_{q-1}(1 - t_{q-1})}
 {1 - t_{q-1}} 
 \right)
 \label{eq:notch-poss:1}
\end{gather}

The elasticity $\eps$ is identified if there exists an unique solution for $\eps$ in Equation \ref{eq:notch-poss:1}
as function of $Y_p^I$, $K_p$, $C_p$, $I_{q-1}$, $t_{q-1}$. We know a solution exists, and we show it must be unique. 
Consider the left-hand and right-hand sides of (\ref{eq:notch-poss:1}) as functions of $\eps$.
The solution occurs at the value of $\eps$ where both of these functions intersect. Uniqueness is equivalent to single-crossing of these functions.

The function on the right-hand side (RHS) of (\ref{eq:notch-poss:1}) has positive intercept equal to $[ C_{p} - I_{q-1} + K_{q-1}(1 - t_{q-1}) ]/(1 - t_{q-1})$. The function on the left-hand side (LHS) has intercept equal to 
$Y_p^I$ because
$\eps K_{p} \left( \frac{K_p}{Y_p^I} \right)^{\frac{1}{ \eps}}$ converges to zero as $\eps \downarrow 0$.
The intercept of the LHS is strictly bigger than the intercept of the RHS:
\begin{gather*}
Y_p^I  
\gtrless
 \frac{C_{p} - I_{q-1} + K_{q-1}(1 - t_{q-1})}
 {1 - t_{q-1}} 
\\
I_{q-1} + (Y_p^I  -K_{q-1}) (1 - t_{q-1}) 
\gtrless 
C_p
\\
C_p^I \gtrless C_p
\end{gather*}
where $C_p^I$ is the consumption value on the budget frontier when income is equal to $Y_p^I$ which is strictly greater than 
$C_p$. In fact, the consumer is indifferent between $(C_p^I,Y_p^I)$ and $(C_p,Y_p)$ where $Y_p^I>Y_p$. Since utility is strictly decreasing in $Y$ and increasing in $C$, we must have  $C_p^I > C_p$. 
Therefore, $Y_p^I  >  \left[ {C_{p} - I_{q-1} + K_{q-1}(1 - t_{q-1})} \right]/ (1 - t_{q-1}) $.

The function on the RHS of (\ref{eq:notch-poss:1}) has positive slope equal to 
$[ C_{p} - I_{q-1} + K_{q-1}(1 - t_{q-1}) ]/(1 - t_{q-1})$.
The function on the LHS has strictly positive derivative for any positive $\eps$,
\begin{gather*}
\frac{\partial}{\partial \eps} LHS = K_p \left( \frac{K_p}{Y_p^I} \right)^{\frac{1}{ \eps}}    
\left[ 1 - \frac{1}{\eps} \ln \left( \frac{K_p}{Y_p^I} \right) \right] 
\end{gather*}
which is strictly positive because $K_p>0$,  $ \left( \frac{K_p}{Y_p^I} \right)^{\frac{1}{ \eps}} \in (0,1) $,
and $- \frac{1}{\eps} \ln \left( \frac{K_p}{Y_p^I} \right) >0$.
The derivative is strictly increasing with $\eps$,
\begin{gather*}
\frac{\partial^2}{\partial \eps^2} LHS = K_p \left( \frac{K_p}{Y_p^I} \right)^{\frac{1}{ \eps}}    
 \frac{1}{\eps^3} \left[ \ln \left( \frac{K_p}{Y_p^I} \right) \right]^2 
\end{gather*}
which is also strictly positive. The limit of $\frac{\partial}{\partial \eps} LHS$ as $\eps \to \infty$ is equal to $K_p$.
Therefore, the slope of the LHS is positive, strictly increasing but always less than $K_p$. Next, we show that $K_p$
is strictly less than the constant slope of the RHS.
\begin{gather*}
K_p  
\gtrless
 \frac{C_{p} - I_{q-1} + K_{q-1}(1 - t_{q-1})}
 {1 - t_{q-1}} 
 \\
 I_{q-1} + (K_p - K_{q-1}) (1 - t_{q-1}) \gtrless C_p.
\end{gather*}
The value  $C_p^*=I_{q-1} + (K_p - K_{q-1}) (1 - t_{q-1})$ is what consumption would be if income were  equal to $K_p$ and the budget segment between $K_{q-1}$ and $K_q$ were extrapolated back to $K_p$.
We know that the indifference curve touches this budget segment at one point $(C_p^I, Y_p^I)$, and every other point on the extrapolated budget segment has strictly lower utility. We also know that $(C_p, K_p)$ is on such indifference curve,
so that 
$(C_p, K_p)$ is strictly preferred to $(C_p^*, K_p)$.
Therefore, $C_p^*< C_p$,
and $K_p < [ C_{p} - I_{q-1} + K_{q-1}(1 - t_{q-1}) ]/(1 - t_{q-1})$,
and the slope of the function on the LHS of (\ref{eq:notch-poss:1}) is always less than the slope of the function of the RHS.

In summary, the intercept of the function on the LHS of (\ref{eq:notch-poss:1}) is greater than the intercept of the function on the RHS.
Both functions are strictly increasing: the one on the RHS has constant slope, and the one on the LHS has increasing slope that is smaller than the slope of the RHS function. Therefore, the intersection of these two functions is unique. 

$\square$








\subsection{Partial Identification with Non-parametric Restrictions - Proof of Theorem \ref{theo_partial}}

\indent


First, let's fix $\eps>0$. We look at all possible PDFs in $\mathcal{F}_{n^*}$
and compute the maximum and minimum integrals over the interval $[\underline{n} ,\overline{n} ]$. 
The length of this interval is $\eps(s_0 - s_1)$.
Thus, without loss of generality, we restrict our attention to $f_{n^*}$ over the interval 
$[0,\eps(s_0 - s_1)]$ such that:

(i) $f_{n^*}$ is continuous, and it connects the point $(0, f_y(k ^-))$ to $(\eps(s_0 - s_1), f_y(k ^+))$ in the (x,y) plane;

(ii) the absolute value of the slope of $f_{n^*}$ is bounded by M.

First, start with $f_{n^*}$ being a line. The magnitude of the slope is
$\frac{|f_y(k ^+) - f_y(k ^-)|}{\eps(s_0-s_1)}$. Suppose this magnitude is bigger than $M$. Then, any $f_{n^*}$ satisfying (i) will have a slope magnitude higher than $M$ at some point. Therefore, we need to look at $\eps \geq \eps_1$ where 
$\eps_1 = \frac{|f_y(k ^+) - f_y(k ^-)|}{M (s_0-s_1) } $.

For fixed $\eps \geq \eps_1$, the slope of the line will be less or equal to $M$.
The maximum possible area is attained when the function has the shape of a hat with two line segments that attain the maximum slope. 
The first line segment starts at $(0, f_y(k ^-))$ and has slope $+M$;
the second line segment ends at $(\eps(s_0 - s_1), f_y(k ^+))$ and has slope $-M$. 
Call this function $\overline{f}_{n^*}$.
These lines intersect at $x^*$ where
\[
x^* = \frac{f_y(k ^+) - f_y(k ^-) + M \eps (s_0-s_1) }{2M}.
\]
Note that $x^*$ is always such $0 \leq x^* \leq \eps (s_0-s_1)$ because $\eps \geq \eps_1$.
Note that it is impossible to find another $f_{n^*}$ that satisfies (i), it is greater than $\overline{f}_{n^*}$, and that has slope magnitude less or equal than $M$.
The maximum area is
\begin{gather*}
\overline{A}(\eps) = \int_{0}^{\eps(s_0-s_1)} \overline{f}_{n^*} (v) ~dv
\\
=(1/4M) \left[ M^2 \eps^2 s_0^2 - 2 M^2 \eps^2 s_0 s_1 + M^2 \eps^2 s_1^2 + 2 M \eps f_y(k ^-) \right.
\\
\left.
s_0 - 2 M \eps f_y(k ^-) s_1 + 2 M \eps f_y(k ^+) s_0 - 2 M \eps f_y(k ^+) s_1 - f_y(k ^-)^2 + 2 f_y(k ^-) f_y(k ^+) - f_y(k ^+)^2)
\right]
\end{gather*}

The function $\overline{A}(\eps)$ is strictly increasing with respect to $\eps$
over $\eps \geq \eps_1$. In fact, the derivative is
$
((s_0 - s_1) (f_y(k ^-) + f_y(k ^+) + M\eps (s_0 - s_1))/2
$
which is strictly positive.

The minimum possible area is attained when the function has the shape of an inverted hat whose lines attain the maximum slope.
That is, a combination of two line segments.
One that starts $(0, f_y(k ^-))$ and has slope $-M$, and another that
ends at $(\eps(s_0 - s_1), f_y(k ^+))$ and has slope $+M$. 
Differently the hat function,
the intersection $(x^{**},y^{**})$ of this inverted hat function may or may not be above the x-axis. That is, $y^{**}$ may be negative, but $f_{n^*}$ is always positive.
In that case, we simply set the function to zero in the region where it would be negative. 
Call this function $\underline{f}_{n^*}$.

The intersection occurs at
\[
x^{**} = \frac{f_y(k ^-) - f_y(k ^+) + M \eps (s_0-s_1) }{2M}.
\]
Note that $x^{**}$ is always such $x^{**} \geq 0$ because $\eps \geq \eps_1$. The y-value of the intersection is
\[
y^{**} = \frac{f_y(k ^-) + f_y(k ^+) - M \eps (s_0-s_1) }{2M}.
\]
and this is positive as long as $\eps \leq \eps_2$ where $\eps_2 = \frac{|f_y(k ^+) + f_y(k ^-)|}{M (s_0-s_1) } $. Note also that $\eps_1 < \eps_2$. 

For $\eps_1 \leq \eps \leq \eps_2$, the minimum area is
\begin{gather*}
\underline{A}(\eps) = \int_{0}^{\eps(s_0-s_1)} \underline{f}_{n^*} (v) ~dv
\\
=(-1/4M) \left[ M^2 \eps^2 s_0^2 - 2 M^2 \eps^2 s_0 s_1 + M^2 \eps^2 s_1^2 - 2 M \eps f_y(k ^-) s_0 \right.
\\
\left.
+ 2 M \eps f_y(k ^-) s_1 - 2 M \eps f_y(k ^+) s_0 + 2 M \eps f_y(k ^+) s_1 - f_y(k ^-)^2 + 2 f_y(k ^-) f_y(k ^+) - f_y(k ^+)^2 \right]
\end{gather*}

The function $\underline{A}(\eps)$ is strictly increasing with respect to $\eps$
over $\eps_1 \leq  \eps < \eps_2$. In fact, the derivative is
$
((s_0 - s_1)*(f_y(k ^-) + f_y(k ^+) - M \eps ( s_0 - s_1) ))/2
$
which is strictly positive once we take into account $\eps < \eps_2$. The function
$\underline{A}(\eps)$ is constant with respect to $\eps$ over $\eps \geq \eps_2$.

Therefore, we have characterized the maximum and minimum areas $\underline{A}(\eps)$ and $\overline{A}(\eps)$ for any given $\eps$. These areas are undefined if $\eps<\eps_1$, they are equal if $\eps=\eps_1$, they are strictly increasing wrt $\eps$
and $\underline{A}(\eps) \leq \overline{A}(\eps)$
for $\eps \in (\eps_1,\eps_2)$. For $\eps \geq \eps_2$, $\overline{A}(\eps)$ continues to grow wrt $\eps$ but $\underline{A}(\eps)$ stays constant at $\underline{A}(\eps_2)$.
The expression for $\underline{A}(\eps_2)$ is
$(f_y(k ^-)^2 + f_y(k ^+)^2)/2M$.
Finally, we define the partially identified set.

\textbf{Case I:} 
If $B  < \underline{A}(\eps_1)=\overline{A}(\eps_1)$, there does not exist any function $f_{n^*}$ consistent with any elasticity $\eps$, so the set is empty.
The expression for $\underline{A}(\eps_1)=\overline{A}(\eps_1)$ is 
$( | f_y(k ^-) - f_y(k ^+) | (f_y(k ^-) + f_y(k ^+)) )/(2 M)$.

\textbf{Case II:} 
Suppose $B  \geq \underline{A}(\eps_1)$ and $B  < \underline{A}(\eps_2)$.
There is an interval range for $\eps$ such that for any $\eps$ in this interval there exists a function $f_{n^*}$ whose integral equals $B $. The minimum possible elasticity solves $\overline{A}(\underline{\eps}) = B $. That gives
\[
\underline{\eps} = 
\frac{2 \left[f_y(k ^+)^2/2 + f_y(k ^-)^2/2 + M ~ B  \right]^{1/2} 
- \left( f_y(k ^+) + f_y(k ^-) \right) }
{M(s_0 - s_1)}.
\]
The maximum possible elasticity solves $\underline{A}(\overline{\eps}) = B $. That gives
\[
\overline{\eps}
=\frac{-2 \left[f_y(k ^+)^2/2 + f_y(k ^-)^2/2 - M ~ B  \right]^{1/2} 
+ \left( f_y(k ^+) + f_y(k ^-) \right) }
{M(s_0 - s_1)}
\]

\textbf{Case III:} 
Suppose $B  \geq \underline{A}(\eps_2)$.
It is still possible to find a minimum elasticity that solves
$\overline{A}(\underline{\eps}) = B $.
However, for any elasticity $\eps \geq \underline{\eps}$ we have $\underline{A}(\eps)\leq B $, so $\overline{\eps}$ is infinity.

$\square$

\subsection{Tobit Regression - Proof of Identification Lemma \ref{lemma:tobit_robust}}
\label{sec:app:tobit}


\bigskip

\noindent \textbf{Part I:}

\bigskip

\indent

The probability limit of the Tobit MLE is $(e^*,b^*,s^*)$ and   $(b^*,s^*)$ correspond to a unconditional  distribution of $n^*$:
$G_{n^*}^*(n) = \mme\left[ \Phi\left( (n - X b^*)/s^* \right) \right]$,
where $G_{n^*}^* \in \mathcal{F}_{n^*}$.
The condition $F_y(y) = G_y^*(y)$ is equivalent to:
\begin{align*}
G_y^*(y) = G_{n^*}^*( y - e^* s_0 )
& = 
F_{n^*}( y - \eps s_0 ) = 
F_y (y)
~~\text{for } ~\forall y < k,
\\
G_y^*(y) =  G_{n^*}^* ( y -  e^* s_1 )
& = 
F_{n^*} ( y - \eps s_1  ) = 
F_y(y) 
~~\text{for } ~\forall y \geq k.
\end{align*}

Assumption \ref{aspt:tobit}(i) says that $\mathcal{F}_{n^*}$ satisfies Assumption \ref{aspt:cond_inv}.
Assumption \ref{aspt:cond_inv} says that the only values of $e \geq 0$ and $G_{n^*}\in \mathcal{F}_{n^*}$ that solve the equations
\begin{align*}
G_{n^*}( y - e s_0 )
& = 
F_{n^*}( y - \eps s_0 ) = 
F_y(y)
~~\text{for } ~\forall y < k,
\\
G_{n^*} ( y -  es_1 )
& = 
F_{n^*} ( y - \eps s_1  ) = 
F_y(y) 
~~\text{for } ~\forall y \geq k,
\end{align*}
are $e=\eps$ and $G_{n^*}=F_{n^*}$.
Therefore, $e^*=\eps$ and $G_{n^*}^*=F_{n^*}$.

\bigskip

\noindent \textbf{Part II:}

\bigskip

Start with a distribution $F_{n^*|X} \in \m{F}_{n^*|X}$.
By Assumption \ref{aspt:tobit}(ii), 
we have that 
$\mme\left[ F_{n^*|X}(n,X) \right] = \mme\left[  \Phi\left( (n - X \beta )/\sigma \right) \right] $
and
$(0,\beta,\sigma)$ is the solution to the minimization problem in \eqref{aspt:tobit2}.
We show below that this implies that 
$(e^*,b^*,s^*)=(\eps,\beta,\sigma)$, where $(e^*,b^*,s^*)$ is the probability limit of the Tobit MLE.

We show that the minimization problem in Equation \ref{aspt:tobit2} maps to the mid-censored Tobit maximization problem. 
\begin{align}
\min_{\delta, \theta, s} &   \frac{ 1-B }{2} 
\left\{ 
\log(s^2) + \frac{1}{s^2} \mme\left[ \left(n^* - D \delta - X \theta \right)^2 | n^* \not \in [\underline{n}, \overline{n}] \right]
\right\}
\notag
\\
& -
B \mme\left[
\log\left( B_N(X,\delta,\theta,s) \right) | n^* \in [\underline{n}, \overline{n}]
\right],
\label{eq:proof:rmk2:min1}
\\
\Longleftrightarrow & 
\notag
\\
\max_{\delta, \theta, s} &   - \frac{ 1-B }{2} 
\left\{ 
\log(s^2) + \frac{1}{s^2} \mme\left[ \left(n^* - D \delta - X \theta \right)^2 | n^* \not \in [\underline{n}, \overline{n}] \right]
\right\}
\notag
\\
& 
+
B \mme\left[
\log\left( B_N(X,\delta,\theta,s) \right) | n^* \in [\underline{n}, \overline{n}]
\right]
\notag
\\
\Longleftrightarrow & 
\notag
\\
\max_{\delta, \theta, s} &   ~( 1-B ) 
\left\{ 
\mme\left[\left. \log \left( \frac{1}{s} \phi\left( \frac{n^* - D \delta - X \theta}{s} \right) \right) \right| n^* \not \in [\underline{n}, \overline{n}] \right]
\right\}
\notag
\\
& 
+
B \mme\left[
\log\left( B_N(X,\delta,\theta,s) \right) | n^* \in [\underline{n}, \overline{n}]
\right]
\notag
\\
\Longleftrightarrow & 
\notag
\\
\max_{\delta, \theta, s} ~\mme & \left\{    
  \mmi\{ n^* < \underline{n} \} \log \left( \frac{1}{s} \phi\left( \frac{n^*  - X \theta}{s} \right) \right)  \right.
\notag
\\
& +
\mmi\{ n^* > \overline{n} \} \log \left( \frac{1}{s} \phi\left( \frac{n^* -  \delta - X \theta}{s} \right) \right)   
\notag
\\
& 
+
\left. \mmi\{  \underline{n} \leq n^* \leq  \overline{n}   \}
\log\left( \Phi\left( \frac{k-\eps s_1 - \delta - X \theta }{s} \right) - \Phi\left( \frac{ k- \eps s_0 - X \theta }{s} \right) \right) 
\right\}.
\notag
\end{align}

Re-parametrize the  maximization problem using the one-to-one mapping:
\[
\left(
\begin{array}{c}
e
\\
b
\\
s
\end{array}
\right)
=
\left(
\begin{array}{c}
e
\\
b_0
\\
b_1
\\
\vdots
\\
b_d
\\
s
\end{array}
\right)
\longrightarrow
\left(
\begin{array}{c}
(e-\eps)(s_1-s_0)
\\
b_0+s_0(e-\eps)
\\
b_1
\\
\vdots
\\
b_d
\\
s
\end{array}
\right)
=
\left(
\begin{array}{c}
\delta
\\
\theta_0
\\
\theta_1
\\
\vdots
\\
\theta_d
\\
s
\end{array}
\right)
=
\left(
\begin{array}{c}
\delta
\\
\theta
\\
s
\end{array}
\right),
\]
where we use that $b=(b_0,b_1,\ldots, b_d)$ and $\theta=(\theta_0,\theta_1,\ldots, \theta_d)$.
The parameters $b_0$ and $\theta_0$ correspond to the intercept variable in $X$.
The maximization problem becomes:
\begin{align*}
\max_{e, b, s} ~\mme & \left\{    
  \mmi\{ n^* < \underline{n} \} \log \left( \frac{1}{s} \phi\left( \frac{n^* - s_0(e-\eps)  - X b}{s} \right) \right)  \right.
\\
& +
\mmi\{ n^* > \overline{n} \} \log \left( \frac{1}{s} \phi\left( \frac{n^* -  s_1(e-\eps)  - X b }{s} \right) \right)   
\\
& 
+
\left. \mmi\{  \underline{n} \leq n^* \leq  \overline{n}   \}
\log\left( \Phi\left( \frac{k-\eps s_1 - s_1(e-\eps)  - X b }{s} \right) - \Phi\left( \frac{ k - \eps s_0 - s_0(e-\eps) - X b }{s} \right) \right) 
\right\}.
\end{align*}
Use that $y=n^* - \eps s_0$ if $n^*<\underline{n} \Leftrightarrow y<k$,  $y=n^* - \eps s_1$ if $n^*>\overline{n} \Leftrightarrow y>k$,
and
$\underline{n} \leq n^* \leq  \overline{n} \Leftrightarrow y=k$ to get
\begin{align}
\max_{e, b, s} ~\mme & \left\{    
  \mmi\{ y < k \} \log \left( \frac{1}{s} \phi\left( \frac{y -  e s_0    - X b}{s} \right) \right)  \right.
\notag
\\
& +
\mmi\{ y > k \} \log \left( \frac{1}{s} \phi\left( \frac{y -  e s_1    - X b }{s} \right) \right)   
\notag
\\
& 
+
\left. \mmi\{  y=k  \}
\log\left( \Phi\left( \frac{k- e s_1    - X b }{s} \right) - \Phi\left( \frac{ k -  e s_0   - X b }{s} \right) \right) 
\right\},
\label{eq:proof:rmk2:max2}
\end{align}
which is the mid-censored Tobit likelihood maximization problem.

By the equivalence of the optimization problems and the one-to-one re-parametrization mapping, we have that the solution of \eqref{eq:proof:rmk2:min1} maps to the solution of \eqref{eq:proof:rmk2:max2}.
That is, $(\delta^*,\theta^*,s^*)=(0,\beta,\sigma)$ maps to $(e^*,b^*,s^*)=(\eps,\beta,\sigma)$. 

Therefore, $G_{n^*}^*(n) = \mme\left[ \Phi\left( (n - X b^*)/s^* \right) \right] = \mme\left[ \Phi\left( (n - X \beta)/\sigma \right) \right] = F_{n^*}(n)$
and
$G_{y}^*(y) = F_y(y)$.

$\square$

\bigskip

\subsection{Censored Quantile Regression - Proof of Identification Lemma \ref{lemma:clad}}
Call $D = \mmi\left\{ Q_{\tau} \left(y \mid X \right) \neq k  \right\}$.
Let $\beta(\tau)=[\beta_0(\tau), \beta_1(\tau), \ldots, \beta_d(\tau)]' $.
Define $\tilde\beta(\tau) = [\beta_0(\tau) + \eps s_0, \beta_1(\tau), \ldots, \beta_d(\tau), \eps(s_1-s_0)]' $.
Multiplying Equation 
\ref{eq:quant_y_min_max} by $D$ yields
\begin{equation}
   D  Q_{\tau} \left(y \mid X \right)
   =
   D \tilde{X} \tilde{\beta}(\tau).
\nonumber
\end{equation}
Pre-multiplying it by $\tilde{X}'$ and taking expectations leads to
\begin{gather}
   D  \tilde{X}' Q_{\tau} \left(y \mid X \right)
   =
   D \tilde{X}' \tilde{X} \tilde{\beta}(\tau)
\nonumber
   \\
   \mme \left[
   D  \tilde{X}' Q_{\tau} \left(y \mid X \right)
   \right]
   =
   \mme \left[
   D \tilde{X}' \tilde{X} \tilde{\beta}(\tau)
   \right]
\nonumber
   \\
   \tilde{\beta}(\tau) = 
   \mme \left[
   D \tilde{X}' \tilde{X}
   \right]^{-1}
   \mme \left[
   D  \tilde{X}' Q_{\tau} \left(y \mid X \right)
   \right].
   \label{eq:id_beta_tau}
\end{gather}

An infinite amount of data identifies the joint distribution of $(y,X)$. This identifies the function
$Q_{\tau} \left(y \mid X =x \right)$ for every $x$ in the support of $X$, and the joint distribution of 
$(y,Q_{\tau} \left(y \mid X \right), X,\tilde{X}, D )$. 
Therefore, $\tilde{\beta}(\tau)$ is identified by Equation \ref{eq:id_beta_tau}. 
Finally, $\eps = \tilde{\beta}_{d+1}(\tau)/(s_1 - s_0)$. $ \square $

\subsection{Concave Kinks}
\label{sec:app:convex:kinks}

\indent 

The discussion in the main text focuses on ``convex kinks'', where the tax rate increases at the kink point (the tax schedule is convex).
They are simply referred to as kinks. 
A possible extension is to consider ``concave kinks'', that is, when the tax rate decreases at the kink point (the tax schedule is concave).
The budget line has a kink at $Y=K$,
\begin{equation}
 C= \mmi\{ Y \leq K \}[ I_0 + (1 - t_0) Y] 
 + \mmi\{ Y > K \} \left[ I_1 + \left( 1- t_{1} \right) (Y - K) \right].
\end{equation}
 where $t_0 > t_1$ and $I_1 = I_0 + (1 - t_0)K$ ensures continuity.

The shape of an agent's indifference curve depends on the agent's type $N^*$.
As $N^*$ increases, the point of tangency of the highest indifference curve shifts to the right along the budget line;
we start with just one tangency point at $Y<K$, go to two tangency points $Y<K$ and $Y'>K$ for one value of $N^*$, and finally to one tangency point $Y'>K$.
The threshold level $\underline{N}$ is that marginal value of $N^*$ between the first and second regimes. 
The convention we adopt is to resolve the agent's indifference in the second regime towards the smaller value of $Y$, that is, less labor supply. 

The solution for $Y$ has two cases in terms of $N^*,$ whereas the convex kink case in the main text had three cases (Equation \ref{eq:levelsol-onekink}).
Optimal income depends on the value of agent heterogeneity with respect to the threshold value $\underline{N},$
\begin{equation}
Y = 
\left\{
\begin{array}{ccl}
N^* (1-t_0)^{\eps} & \text{, if} & 0 < N^* \leq \underline{N}
\\

N^* (1-t_1)^{\eps} & \text{, if} & \underline{N} < N^*.
\end{array}
\right.
\end{equation}

We find $\underline{N}$ by solving an indifference condition.
Let $\underline{Y} = \underline{N}(1 - t_0)^\eps$ be biggest tangency point before the kink, 
and let $\overline{Y} = \underline{N}(1 - t_1)^\eps$ be the smallest tangency point after the kink.
The indifference condition requires the utility at $\underline{Y}$ to be the same as the utility at $\overline{Y}$, 
\begin{align}
 & I_0 + (1-t_0) \underline{Y} - \frac{ \underline{N} }{1+1/\eps} \left( \frac{ \underline{Y} }{\underline{N}} \right)^{1+1/\eps}
 \notag
 \\
 & = I_1 + (1-t_1) \overline{Y} - \frac{ \underline{N} }{1+1/\eps} \left( \frac{ \overline{Y} }{\underline{N}} \right)^{1+1/\eps},
 \label{eq:indiff_convex_kink}
\end{align}
where 
$I_0 + (1-t_0) \underline{Y} $ is consumption when income equals $\underline{Y}<K$, and
$I_1 + (1-t_1) \overline{Y}$ is consumption when income equals $\overline{Y}>K$.
We obtain an expression for $\underline{N}$ by plugging in 
$\underline{Y} = \underline{N}(1 - t_0)^\eps$
and
$\overline{Y} = \underline{N}(1 - t_1)^\eps$
in Equation \ref{eq:indiff_convex_kink}: 
\begin{equation}\label{eq:nthreshold}
 \underline{N} = \dfrac{(1 + \eps) K \left[ (1 - t_0)
- (1 - t_1)\right]}{\left[
 (1 - t_0)^{\eps + 1}
- (1 - t_1)^{\eps + 1}
\right]}.
\end{equation}

 A continuous distribution of $N^*$ implies a distribution of $Y$ that has a gap $[\underline{Y},\overline{Y}]$ of zero mass around the kink point but is continuous otherwise.
 The limits of this gap are: $\underline{Y} = \underline{N}(1 - t_0)^\eps$ and $\overline{Y} = \underline{N}(1 - t_1)^\eps,$
 where $\overline{Y} > \underline{Y}$ because $t_0 > t_1$.
 The researcher observes the distribution of $Y$ and thus knows those two limits.
 Taking the difference of logs of $\overline{Y}$ and $\underline{Y}$, we obtain: 
 $\overline{y} - \underline{y} = \eps (s_1 - s_0)$, 
 where lower-case variables denote the log of upper-case variables, and $s_j = \log (1-t_j)$, $j=1,2$.
 Therefore, the elasticity is identified,
 \begin{align*}
 \eps = \frac{\overline{y} - \underline{y} }{s_1 - s_0}.
 \end{align*}
Without having to impose much structure on the distribution of $N^*$,
it is possible to non-parametrically identify the elasticity, unlike the case of a convex kink (Theorem 1 by \cite{blomquist2017a}).



\clearpage
\onehalfspacing

\numberwithin{equation}{section}
\numberwithin{lemma}{section}
\numberwithin{theorem}{section}

\newpage
\setcounter{page}{1}

\bigskip

\begin{center}
 \Large \uppercase{``Better Bunching, Nicer Notching''}

\ifid
\normalsize Marinho Bertanha, Andrew McCallum, Nathan Seegert
\fi

\end{center}

\bigskip

\section{Supplemental Appendix for Online Publication}
\label{app:supp}

\subsection{General Utility Maximization Problem with Multiple Kinks and Notches}
\label{sec:app:general_prob}

\indent

This section generalizes  Problem \ref{eq:util_saez} in the main text to  multiple kinks and notches, and next section presents the general solution.
The exercise brings new insights to the identification of the elasticity,
when compared to the particular solution in the case of one kink or notch presented in the main text. 
First, in a budget set with multiple kinks but no notches, the general solution is simply a combination of solutions local to each kink. 
The bunching intervals of consecutive kinks do not overlap 
(Equation \ref{eq:gen-sol}).
As a result, inference methods for the elasticity  that are valid in the case of one kink may still be used locally to each kink.

Second, a notch at $k$ creates an empty interval in the support of the distribution of $y$ right after $k$. 
Such an empty interval may or may not contain the next tax change point $k'>k$, depending on the value of $\eps$.
For example, eligibility for Medicaid benefits in the United States creates a sizeable notch that may overshadow the next tax change in the budget set of some individuals.
In this case, inference methods that focus on kinks without accounting for other notches may produce misleading conclusions about the elasticity.
Of course, both of these insights change if the model allows the elasticity to vary with ability.

To generalize the objective function in Problem \ref{eq:util_saez}, we update the budget set to have $J$ different tax regimes that change at cutoff points $0< K_1< \ldots < K_J$ on pre-tax labor income $Y$.
Each tax regime has income tax $t_j$ such that $0 \leq t_0 \leq t_1 \leq \ldots \leq t_J < 1$. There are two possible tax changes. 
A change in tax rate is a kink. A lump-sum tax change is called a notch.
Agent type $N^\ast$ maximizes utility $U(C,Y;N^\ast)$ as follows
\begin{eqnarray}
\max_{C,Y} & & C - \frac{N^*}{1 + 1/\varepsilon} \left(\frac{Y}{N^*} \right)^{1 + \frac{1}{\varepsilon}}
\label{eq:util_saez_gen}
\\
 & s.t. & C= \sum_{j=0}^J \mmi\{ K_j < Y \leq K_{j+1} \} \left[ I_j + \left( 1- t_{j} \right) (Y - K_j) \right],
\label{eq:bf_gen}
\end{eqnarray}
where 
$K_0=0$, $K_{J+1}=\infty$, $\mmi\{ \cdot \}$ is the indicator function, the solution is always on the budget frontier (Equation \ref{eq:bf_gen}),
and we assume the agent resolves indifference by choosing the smallest value of $Y$. 
The elasticity of income $Y$ with respect to $(1-t_j)$ is equal to $\eps$ when the solution is interior. 

The budget frontier is continuous except when there is a notch. 
The limit of the budget frontier when $Y \downarrow K_j$ is equal to $I_j$,
but equal to $I_{j-1} + \left( 1- t_{j-1} \right) (K_j - K_{j-1})$ when $Y \uparrow K_j$.
The size of the jump discontinuity at a notch location $K_j$ is equal to $I_j - I_{j-1} - \left( 1- t_{j-1} \right) (K_j - K_{j-1})$.
The intercepts $I_j$ and $I_{j-1}$ are assumed to be such that jump discontinuities at notches are negative.

\subsection{General Solution with Multiple Kinks and Notches}
\label{sec:app:general_prob_sol}

\indent 

Lemma \ref{lemma:sol} below provides a general solution to Problem \ref{eq:util_saez_gen} with any combination of kinks and notches. 

\begin{lemma}
\label{lemma:sol}
Define $\m{N} = \cup_{j=0}^J \left( K_j (1-t_j)^{-\eps}; K_{j+1} (1-t_j)^{-\eps} \right]$
as the set of $N^*$ values for which the indifference curves are tangent to the budget frontier.
The function 
$Y^*:\m{N} \to \mmr$,
$Y^*(N^{\ast}) = \sum_{j=0}^J \mmi\{ K_j (1-t_j)^{-\eps}< N^{\ast} \leq K_{j+1} (1-t_j)^{-\eps}\} N^{\ast} (1-t_j)^{\eps}$,
maps $N^*$ values to the $Y$ values corresponding to such tangency points.
Similarly, $C^*(N^{\ast})$ is consumption on the budget frontier (Equation $\ref{eq:bf_gen}$)
when $Y=Y^*(N^{\ast})$.
Let $C_j$ be the value of $C^*(N^{\ast})$ whenever $Y^*(N^{\ast})=K_j$, $j=1,\ldots, J$.
For a notch-point $K_j$, define the value of $N_j^I$ to be that of the first indifference curve tangent to the budget frontier on the right of $Y=K_j$, such that the utility level is equal to the utility of the notch-point $K_j$,
\begin{gather}
N_j^I = \min \bigg\{ N^{\ast} \in \m{N} ~:~ U(C_j,K_j) = U(C^*(N^{\ast}),Y^*(N^{\ast})) \bigg\}.
\label{eq:indiff}
\end{gather}

In the case of a kink, the bunching interval is defined as $[\underline{N}_{j}, \overline{N}_{j}]$,
where $\underline{N}_{j} = K_{j}(1-t_{j-1})^{-\eps}$,
and $\overline{N}_{j} = K_{j}(1-t_{j})^{-\eps}$.
In the case of a notch, the expression for $\underline{N}_{j}$ equals that of the kink case, 
but $\overline{N}_{j}$ changes to $N_j^I$.

Note that the bunching intervals of two consecutive kinks do not overlap, that is,
$K_{j}(1-t_{j})^{-\eps} < K_{j+1}(1-t_{j})^{-\eps}$.
The same is not true for a kink or a notch $K_{j+1}$ that comes right after a notch $K_{j}$,
because $N_j^I$ may be greater than $ K_{j+1}(1-t_{j})^{-\eps}$ depending on $\eps$.
In this case, $Y=K_{j+1}$ does not appear in the solution. To account for that, 
construct a subsequence $\{ j_l \}_{l=1}^L$ of $\{1,\ldots, J\}$ such that:
(i) $j_1=1$; and (ii) for $l \geq 2$, set $j_l$ to be the smallest $j$ such that 
$ \underline{N}_{j}>\overline{N}_{j_{l-1}}$.
Then, the solution to the maximization problem in (\ref{eq:util_saez_gen}) is given by
\begin{equation}
Y = 
\left\{
\begin{array}{ccl}
N^{\ast} (1-t_{j_1-1})^{\eps} & \text{, if} & 0 < N^{\ast} < \underline{N}_{j_1}
\\
K_{j_1} & \text{, if} & \underline{N}_{j_1} \leq N^{\ast} \leq \overline{N}_{j_1}
\\
N^{\ast} (1-t_{j_2 - 1})^{\eps} & \text{, if} & \overline{N}_{j_1} < N^{\ast} < \underline{N}_{j_2}
\\
\vdots & & 
\\
N^{\ast} (1-t_{j_L-1})^{\eps} & \text{, if} & \overline{N}_{j_{L-1}} < N^{\ast} < \underline{N}_{j_L}
\\
K_{j_L} & \text{, if} & \underline{N}_{j_L} \leq N^{\ast} \leq \overline{N}_{j_L}
\\
N^{\ast} (1-t_{J})^{\eps} & \text{, if} & \overline{N}_{j_{L}} < N^{\ast} < \infty.
\end{array}
\right.
\label{eq:gen-sol}
\end{equation}
\end{lemma}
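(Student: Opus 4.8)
The plan is to reduce Problem \ref{eq:util_saez_gen} to a one-dimensional maximization and then track which linear segment or which cutoff hosts the global maximizer as $N^*$ increases. Since utility is strictly increasing in $C$, the optimum lies on the budget frontier, so substituting \eqref{eq:bf_gen} reduces the problem to maximizing a scalar objective $V(Y;N^*)$ over $Y \geq 0$. On the closure of each tax segment $(K_j,K_{j+1}]$ the frontier is affine with slope $1-t_j$ while the disutility term is strictly convex, so $V(\cdot\,;N^*)$ is strictly concave there, and its first-order condition yields the tangency point $Y=N^*(1-t_j)^{\eps}$, which lies in $(K_j,K_{j+1}]$ exactly when $N^* \in (K_j(1-t_j)^{-\eps},\,K_{j+1}(1-t_j)^{-\eps}]$; when the tangency point lies to the right of (resp. left of) a segment, $V$ is strictly increasing (resp. decreasing) along that whole segment. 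I would collect these facts in a preliminary lemma, together with the observation that $V$ is continuous except for downward jumps at notch points, where $\lim_{Y\uparrow K_j}V>\lim_{Y\downarrow K_j}V$; by the stated tie-breaking convention, $Y=K_j$ is evaluated in regime $j-1$, i.e., at the higher pre-jump consumption level.

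The second step is an ordering lemma. Because $0\le t_0\le\cdots\le t_J<1$ and $K_1<\cdots<K_J$, the tangency intervals $(K_j(1-t_j)^{-\eps},\,K_{j+1}(1-t_j)^{-\eps}]$, $j=0,\dots,J$, are pairwise disjoint and increasing in $j$, and the gap between the $j$-th and $(j+1)$-th of them is exactly $(\underline{N}_{j+1},\overline{N}_{j+1}]$ with $\underline{N}_{j+1}=K_{j+1}(1-t_j)^{-\eps}$ and $\overline{N}_{j+1}=K_{j+1}(1-t_{j+1})^{-\eps}$. Combined with the monotonicity facts of Step 1, this shows that for fixed $N^*$ the objective $V(\cdot\,;N^*)$ is increasing on every segment to the left of $N^*$'s tangency interval and decreasing on every segment to its right; hence the only candidate global maximizers are the (unique) interior tangency point when $N^*\in\m{N}$ and the cutoff points $K_j$. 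If every cutoff were a kink, $V$ would be continuous and unimodal, and the conclusion is immediate: interior tangency $Y^*(N^*)$ on $\m{N}$, and bunching at $K_{j+1}$ on the gap $(\underline{N}_{j+1},\overline{N}_{j+1}]$.

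The remaining and principal difficulty is the interaction with notches, which break continuity of $V$ and may ``swallow'' later cutoffs. At a notch $K_m$ the agent compares $V(K_m;N^*)$ with $\sup_{Y>K_m}V(Y;N^*)$; the supremum is attained at the first tangency point to the right of $K_m$, on the segment just preceding the next bunching cutoff. For each fixed $Y>K_m$ the difference $V(Y;N^*)-V(K_m;N^*)$ is strictly increasing in $N^*$ (the marginal disutility of labor falls as ability rises), so $\sup_{Y>K_m}V(Y;N^*)-V(K_m;N^*)$ is strictly increasing in $N^*$, negative for small $N^*$ and eventually positive; it therefore crosses zero at a single ability level, which is precisely $N_m^I$ defined by \eqref{eq:indiff}. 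Thus $Y=K_m$ for $N^*\in[\underline{N}_m,N_m^I]$ and the agent moves to the tangency point to the right of $K_m$ for $N^*>N_m^I$; since $N_m^I=\overline{N}_m$, the first cutoff whose bunching interval begins strictly above $\overline{N}_m$ is, by construction, the next element of $\{j_l\}_{l=1}^L$, and every cutoff with $\underline{N}_j\le\overline{N}_m$ never enters the solution.

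I would finish by induction on $l$: assuming \eqref{eq:gen-sol} has been established for $N^*\le\overline{N}_{j_{l-1}}$, the single-crossing analysis above pins the solution down on $(\overline{N}_{j_{l-1}},\overline{N}_{j_l}]$ (interior tangencies on the tangency intervals lying in this range, bunching at $K_{j_l}$ on $[\underline{N}_{j_l},\overline{N}_{j_l}]$), and on the terminal piece $N^*>\overline{N}_{j_L}$ there is no further cutoff, so $Y=N^*(1-t_J)^{\eps}$. The construction of $\{j_l\}$ guarantees that these $N^*$-intervals tile $(0,\infty)$ with neither overlaps nor omissions. The main obstacle is exactly this global bookkeeping --- verifying that the single-crossing property at each notch holds against \emph{all} points to its right (not merely the adjacent segment), that the map $N^*\mapsto U(C^*(N^*),Y^*(N^*))$ is continuous so that the minimum in \eqref{eq:indiff} is attained, and that the resulting intervals in \eqref{eq:gen-sol} partition the ability space --- rather than any individual computation, which is routine given the closed form of the iso-elastic optimum.
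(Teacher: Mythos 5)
Your skeleton is a genuinely different route from the paper's. The paper proves the lemma by induction on the number of tax changes: it truncates the budget to $BF^{\bar{J}}$, shows that adding the $(\bar{J}+1)$-th segment can only affect agents whose optimum was at or beyond the new cutoff, and then does a case analysis (kink vs.\ notch, overlapping vs.\ non-overlapping notch interval). You instead work on the full frontier at once \textemdash\ per-segment strict concavity, the ordering of the tangency intervals, single crossing in $N^*$ at each notch \textemdash\ and induct over the subsequence $\{j_l\}$ in ability space. Steps 1 and 2, and the computation showing that $V(Y;N^*)-V(K_m;N^*)$ is strictly increasing in $N^*$ for fixed $Y>K_m$, are correct and are good building blocks.

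The gap is the step you defer as ``global bookkeeping'': the assertion that the zero crossing of $\sup_{Y>K_m}V(Y;N^*)-V(K_m;N^*)$ ``is precisely $N_m^I$'' defined by \eqref{eq:indiff}, together with the claim that every cutoff with $\underline{N}_j\le\overline{N}_m$ never enters the solution. This is not routine; it is the substantive content of the lemma in the multi-cutoff case. The supremum over $Y>K_m$ is not a supremum over tangency points only: it includes the corner at the next cutoff $K_{m'}$, which by left-continuity of the frontier is evaluated on the pre-$K_{m'}$ segment and so does not bear the later tax change. If $K_{m'}$ lies beyond the consumption-recovery income of the notch at $K_m$ (so consumption at the corner exceeds $C_m$), then $V(K_{m'};N^*)-V(K_m;N^*)$ is itself increasing in $N^*$ and can turn positive at an ability far below the tangency-indifference level $N_m^I$ \textemdash\ for instance when $K_{m'}$ carries a very large notch or a near-confiscatory marginal rate, so that all tangencies to its right are unattractive and $N_m^I$ is very large, while the corner starts dominating $K_m$ at moderate ability. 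In that configuration the maximizer for intermediate abilities is $Y=K_{m'}$, the crossing ability is not characterized by \eqref{eq:indiff}, and the ``skipped'' cutoff does appear in the solution, so your induction on $l$ would assign $Y=K_{j_l}$ or a tangency where the true optimum is the corner. To complete the argument along your lines you must explicitly rule the corner out \textemdash\ i.e., show that in the relevant ability range the best point to the right of a notch is a tangency, by comparing $V(K_{m'};\cdot)$ pairwise with $V(K_m;\cdot)$ and with the post-$K_{m'}$ tangencies using the same monotone-difference device \textemdash\ or impose a condition guaranteeing it. Note that this is exactly the configuration the paper isolates in Case IV of its induction (where its own exclusion argument is likewise phrased only for interior tangency points), so your proposal cannot simply wave at it: as written, the plan does not establish \eqref{eq:gen-sol}.
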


\begin{proof}
For every $N^* > 0$, there exists an unique solution on the budget frontier.
If the consumer is indifferent between two solutions, we assume the consumer takes the solution with less $Y$.
The proof is by induction over $\bar{J}=0, 1, \ldots, J$.
 Denote the budget frontier $BF^{\bar{J}}$ by
\begin{equation*}
C= \sum_{j=0}^{\bar{J}} \mmi\{ \bar{K}_j < Y \leq \bar{K}_{j+1} \} \left[ I_j + \left( 1- t_{j} \right) (Y - \bar{K}_j) \right].
\end{equation*}
where $\bar{K}_j=K_j$ for $j=0,1,\ldots, \bar{J}$ and $\bar{K}_{\bar{J}+1}=\infty$.

As we change the budget frontier from $BF^{\bar{J}}$ to $BF^{\bar{J}+1}$, 
$K_{\bar{J}+1}$ takes a finite value strictly greater than $K_{\bar{J}}$, and $K_{\bar{J}+2}$ is set to $\infty$.
If the solution to Problem \ref{eq:util_saez_gen} with budget frontier $BF^{\bar{J}}$ is such that $Y < K_{\bar{J}+1} < \infty$, then this is also the solution to Problem \ref{eq:util_saez_gen} with budget frontier $BF^{\bar{J}+1}$.
In fact, points on $BF^{\bar{J}}$ dominate points on $BF^{\bar{J}+1}$, and they coincide for $Y<K_{\bar{J}+1}$.

\bigskip

\textbf{Part I: } \textit{$\bar{J}=0$, solve Problem \ref{eq:util_saez_gen} with budget $BF^0$}.

This is a standard consumer maximization problem where the optimal choice for $Y$ occurs at the point the indifference curve is tangent to $BF^{0}$. 
Therefore, for $N^* > 0$, $Y=N^*(1-t_0)^{\eps}$.

\bigskip

\textbf{Part II: } \textit{$\bar{J}=1$, solve Problem \ref{eq:util_saez_gen} with budget $BF^1$}.

The budget frontier $BF^1$ has two segments $BF^1_0$ for $0<Y\leq K_1$, and $BF^1_1$ for $K_1 < Y $.
If $N^* < K_1(1-t_0)^{-\eps}$, then the solution of Part I, $Y=N^*(1-t_0)^{\eps} < K_1$, is also the solution in Part II.
It remains to find the solution for $N^* \geq K_1(1-t_0)^{-\eps}$. These solutions must lie on $BF^1$ for $Y \geq K_1$
because they strictly dominate those that lie to the left of $K_1$.

\textit{Case I : Suppose $K_1$ is a kink.} 

Assume $N^*$ is such that $K_1(1-t_0)^{-\eps} \leq N^* \leq K_1(1-t_1)^{-\eps}$. 
If the solution is interior to $BF^1_1$, then it must be at a tangent point in which case $Y=N^*(1-t_1)^{\eps}$. 
However, $Y=N^*(1-t_1)^{\eps} \leq K_1$, a contradiction because this $Y$ falls outside of the interior of $BF^1_1$. 
Therefore, if $N^*$ is such that $\underline{N}_1=K_1(1-t_0)^{-\eps} \leq N^* \leq K_1(1-t_1)^{-\eps}=\overline{N}_1$, then the solution is $Y=K_1$.
Suppose $N^* > \overline{N}_1$. Then, the solution is in the interior of $BF^1_1$, and it is equal to 
$Y=N^*(1-t_1)^{\eps}$.

\textit{Case II : Suppose $K_1$ is a notch.} 

There is a jump-down discontinuity in $BF^1$ at $K_1$, and $BF^1$ is continuous from the left. 
Consider the point $(C,Y)=(C_1,K_1)$ on $BF^1_0$.
Define $Y^D$ to be the value of $Y$ such that the corresponding $C$ value on $BF^1_1$ is equal to $C_1$.
The jump-down discontinuity creates a strictly dominated region on $BF^1_1$ because the utility of $(C_1,K_1)$
is strictly greater than the utility of any solution with $Y \in (K_1,Y^D)$. Indifference between $K_1$ and $Y^D$ is resolved towards $K_1$ by assumption. 
Therefore, we cannot have solutions to Problem \ref{eq:util_saez_gen} with budget $BF^1$ such that $Y \in (K_1,Y^D]$.

Define the point $\ti{N}_1^I$ as being the solution of Problem \ref{eq:util_saez_gen} with budget $BF^1$ (instead of $BF$).
This is the smallest $N^*$ for which Problem \ref{eq:util_saez_gen} with budget $BF^1_1$
has solution with utility equal to $U(C_1,K_1)$.

First, a solution $\ti{N}_1^I$ exists.
To see that, note that for small $N^*$, the tangent point $Y=N^*(1-t_1)^{\eps}$ along $BF^1_1$ falls in the dominated region $Y \in (K_1,Y^D]$, and the utility is less than $U(C_1,K_1)$; on the other hand, the utility at this tangent point increases with $N^*$, and it eventually equals $U(C_1,K_1)$. The solution is such that $\ti{N}_1^I \geq Y^D (1-t_1)^{-\eps} > K_1 (1-t_1)^{-\eps}$.

Second, the solution $\ti{N}_1^I$ is unique.
 To see that, solve for $N^*$ in the equation below.
\begin{gather*}
U(C_1,K_1) = U\left(I_1 + N^* (1-t_1)^{\eps+1} - K_1(1-t_1) ~,~ N^*(1-t_1)^{\eps} \right)
\end{gather*}
 where $C=I_1 + N^*(1-t_1)^{\eps+1} - K_1(1-t_1)$ is consumption on $BF^1_1$ when $Y=N^*(1-t_1)^{\eps}$.
Evaluating and rearranging the equality gives
\begin{gather*}
N^*(1-t_1)^{1+\eps} + \eps (N^*)^{-1/ \eps} (K_1)^{\frac{1+\eps}{\eps}} = (1+\eps) \left[ C_1 - I_1 + K_1(1-t_1) \right]
\end{gather*}
The solution is unique because the derivative of the right-hand side is strictly positive given $N^*>K_1(1-t_1)^{-\eps}$.
Note that $\ti{N}_1^I$ is the unique solution to Problem \ref{eq:util_saez_gen} when the budget is $BF^1$.

Call $\ti{Y}^I_1 = \ti{N}_1^I (1-t_1)^{\eps}$.
Suppose there is a solution to Problem \ref{eq:util_saez_gen} with budget $BF^1$ such that 
$Y^D < Y \leq \ti{Y}^I_1$. This solution is interior to budget $BF^1_1$, so we must have 
$Y=N^* (1-t_1)^{\eps}$ for some $N^*$. 
But such a solution cannot be a solution to Problem \ref{eq:util_saez_gen} with budget $BF^1$
because $Y \leq \ti{Y}^I_1$ and so dominated by $(C_1,K_1)$.
Therefore, we cannot have solutions to Problem \ref{eq:util_saez_gen} with budget $BF^1$ such that $Y \in (K_1, \ti{Y}^I]$.

It remains to characterize the solution when $N^*$ is such that $K_1(1-t_0)^{-\eps} \leq N^*$.
If $N^*$ is such that $\underline{N}_1 =K_1(1-t_0)^{-\eps} \leq N^* \leq \ti{Y}^I (1-t_1)^{-\eps} = \overline{N}_1$, the solution cannot be in the interior of $BF^1_0$
since $Y=N^*(1-t_0)^{\eps} \geq K_1$; it cannot be in $ (K_1, \ti{Y}^I]$ either. Assume it is in the interior of 
$BF^1_1$ with $Y>\ti{Y}_I$. Since it is interior, it satisfies $Y=N^*(1-t_1)^{\eps}$, but $N^* \leq \ti{Y}^I (1-t_1)^{-\eps} $
which makes $Y \leq \ti{Y}^{I}$, a contradiction. Therefore, the solution to Problem \ref{eq:util_saez_gen} with budget $BF^1$
when $N^* \in [\underline{N}_1 ; \overline{N}_1]$
is $Y=K_1$. Finally, suppose $N^*>\overline{N}_1$. Then, the solution is in the interior of $BF^1_1$, and it is equal to 
$Y=N^*(1-t_1)^{\eps}$.


\bigskip

\textbf{Part III: }
\textit{
Assume the solution of Problem \ref{eq:util_saez_gen} with budget $BF^{\bar{J}}$ and $1 \leq {\bar{J}} < J$
is as in Equation \ref{eq:gen-sol} with ${\bar{J}}$.
Show that \eqref{eq:gen-sol} with ${\bar{J}} +1$ solves 
Problem \ref{eq:util_saez_gen} with budget $BF^{\bar{J}+1}$}.

Consider Problem \ref{eq:util_saez_gen} with budget $BF^{\bar{J}}$ and solution \ref{eq:gen-sol}
with $L$ being $\bar{L}$.
If $N^*$ is such that $Y<K_{\bar{J}+1}<\infty$, then $Y$ also solves Problem \ref{eq:util_saez_gen} with budget $BF^{\bar{J}+1}$.
Therefore, the solution to Problem \ref{eq:util_saez_gen} with budget $BF^{\bar{J}+1}$ or budget $BF^{\bar{J}}$ coincide for those values of $N^*$.
Note also that, if $K_j$ is a notch and $j<j_{\bar{L}}$, then the value of $\overline{N}_j$ (defined in (\ref{eq:indiff})) does not change when the budget changes from $BF^{\bar{J}}$ to $BF^{\bar{J}+1}$. 
If $K_{j_{\bar{L}}}$ is a notch, then the value $\overline{N}_{j_{\bar{L}}}$ may change (case IV below).
In what follows, consider the last two budget segments of $BF^{\bar{J}+1}$: $BF^{\bar{J}+1}_{\bar{J}}$ and $BF^{\bar{J}+1}_{\bar{J}+1}$.

\bigskip

\textit{Case I : $K_{j_{\bar{L}}}$ is a kink, $K_{\bar{J}+1}$ is a kink}

In this case, ${j_{\bar{L}+1}} = {\bar{J}+1}$ because 
$\underline{N}_{\bar{J}+1} = K_{\bar{J}+1} (1- t_{\bar{J}} )^{-\eps} >\overline{N}_{j_{\bar{L}}}$,
so that ${\bar{J}+1}$ is the smallest $j$ such that $ \underline{N}_{j}>\overline{N}_{j_{\bar{L}}}$.
It is also true that ${j_{\bar{L}}} = {\bar{J}}$. 
To see that, note that consecutive intervals $[\underline{N}_{j},\overline{N}_{j}]$ never overlap for kinks
because $\overline{N}_{j} = K_j(1-t_j)^{-\eps} < K_{j+1}(1-t_j)^{-\eps} = \underline{N}_{j+1}$. 
The upper limit of a kink interval $j$ is strictly smaller than the lower limit of a notch interval $j+1$. 
However, the upper limit of a notch interval $j$ may be bigger than the lower limit of the next interval $j+1$.
Suppose ${j_{\bar{L}}} = {\bar{J}}$ were not true, that is, ${j_{\bar{L}}} < {\bar{J}}$.
 Then, any $j$ such that ${j_{\bar{L}}} < j \leq {\bar{J}}$ is not in the subsequence $\{ j_l \}$
because $K_{j_{\bar{L}}}$ is a notch, and its interval overlaps with the $j$ interval. But this is a contradiction with $K_{j_{\bar{L}}}$ being a kink point.

If $N^* < K_{\bar{J}+1} (1-t_{\bar{J}} )^{-\eps}$, then the solution \ref{eq:gen-sol} with budget $BF^{\bar{J}}$
is $Y<K_{\bar{J}+1}$, and $Y$ also solves
Problem \ref{eq:util_saez_gen} with budget $BF^{\bar{J}+1}$ for that same value of $N^*$. 
It remains to characterize the solution when $N^* \geq K_{\bar{J}+1} (1-t_{\bar{J}} )^{-\eps}$

Assume $N^*$ is such that 
$\underline{N}_{\bar{J}+1} = K_{\bar{J}+1} (1-t_{\bar{J}} )^{-\eps} 
\leq N^* \leq 
K_{\bar{J}+1} (1-t_{\bar{J}+1} )^{-\eps} = \overline{N}_{\bar{J}+1}$. 
As seen in Part II, Case I, the solution cannot be interior to $BF^{\bar{J}+1}_{\bar{J}+1}$.
The solution must be at $K_{\bar{J}+1}$.
Assume $N^* >\overline{N}_{\bar{J}+1}$. Then, the solution is interior to $BF^{\bar{J}+1}_{\bar{J}+1}$,
and it equals to $Y = N^* (1-t_{\bar{J}+1} )^{\eps}$.

\bigskip

\textit{Case II : $K_{j_{\bar{L}}}$ is a kink, $K_{\bar{J}+1}$ is a notch}

As seen in Part III, Case I, ${j_{\bar{L}}} = {\bar{J}}$. We also have ${j_{\bar{L}+1}} = {\bar{J}+1}$
because the $j$ interval $[\underline{N}_{j},\overline{N}_{j}]$ of a kink does not overlap with
the $j+1$ interval of a notch.

If $N^* < K_{\bar{J}+1} (1-t_{\bar{J}} )^{-\eps}$, then the solution \ref{eq:gen-sol} with budget $BF^{\bar{J}}$
is $Y<K_{\bar{J}+1}$, and $Y$ also solves
Problem \ref{eq:util_saez_gen} with budget $BF^{\bar{J}+1}$ for that same value of $N^*$. 
It remains to characterize the solution when $N^* \geq K_{\bar{J}+1} (1-t_{\bar{J}} )^{-\eps}$

Assume $N^*$ is such that 
$\underline{N}_{\bar{J}+1} = K_{\bar{J}+1} (1-t_{\bar{J}} )^{-\eps} 
\leq N^* \leq 
\overline{N}_{\bar{J}+1}$,
where $\overline{N}_{\bar{J}+1}$ is the solution of Problem \ref{eq:indiff} when the budget is $BF^{\bar{J}+1}$.
As seen in Part II, Case II, the solution $Y$ cannot be in $(K_{\bar{J}+1}, \overline{N}_{\bar{J}+1} ( 1 - t_{\bar{J}+1} )^{\eps}]$ or in the interior of $BF^{\bar{J}+1}_{\bar{J}+1}$. Therefore, the solution is $Y= K_{\bar{J}+1}$.
Assume $N^* >\overline{N}_{\bar{J}+1}$. Then, the solution is interior to $BF^{\bar{J}+1}_{\bar{J}+1}$,
and it equals to $Y = N^* (1-t_{\bar{J}+1} )^{\eps}$.

\bigskip

\textit{Case III : $K_{j_{\bar{L}}}$ is a notch, $\overline{N}_{j_{\bar{L}}} < \underline{N}_{\bar{J}+1}$}

For the notch $K_{j_{\bar{L}}}$, the solution $\overline{N}_{j_{\bar{L}}}$ to Problem \ref{eq:indiff} when the budget is $BF^{\bar{J}}$ does not change when the budget becomes $BF^{\bar{J}+1}$ precisely because $\overline{N}_{j_{\bar{L}}} < \underline{N}_{\bar{J}+1}$.
In this case, $j_{\bar{L} + 1} = \bar{J}+1$.
For $N^*$ such that $\overline{N}_{j_{\bar{L}}} < N^* < K_{\bar{J}+1} ( 1- t_{\bar{J}} )^{-\eps}$,
the solution \ref{eq:gen-sol} with budget $BF^{\bar{J}}$
is $Y<K_{\bar{J}+1}$, and $Y$ also solves
Problem \ref{eq:util_saez_gen} with budget $BF^{\bar{J}+1}$ for that same value of $N^*$.
It remains to characterize the solution when $N^* \geq K_{\bar{J}+1} (1- t_{\bar{J}} )^{-\eps}$.

Assume $K_{\bar{J} + 1}$ is a kink, and that $N^*$ is such that 
$\underline{N}_{\bar{J}+1} = K_{\bar{J}+1} (1-t_{\bar{J}} )^{-\eps} 
\leq N^* \leq 
K_{\bar{J}+1} (1-t_{\bar{J}+1} )^{-\eps} = \overline{N}_{\bar{J}+1}$. 
As seen in Part II, Case I, the solution cannot be interior to $BF^{\bar{J}+1}_{\bar{J}+1}$.
The solution must be at $K_{\bar{J}+1}$.
Assume $N^* >\overline{N}_{\bar{J}+1}$. Then, the solution is interior to $BF^{\bar{J}+1}_{\bar{J}+1}$,
and it equals to $Y = N^* (1-t_{\bar{J}+1} )^{\eps}$.

Assume $K_{\bar{J} + 1}$ is a notch, and that $N^*$ is such that 
$\underline{N}_{\bar{J}+1} = K_{\bar{J}+1} (1-t_{\bar{J}} )^{-\eps} 
\leq N^* \leq 
 \overline{N}_{\bar{J}+1}$,
where $\overline{N}_{\bar{J}+1}$ is the solution of Problem \ref{eq:indiff} when the budget is $BF^{\bar{J}+1}$.
As seen in Part II, Case II, the solution $Y$ cannot be in 
$(K_{\bar{J}+1}, \overline{N}_{\bar{J}+1} ( 1 - t_{\bar{J}+1} )^{\eps}]$ 
or in the interior of $BF^{\bar{J}+1}_{\bar{J}+1}$. Therefore, the solution is $Y= K_{\bar{J}+1}$.
Assume $N^* >\overline{N}_{\bar{J}+1}$. Then, the solution is interior to $BF^{\bar{J}+1}_{\bar{J}+1}$,
and it equals to $Y = N^* (1-t_{\bar{J}+1} )^{\eps}$.

\bigskip

\textit{Case IV : $K_{j_{\bar{L}}}$ is a notch, $\overline{N}_{j_{\bar{L}}} \geq \underline{N}_{\bar{J}+1}$}

The indifference value for $Y$ at $\overline{N}_{j_{\bar{L}}}$ is $Y^I_{j_{\bar{L}}} = \overline{N}_{j_{\bar{L}}} 
(1-t_{\bar{J}})^{\eps} \geq \underline{N}_{\bar{J}+1} (1-t_{\bar{J}})^{\eps} = K_{\bar{J}+1}$.
If $\overline{N}_{j_{\bar{L}}} = \underline{N}_{\bar{J}+1}$, the solution to Problem \ref{eq:indiff} when the budget is $BF^{\bar{J}}$ remains unchanged when the budget becomes $BF^{\bar{J}+1}$.
If $\overline{N}_{j_{\bar{L}}} > \underline{N}_{\bar{J}+1}$, then $Y^I_{j_{\bar{L}}} > K_{\bar{J}+1}$,
and the solution to Problem \ref{eq:indiff} when the budget is $BF^{\bar{J}}$
changes when the budget becomes $BF^{\bar{J}+1}$.
The value of $\overline{N}_{j_{\bar{L}}}$ increases 
 such that
the new indifference point satisfies $Y^I_{j_{\bar{L}}} = \overline{N}_{j_{\bar{L}}} (1-t_{\bar{J}+1})^{\eps}$.

There does not exist a $j$ such that $ \underline{N}_{j}>\overline{N}_{j_{\bar{L}}}$ because $K_{\bar{J}+1}$
is the last tax-change point available and $\underline{N}_{\bar{J}+1} \leq \overline{N}_{j_{\bar{L}}}$.
Therefore, when constructing the solution of Problem \ref{eq:util_saez_gen} with budget $BF^{\bar{J}+1}$, 
the last term in the subsequence $\{j_l\}$ remains $j_{\bar{L}}$. 

The point $K_{j_{\bar{L}}}$ is a notch, so Part II, Case II says that for $N^*$ such that 
$\underline{N}_{j_{\bar{L}}} = K_{j_{\bar{L}}} (1-t_{j_{\bar{L}}-1} )^{-\eps} 
\leq N^* \leq 
 \overline{N}_{j_{\bar{L}}}$,
 the solution $Y$ cannot be in 
$(K_{j_{\bar{L}}}, \overline{N}_{j_{\bar{L}}} ( 1 - t_{\bar{J} + 1 } )^{\eps}]$ 
or in the interior of $BF^{\bar{J}+1}_{\bar{J}+1}$. Therefore, the solution is $Y= K_{j_{\bar{L}}}$.
Assume $N^* >\overline{N}_{j_{\bar{L}}}$. Then, the solution is interior to $BF^{\bar{J}+1}_{\bar{J}+1}$,
and it equals to $Y = N^* (1-t_{\bar{J}+1} )^{\eps}$.
\end{proof}

\subsection{Friction Errors and Failure of the ``Polynomial Strategy''} 
\label{sec:app:friction_error}
\indent

This section presents a counterexample that illustrates the failure of a common identification strategy used in applied work to estimate the elasticity using kinks.
In a recent survey article, \cite{Kleven2016} summarizes the strategy which is employed in the literature to estimate the distribution of ${y_0}$. 
The ``polynomial strategy'' was first proposed by 
\cite{chetty2011} (Equations 14-15  and Figures 3-4)
and consists of fitting a flexible polynomial to an estimate of the PDF of $y$.
The polynomial regression excludes observations that lie in a range around the kink point. The researcher chooses the range based on the support of the distribution of friction errors.
The polynomial fit is then extrapolated to this excluded region as a way of predicting $f_{y_0}$.
The procedure is widely used in the bunching literature; see, for example,  
Figure 6 by \cite{bastani2014} ,
Figure 1 by \cite{devereux2014},
and Figure 4 by \cite{BestKleven2018}. 

First, we set the parameters of the model.
The true values are: $\varepsilon=1.5$ (elasticity);
$t_0=.2$ and $t_1=0.3$ (before and after tax rates);
kink-point $k=0$.
The bunching interval is $[\underline{n},\overline{n}]=[0.335,.535]$.
The distribution of the ability variable is assumed uniform, 
$n^* \sim U[-.565;1.435]$;
that is, the support is centered at $0.435$ and has length equal to 2.
The probability of bunching, or bunching mass $B$, is equal to 10\% in this example.
The friction error $e$ is also assumed uniformly distributed
$e \sim U[-0.5;0.5]$. 
The value of labor income observed by the researcher is $\ti{y} = y +e$, 
where $y$ is a function of $n^*$, $\eps$, $t_0$, and $t_1$,
as described in Equation \ref{eq:logsol-onekink}.

In the counterfactual scenario of no tax change, 
we have $\underline{n} = \overline{n}$, and the counterfactual income with friction error is denoted $\ti{y}_0$.
The counterfactual income without friction error is $y_0$.
Figure \ref{fig:friction_error:a} depicts the PDF of $\ti{y}$ and $\ti{y}_0$.

A common identification strategy used in applied work is to fit a polynomial to the PDF of $\ti{y}$ excluding observations in the neighborhood of the kink $k=0$, that corresponds to the support of the measurement error (i.e. $[-0.5;0.5]$).
The estimated bunching mass is the area between the PDF of $\ti{y}$ and the polynomial fit extrapolated to the excluded neighborhood around the kink.
Figure \ref{fig:friction_error:b} illustrates the procedure.
The figure shows that such strategy fails to identify the true bunching mass, even when the polynomial fit of 7th order is perfect, and we assume the researcher knows the support of $e$.

The last part of the estimation strategy uses the extrapolated polynomial to predict the counterfactual PDF of $y_0$.
Following Equation \ref{eq:bunch_j_saez}, identification of $\eps$ requires the counterfactual PDF of $y_0$, without measurement error.
Figure \ref{fig:friction_error:c} shows that the polynomial strategy fails to retrieve the PDF of $y_0$.
The PDF predicted by the polynomial regression does not integrate to one, and thus it is not a PDF.
If we divide the polynomial-based PDF in 
Figures \ref{fig:friction_error:b} and \ref{fig:friction_error:c}
by its integral, the PDF shifts up in the graphs.
The re-normalized PDF still misses the true $f_{y_0}$, and the underestimation of $B$ is larger than before. 

The polynomial strategy fails for two reasons:

\bigskip

\begin{enumerate}
 \item The PDF of $\ti{y}$ is not simply the PDF of ${y}$ plus the PDF of $e$ (Figure \ref{fig:friction_error:a}), but the convolution between the two PDFs. While $y_0$ and $e$ have uniform distributions, with a flat PDF,
 their convolution does not have a flat PDF. 
 As a result, extrapolating the polynomial to find the bunching mass and to predict the PDF of ${y}_0$ is misleading;

 \item The counterfactual distribution required for identification of the elasticity is the PDF of ${y}_0$, and not the PDF of $\ti{y}_0$ 
 (Equation \ref{eq:bunch_j_saez}).
Moreover, even if friction errors were not a problem, it is not possible to use the distribution of $y$ to back out the distribution of $y_0$ for values of $y_0$ inside $[k,k + (s_0-s_1) \eps]$. 
The shape of the distribution of $y_0$ is unidentified when $n^*$ falls in the bunching interval (Figure \ref{fig:imposs}).

\end{enumerate}

\newpage


\setcounter{figure}{0}
\renewcommand{\thefigure}{B.\arabic{figure}}

\begin{figure}[h!]
\caption{Counterexample where ``Polynomial Strategy'' Fails}
\label{fig:friction_error}
\centering

\begin{subfigure}[b]{0.49\textwidth}
\caption{\centering Distribution of Income with Friction Error}
\includegraphics[width=1\linewidth]{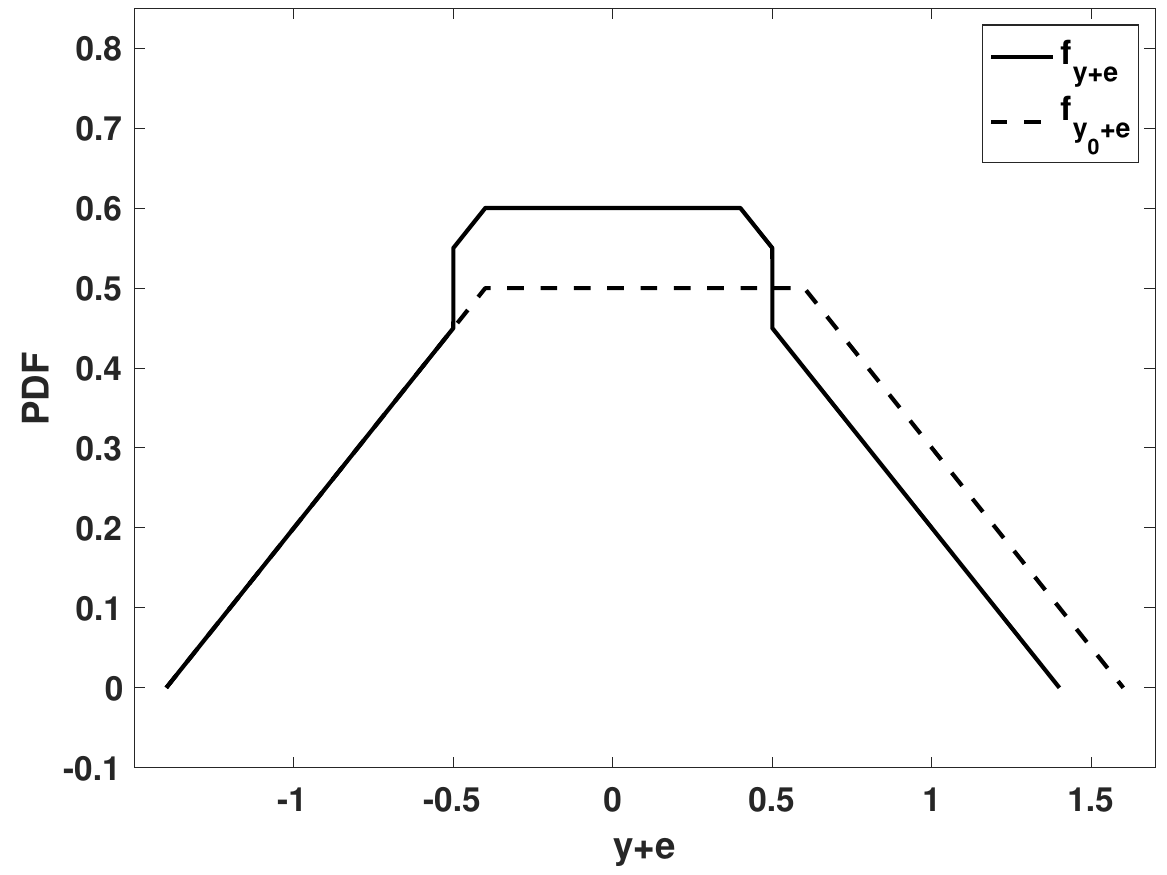}
\label{fig:friction_error:a}
\end{subfigure}
\hfill
\begin{subfigure}[b]{0.49\textwidth}
\centering
\caption{\centering Estimation of Bunching Mass}
\includegraphics[width=1\linewidth]{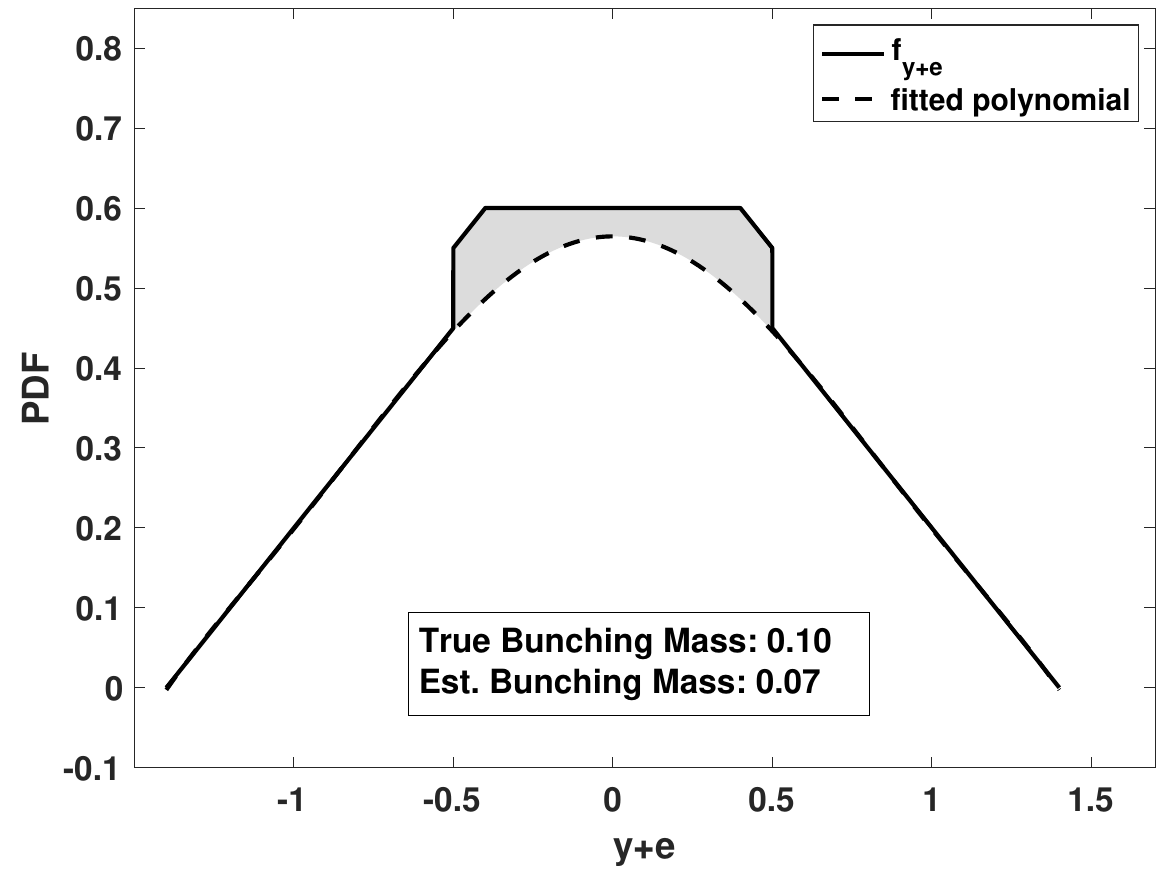}
\label{fig:friction_error:b}
\end{subfigure}

\medskip

\begin{subfigure}[b]{0.49\textwidth}
\centering
\caption{\centering Counterfactual Distribution of Income without Friction Error}
\includegraphics[width=1\linewidth]{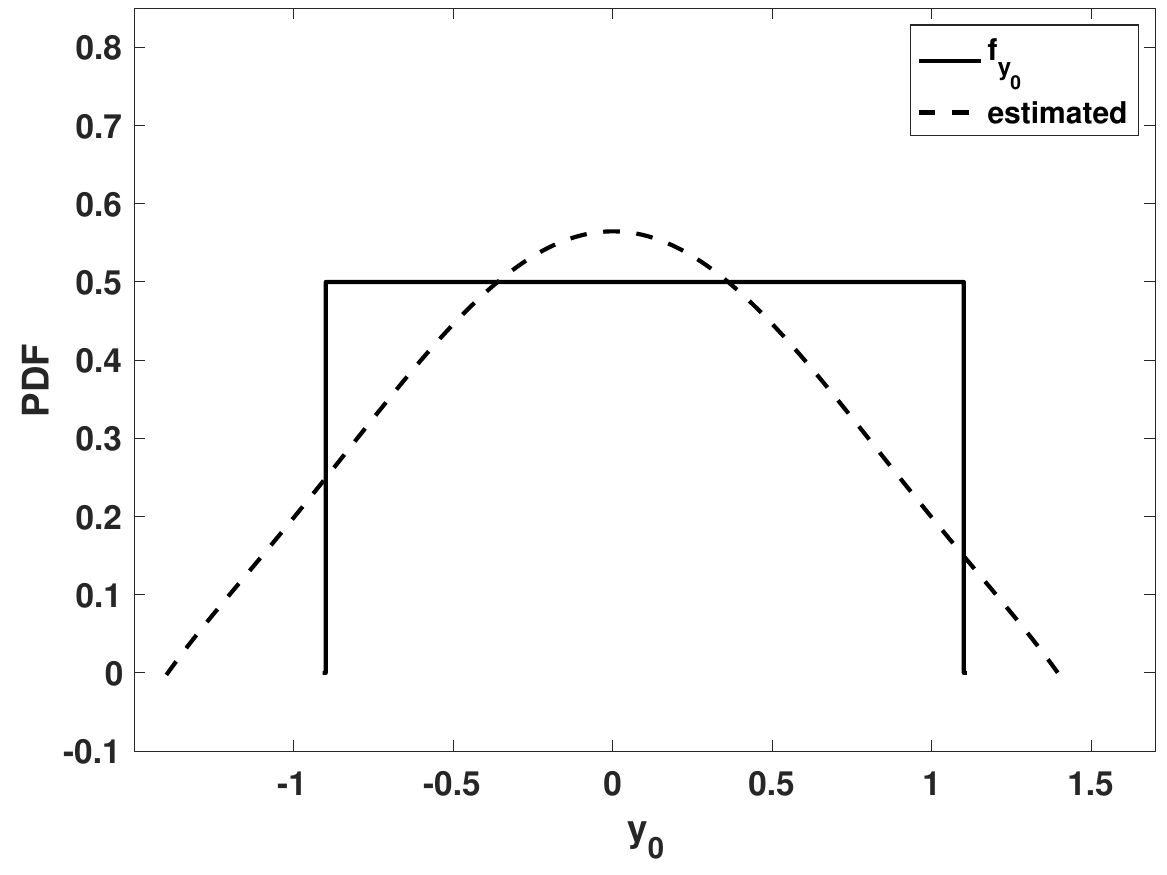}
\label{fig:friction_error:c}
\end{subfigure}
\hfill
\caption*{\footnotesize \textit{Notes:} 
The population model of this example has $\varepsilon=1.5$, 
$t_0=.2$, and $t_1=0.3$ at kink $k=0$.
The distribution of ability is assumed uniform, 
$n^* \sim U[-.565;1.435]$. 
The probability of bunching is equal to 10\%,
and the distribution of the friction error is
$e \sim U[-0.5;0.5]$. 
The researcher observes $\ti{y} = y +e$, 
where $y$ is a function of $n^*$, $\eps$, $t_0$, and $t_1$,
as described in Equation \ref{eq:logsol-onekink}.
Figure \ref{fig:friction_error:a} displays the PDF of $\ti{y}$ and 
$\ti{y}_0$.
Figure \ref{fig:friction_error:b} displays the fitted 7th-order polynomial to the PDF of $\ti{y}$ using observations in $(-\infty, -0.5) \cup (0.5, \infty)$.
The bunching mass is estimated by the integral of the difference between $f_{\ti{y}}$ and the fitted polynomial, inside the excluded region.
The polynomial strategy understimates the true bunching mass, and does not retrieve the PDF of $y_0$ (Figure \ref{fig:friction_error:c}).

\vspace{-29pt}} 
\end{figure}

\subsection{Examples of Identifying Restrictions on the Family of Unobserved Distributions} 
\label{sec:app:restriction_fn}

\indent 

This section brings three examples of restrictions imposed on $\m{F}_{n^*}$ that allows for point identification of the elasticity in the case of kinks. 

\begin{example} 
\label{example1}
\cite{Saez2010} implicitly restricts $\m{F}_{n^*}$ when using a trapezoidal approximation to solve the integral in Equation \ref{eq:bunch_j_saez}, in levels rather than in logs
(Saez's Equation 4 on page 186). That is,
\begin{gather}
B =\int_{K}^{K + \Delta Y} f_{Y_0} \left(u \right) ~ du 
\cong
\left(\frac{ f_{Y_0}( K + \Delta Y ) + f_{Y_0}( K ) }{2}\right)
\Delta Y,
\label{eq:bunch_j_saez_trap}
\end{gather}
where $\Delta Y = K \left[  ((1-t_0)/(1-t_1))^{\eps} - 1 \right]$.
A sufficient condition for the approximation to be true is to assume  $f_{Y_0}(u)$
is an affine function of $u$ for values of $u \in \left[ K, K + \Delta Y \right]$.
Given that $Y_0 = N^* (1-t_0)^{\eps}$, the PDF $f_{N^*} (u) = f_{Y_0}(u(1-t_0)^{\eps})(1-t_0)^{\eps}$ is restricted to be an affine function of 
$u$ inside the interval 
$\left[K (1-t_0)^{-\varepsilon}, K (1-t_1)^{-\varepsilon} \right]$.
This is equivalent to restricting $f_{n^*}$
to have an exponential shape within $[k - \eps s_0, k - \eps s_1]$.

The rest of Saez's identification strategy uses the fact that $f_{Y_0}(K)=f_Y(K^-)$, 
and  $f_{Y_0}(K + \Delta Y) = f_Y(K^+)((1-t_1)/(1-t_0))^{\eps}$,
where $f_Y$ is the PDF of the continuous portion of the distribution of $Y$, $f_Y(K^+)$  denotes the side limit
$\lim_{u \downarrow K} f_Y(u)$, and $f_Y(K^-)$ denotes $\lim_{u \uparrow K} f_Y(u)$. 
Substituting these into Equation \ref{eq:bunch_j_saez_trap},
\begin{gather}
B 
\cong
\frac{1}{2}\left(  
f_Y(K^+) \left( \frac{ 1-t_1 } { 1-t_0 } \right)^{\eps}  
+ f_Y(K^-) 
\right) 
K 
\left[\left( \frac{ 1-t_0 } { 1-t_1 } \right)^{\eps} -1 \right],
\label{eq:bunch_j_saez_eps}
\end{gather}
which is Equation 5 by \cite{Saez2010}.
It is then possible to solve implicitly for $\eps$ as a function
of the side limits of $f_Y$, the tax rates, the kink point, and the bunching mass. 
\end{example}

\begin{example}
\label{example2}
The derivation by \cite{chetty2011} of Equation 6 on page 761 
assumes that the PDF 
$f_{Y_0}$ is constant inside the bunching interval $[K,K+\Delta Y]$.
This is equivalent to assuming that $N^*$ is uniformly distributed in that region and thus restricts the class $\m{F}_{n^*}$.
For some scalar $a$, assume $F_{Y_0}(u)=a+f_{Y_0}(K) u$
for $u \in [K,K+ \Delta Y]$, so that the PDF of $Y_0$ is constant and
equal to $f_{Y_0}(K)$ in the bunching interval.
Then,
\begin{align}
B = & \int^{K + \Delta Y}_{K} f_{Y_0}(u) ~ du 
= F_{Y_0}(K + \Delta Y) - F_{Y_0}(K)
\nonumber
\\
= & f_{Y_0}(K) \Delta Y 
= f_{Y_0}(K)
K 
\left[\left( \frac{ 1-t_0 } { 1-t_1 } \right)^{\eps} -1 \right]
\nonumber
\\
\cong & f_{Y_0}(K)
K
\eps \ln \left( \frac{ 1-t_0 } { 1-t_1 } \right)
\nonumber
\\
\eps \cong & \frac{B/f_{Y_0}(K)}{K \ln \left( \frac{ 1-t_0 } { 1-t_1 } \right)},
\label{eq:bunch_j_chetty}
\end{align}
where the second to last approximate equality uses  
$[ ( 1-t_0 )/ ( 1-t_1 ) ]^{\eps} -1 $
 $\cong \ln  [ ( 1-t_0 )/ ( 1-t_1 )  ]^{\eps}$
 for small tax changes;
 and the last approximate equality is 
Equation 6 by \cite{chetty2011}. 
The rest of their identification procedure relies on the polynomial strategy to obtain  $B$ and $f_{Y_0}(K)$, as described in Section  \ref{sec:app:friction_error} of the supplement.

The constant PDF assumption on $f_{Y_0}$ is more restrictive than the affine PDF assumption that justifies Saez's trapezoidal approximation.
The trapezoidal approximation allows for 
$f_{Y_0}$ to have a non-zero slope in the bunching interval, 
whereas the constant PDF assumption does not.
\end{example}

The next example verifies Assumption \ref{aspt:cond_inv} from the main text in the class of Gaussian distributions.

\begin{example}
Consider the family of normal distributions with unknown mean and variance, that is,
$\mathcal{F}_{n^*} = \left\{ G_{n^*}(n;\mu,\sigma^2) = \Phi\left( \frac{n - \mu}{\sigma} \right),~ \mu \in \mmr,~ \sigma^2 \in \mmr_+ \right\}$,
where $\Phi$ denotes the standard normal CDF.
Pick an arbitrary ${F}_{n^*}(n) = \Phi\left( \frac{n - \mu^*}{\sigma^*} \right)$.
Equation \ref{eq:bunch_saez_param1} says that 
$\Phi\left( \frac{u - e s_0 - \mu}{\sigma} \right)
=
\Phi\left( \frac{u - \eps s_0 - \mu^*}{ \sigma^*} \right)$
 for  $\forall u <k$, where the right-hand side (RHS) is a known function of $u$ for $u<k$ (because that equals $F_y(u)$).
 Taking the inverse $\Phi^{-1}$ from both sides of the equation gives 
$  (u - e s_0 - \mu)/\sigma
=
  (u - \eps s_0 -   \mu^*)/\sigma^* $,
  where both sides are affine functions of $u<k$ and the RHS is a known function of $u$.
  They must have the same slope,  $1/\sigma = 1/\sigma^*$, and the same intercept, $-(e s_0 + \mu)/\sigma = -(\eps s_0 + \mu^*)/\sigma^*$.
  Apply the same steps to Equation \ref{eq:bunch_saez_param2} to obtain $1/\sigma = 1/\sigma^*$ and  $-(e s_1 + \mu)/\sigma = -(\eps s_1 + \mu^*)/\sigma^*$.
  These equations solve uniquely for $\sigma=\sigma^*$, $\mu=\mu^*$, and $e=\eps$.
 Therefore, the family of normal distributions with unknown mean and variance satisfies Assumption \ref{aspt:cond_inv}.
\end{example}

\subsection{Implementation of Censored Quantile Regressions}
\label{sec:supp:clad:implement}

\indent 

The optimization problem in Equation \ref{eq:opt_quan_min_max} is computationally difficult.
For the left (or right) censored case, \cite{chernozhukov2002} proposed a fast and practical estimator that consists of three steps.
First, you fit a flexible Probit model that explains the probability of no censoring; 
then, you select observations whose values of $X$ lead to a predicted probability of no censoring that is greater than $1- \tau$.
Second, you fit a quantile regression of $y$ on $X$ using the selected observations in the first step;
then, you select observations whose values of $X$ lead to a predicted quantile that is greater than $k$.
Third, repeat the second step using the observations selected at the end of the second step.
\cite{chernozhukov2002} demonstrate consistency and asymptotic normality of their three-step estimator.
Moreover, they show that the standard errors computed by the quantile regression in the third step are valid.

Our case of middle censoring requires a straightforward modification of the method proposed by \cite{chernozhukov2002}. Inspired by their algorithm, we propose the following implementation steps.
\begin{enumerate}
 \item Create dummies $\delta_i^- = \mmi\{y_i < k \}$ (not censored, left of $k$)
 and $\delta_i^+ = \mmi\{y_i > k \}$ (not censored, right of $k$).
 Fit two Probit models to estimate $\mmp[\delta_i^+ | X_i] = \Phi(X_i g^+)$
 and $\mmp[\delta_i^- | X_i] = \Phi(X_i g^-)$, where
 $\Phi$ denotes the cdf of a standard normal distribution,
 and $g^\pm$ are vectors of parameters.
 You may use powers and interactions of $X_i$ to make this stage as flexible as possible.
 Select two subsamples as follows.
 Compute the 10th quantile of the empirical distribution of 
 $\Phi(X_i \hat{g}^+) - (1-\tau)$
 conditional on 
 $\Phi(X_i \hat{g}^+) > 1-\tau$.
 Let $\kappa_0^+(\tau)$ be the 10th quantile of that distribution.
 The first subsample is
 $J_0^+(\tau) = \{i: \Phi(X_i \hat{g}^+)>1-\tau+\kappa_0^+(\tau) \}$.
 The second subsample is 
 $J_0^-(\tau) = \{i: \Phi(X_i \hat{g}^-)>\tau+\kappa_0^-(\tau) \}$,
 where $\kappa_0^-(\tau)$ is the 10th quantile of 
 the empirical distribution of $\Phi(X_i \hat{g}^-) - \tau$
 conditional on 
 $\Phi(X_i \hat{g}^-)>\tau$.
 Create a dummy $W_i^0 = \mmi\{i \in J_0^+(\tau) \}$.
 
 \item Fit the quantile regression model $Q_{\tau}(y_i|X_i,W_i^0) = X_i b(\tau) + W_i^0 \delta(\tau)$ using observations in $J_0^-(\tau) \cup J_0^+(\tau)$.
 Use the estimates of this quantile regression, that is $\hat{b}^0(\tau)$
 and $\hat{\delta}^0(\tau)$, to create two subsamples as follows.
 The first subsample is
 $J_1^+(\tau) = \{i: X_i \hat{b}^0(\tau) + \hat{\delta}^0(\tau) > k +\kappa_1^+(\tau) \}$,
 where $\kappa_1^+(\tau)$ is the 3rd quantile of the empirical distribution
 of $X_i \hat{b}^0(\tau) + \hat{\delta}^0(\tau) - k$ conditional on $X_i \hat{b}^0(\tau) + \hat{\delta}^0(\tau) > k$.
 The second subsample is 
 $J_1^-(\tau) = \{i: X_i \hat{b}^0(\tau) < k + \kappa_1^-(\tau) \}$,
 where $\kappa_1^-(\tau)$ is the 97th quantile of the empirical distribution
 of $X_i \hat{b}^0(\tau) - k$ conditional on $X_i \hat{b}^0(\tau) < k$.
 Create a dummy $W_i^1 = \mmi\{i \in J_1^+(\tau) \}$.
 
 \item Fit the quantile regression model $Q_{\tau}(y_i|X_i,W_i^1) = X_i b(\tau) + W_i^1 \delta(\tau)$ using observations in $J_1^-(\tau) \cup J_1^+(\tau)$ to obtain estimates $\hat{b}^1(\tau)$
 and $\hat{\delta}^1(\tau)$.
 The elasticity estimator is $\hat\eps = \hat\delta^1(\tau)/(s_1-s_0)$.

 \end{enumerate}
 
\clearpage
\subsection{Estimates with the Filtering Method of Saez (2010)} 
\label{sec:supp:saez_filter}

\indent 

In this section, we recompute the estimates of Table \ref{tbl:tobit} using a different filtering method.
Specifically, we employ the procedure used by \cite{Saez2010} to
obtain the bunching mass and the side limits of the distribution of income without friction error $Y$.
The procedure implicitly defines a way to estimate the unobserved distribution of $Y$ given the observed distribution of income with friction error $\tilde{Y}$.
We refer the reader to Figure 2 by \cite{Saez2010}.

The first step is to construct a histogram-based estimate 
of the PDF $f_{\tilde Y}$, and then average $f_{\tilde Y}$ for 
$\tilde Y \in [K-2\delta , K-\delta] \cup [K+\delta, K+2\delta]$,
where $K=8,580$ is the kink point, and $\delta = 1,500$ defines the excluded region.
Call that average $\bar{f}$.
The bunching mass is estimated by the area between two curves, 
$f_{\tilde Y}$ and $\bar{f}$.
The continuous portion of $f_Y$ equals $f_{\tilde Y}$, except for the excluded region $[K-\delta , K+\delta]$, where
$f_Y$ equals $\bar f$. 
We obtain the CDFs $F_{Y}$ and $F_{\tilde Y}$ from their PDF estimates.
Finally, we rely on $Y=F_{Y}\left(F_{\tilde Y}^{-1} (\tilde Y) \right)$ to transform $\tilde Y$ into $Y$.
Estimates are reported in Table \ref{tbl:apptobit} below.\footnote{
 Saez's filtering procedure is a particular case of the so-called ``polynomial strategy'' used by \cite{chetty2011}:
 Saez fits a flat line but the ``polynomial strategy'' allows for higher-order polynomials. 
 We discuss the problems of the ``polynomial strategy'' in Section \ref{sec:app:friction_error} of this supplement. 
 Although our proposed filtering strategy is not a general solution to the filtering problem, it does require less stringent conditions than Saez's filter
 (see Section \ref{sec:application:data} for details).
 Therefore, differences in estimates between Tables \ref{tbl:tobit} and \ref{tbl:apptobit} are less concerning than differences that could arise by using a more general filtering procedure than ours. 
} 

\setcounter{table}{0}
\renewcommand{\thetable}{B.\arabic{table}}
\clearpage
\begin{landscape}

\begin{table}
	\caption{Estimates Using U.S. Tax Returns 1995--2004}
\label{tbl:apptobit}
	\centering
	\scalebox{0.95}{
\begin{tabular}{lccccccc|lc}
\hline\hline
 & (1) & (2) & (3) & (4) & (5) & (6) & (7) & \multicolumn{2}{c}{(8)} \\ 
Statistical Model & Trapezoidal & Theorem \ref{theo_partial} & Theorem \ref{theo_partial} & Tobit & Tobit & Tobit & Tobit & \\ 
 & Approximation & Bounds & Bounds & Full Sample & Trunc. 75\% & Trunc. 50\% & Trunc. 25\% & \multicolumn{2}{c}{Sample} \\ 
 & & M = 0.5 & M = 1 & & & & & \multicolumn{2}{c}{details} \\ \hline 
\textit{All} & & & & & & & & Obs. & 188.3m \\ 
\hspace{2mm} Elasticity $\left(\varepsilon\right)$ & 0.226 & $\left[ 0.225, 0.250 \right]$ & $\left[0.211, 0.283 \right]$ & 0.120 & 0.177 & 0.182 & 0.200 & Avg. & \$53.5k \\ 
 & & & & (0.0001) & (0.0001) & (0.0001) & (0.0002) & Std. & \$64.6k \\ 
 & & & & & & & & \\ 
 \textit{Self-employed} & & & & & & & & Obs. & 33.4m \\ 
\hspace{2mm} Elasticity $\left(\varepsilon\right)$ & 0.934 & $\left[0.686,1.183\right]$ & $\left[0.612,\infty \right]$ & 0.610 & 0.809 & 0.805 & 0.825 & Avg. & \$60.7k \\ 
 &  & & & (0.0005) & (0.0007) & (0.0008) & (0.0008) & Std. & \$77.2k \\ 
 \textit{Self-employed,} & & & & & & & & & \\ 
 \textit{married} & & & & & & & & Obs. & 23.9m \\ 
\hspace{2mm} Elasticity $\left(\varepsilon\right)$ & 0.389 & $\left[0.338, 0.530 \right]$ & $\left[0.328, 0.730\right]$ & 0.191 & 0.287 & 0.330 & 0.331 & Avg. & \$73.6k \\ 
 &  & & & (0.0004) & (0.0007) & (0.0008) & (0.0008) & Std. & \$84.4k \\ 
 \textit{Self-employed,} & & & & & & & & \\ 
\textit{not married} & & & & & & & & Obs. & 9.5m \\ 
\hspace{2mm} Elasticity $\left(\varepsilon\right)$ & 1.350 & $\left[1.170, 1.787\right]$ & $\left[1.047, \infty \right]$ & 1.274 & 1.258 & 1.171 & 1.254 & Avg. & \$28.3k \\ 
 &  & & & (0.0013) & (0.0014) & (0.0015) & (0.0017) & Std. & \$39.6k \\ 
 & & & & & & & & \\ \hline
\end{tabular}
}
\exhibitnote{\textit{Notes:}
The table shows estimates of the elasticity for four different subsamples of the IRS data that was separately filtered according the procedure used by \cite{Saez2010}.  
We use three estimation approaches. 
The first approach (column 1) uses the trapezoidal approximation to point-identify the elasticity (Example \ref{example1}).
The second approach (columns 2 and 3) 
computes partially identified sets for the elasticity (Theorem \ref{theo_partial}),
using non-parametric estimates of the side limits of $f_y$ at the kink, and the bunching mass.
Side limits were estimated using the method of \cite{cattaneo2019}.
The estimate for the bunching mass equals the sample proportion of $y$ observations that equals the kink point (see discussion in Section \ref{sec:supp:saez_filter} on friction errors).
Upper and lower bounds are calculated for two choices of M, 
that is, the maximum slope of the PDF of the unobserved heterogeneity $n^*$.
Column 4 has Tobit MLE estimates of the elasticity that utilizes the full sample of data, along with robust standard errors in parentheses. 
Columns 5 through 7 report truncated Tobit MLE estimates.
As we move from column 5 to column 7, we restrict the estimation sample to shrinking symmetric windows around the kink that utilizes 75\% to 25\% of the data.
The set of covariates that enters the Tobit estimation is kept constant across different truncation windows and are listed in Section \ref{sec:estimates_across_methods}.}
\end{table}

\end{landscape}

\newpage

\subsection{Robustness of Tobit Estimates to Lack of Normality}
\label{sec:supp:tobit:robust}

\indent 

This section presents an additional experiment that showcases the robustness result of Lemma \ref{lemma:tobit_robust}.
Following the two experiments in Section \ref{sec:solutions:cov:tobit}, we call this Experiment 3.
Again, the goal is to illustrate that 
Lemma \ref{lemma:tobit_robust} does not require normality of $F_{n^*}$ or $F_{n^*|X}$ even without truncation.
For this experiment, the key parameters of Equation \ref{eq:logsol-onekink} are set at $\eps=4$, $k=2.0794$, $s_0=0.2624$, and $s_1=-0.1054$.
Figure \ref{fig:simulation3:a} plots the PDF of $n^*$, which is approximately uniformly distributed over $[0,8]$.
The distribution of scalar $X$ is discrete with 20 mass points (Figure \ref{fig:simulation3:b}) and is chosen such that
$F_{n^*}(n) = \mme\left[\Phi\left((n-X)/0.3251\right) \right]$
approximates the CDF of the uniform.
We solved numerically for non-normal conditional distributions of $n^*$ given $X$ 
that satisfy Equation \ref{aspt:tobit2}.
Figure \ref{fig:simulation3:d} displays the true PDFs $f_{n^*|X=x}$ in black and 
the normal PDFs $g_{n^*|X=x}$ assumed by the Tobit in gray, for all values of $x$.
We clearly see that $f_{n^*|X}$ is not normal.
We then generate 50,000 observations of $(y,X)$ and fit our Tobit model with the covariate $X$ to the entire sample.
The Tobit model fits the distribution of $y$ (Figure \ref{fig:simulation3:c}) and estimates the elasticity very closely to the truth 
($\ha\eps = 4.0008$, S.E. 0.0158).

\newgeometry{headheight=1mm,tmargin=10mm,bmargin=15mm, lmargin=10mm,rmargin=10mm}
\begin{landscape}

\begin{figure}[tbp]
\caption{Robustness of Tobit Estimates to Lack of Normality\textemdash Experiment 3}
\label{fig:simulation3}
\centering

\begin{minipage}{0.2\linewidth}
\vspace{6mm}
\centering

\begin{subfigure}[b]{\linewidth}
\centering
\caption{PDF of $n^*$}
\includegraphics[width=\linewidth]{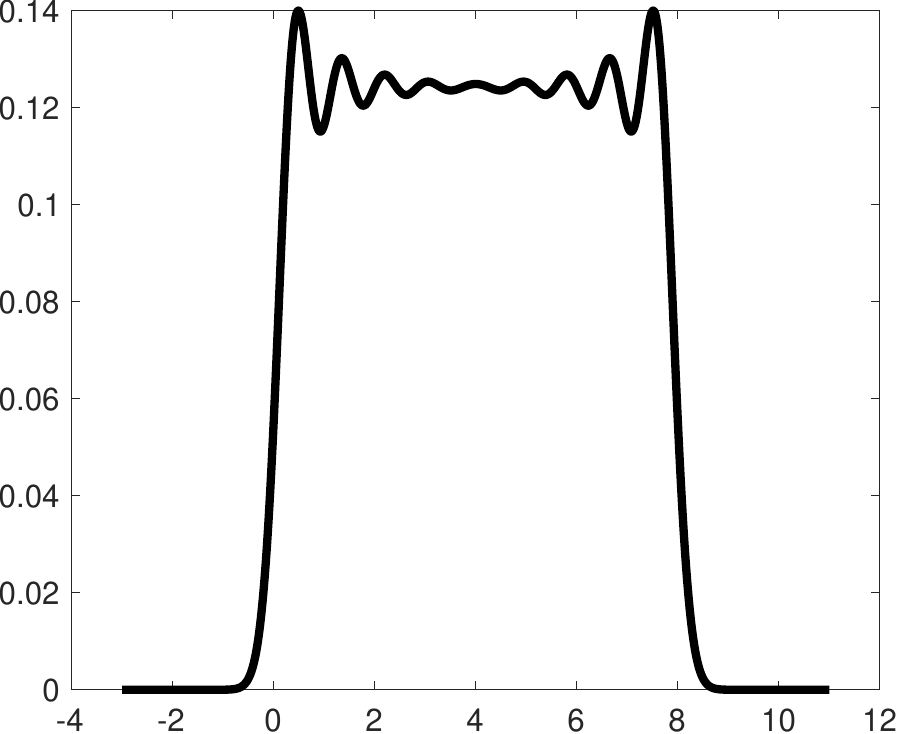}
\label{fig:simulation3:a}
\end{subfigure}

\begin{subfigure}[b]{\linewidth}
\centering
\caption{PMF of $X$}
\includegraphics[width=\linewidth]{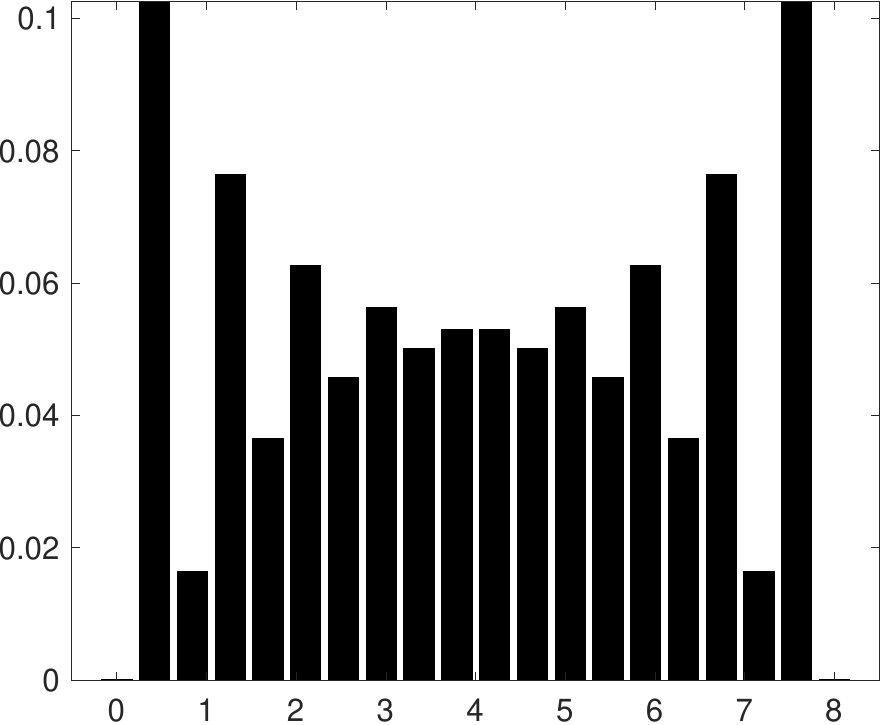}
\label{fig:simulation3:b}
\end{subfigure}

\begin{subfigure}[b]{\linewidth}
\centering
\caption{100\% of the data used}
\includegraphics[width=\linewidth]{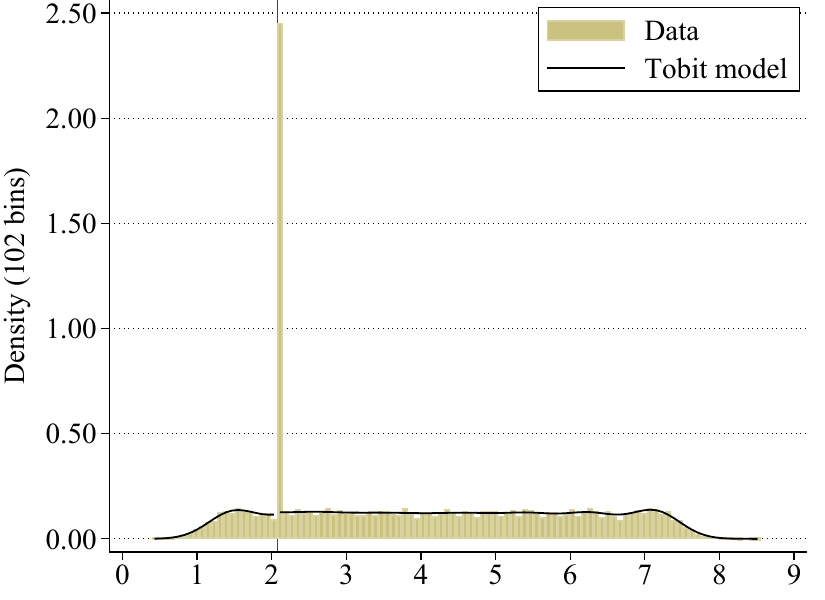}
\label{fig:simulation3:c}
\end{subfigure}

\end{minipage}
\hspace{5mm}
\begin{minipage}{0.67\linewidth}
\centering

\begin{subfigure}[c]{\linewidth}
\centering
\caption{Conditional Probability Density Functions $n^* | X=x$}
\includegraphics[width=\linewidth]{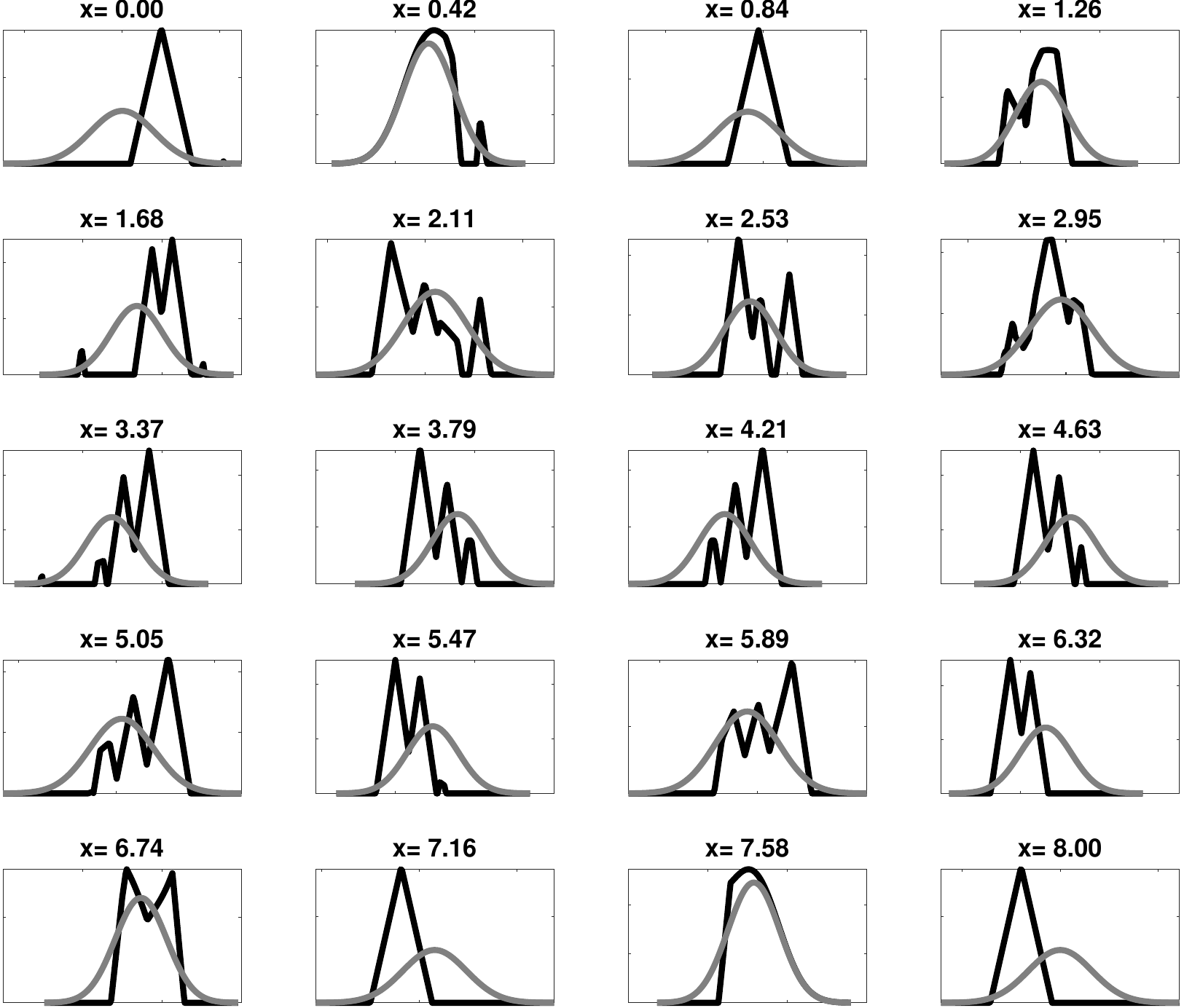}
\label{fig:simulation3:d}
\end{subfigure}

\end{minipage}

\caption*{
\footnotesize 
\textit{Notes:} 
This simulation experiment illustrates that the mid-censored Tobit model is able to fit non-normal distributions of $n^*$ and retrieve the right elasticity even when the conditional distribution $n^*|X$ is not Gaussian and there is no truncation.
We generate 50,000 observations of $y$ and a scalar $X$ following Experiment 3 described in Section \ref{sec:supp:tobit:robust}. 
The variable $n^*$ is approximately a uniform distribution over $[0,8]$ (Panel a), the distribution of $X$ is discrete (Panel b), 
the kink point is at $k=2.0794$, and $\eps=4$.
Panel c shows the histogram of simulated data for $y$ and the best-fit Tobit distribution using covariate $X$ and no truncation.
Panel d displays the true conditional PDFs of $n^*|X$ in black along with the Gaussian PDFs in gray that are assumed by the Tobit model.
The elasticity is estimated at $\ha\eps = 4.0008$ (S.E. 0.0158).
}
\end{figure}
\end{landscape}
\restoregeometry

\subsection{Graphical Analysis} 
\label{sec:app:figures}

\indent 

This section contains figures illustrating the solution of the utility maximization problem of Section \ref{sec:model:sol} and the observed distribution of income
in the case of kinks and notches.

\begin{figure}[ht!]
\caption{Budget Constraints and Distributions}
\label{fig:graphanalysis}
\centering

\begin{subfigure}[b]{0.49\textwidth}
\caption{\centering Budget Constraint\textemdash Kink}
\includegraphics[width=\linewidth,trim={3.1cm 1.2cm 7.5cm 1.7cm},clip]{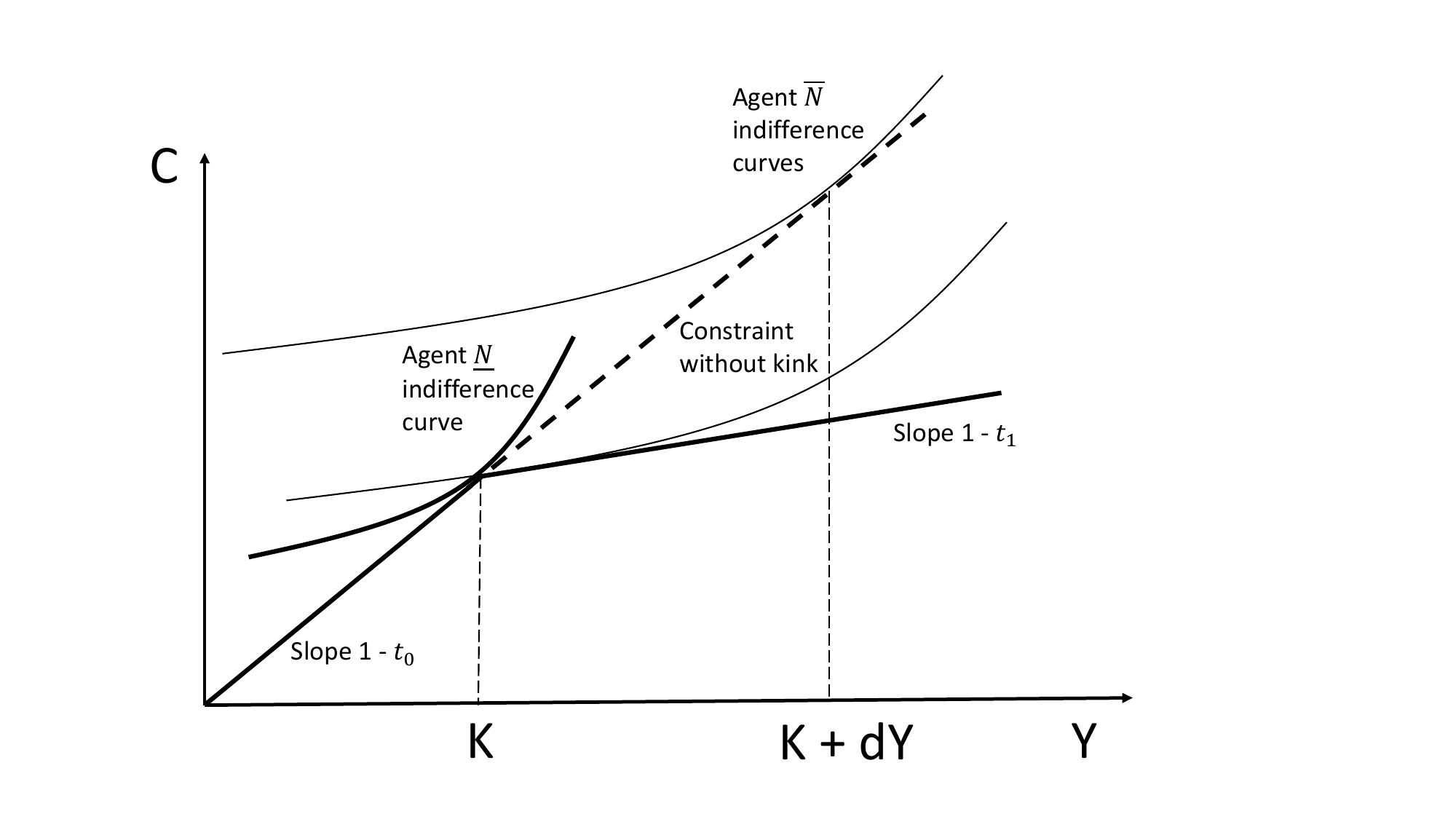}
\label{fig:graphanalysis:a}
\end{subfigure}
\hfill
\begin{subfigure}[b]{0.49\textwidth}
\centering
\caption{\centering Distribution\textemdash Kink}
\includegraphics[width=\linewidth,trim={3.1cm 1.2cm 7.5cm 1.7cm},clip]{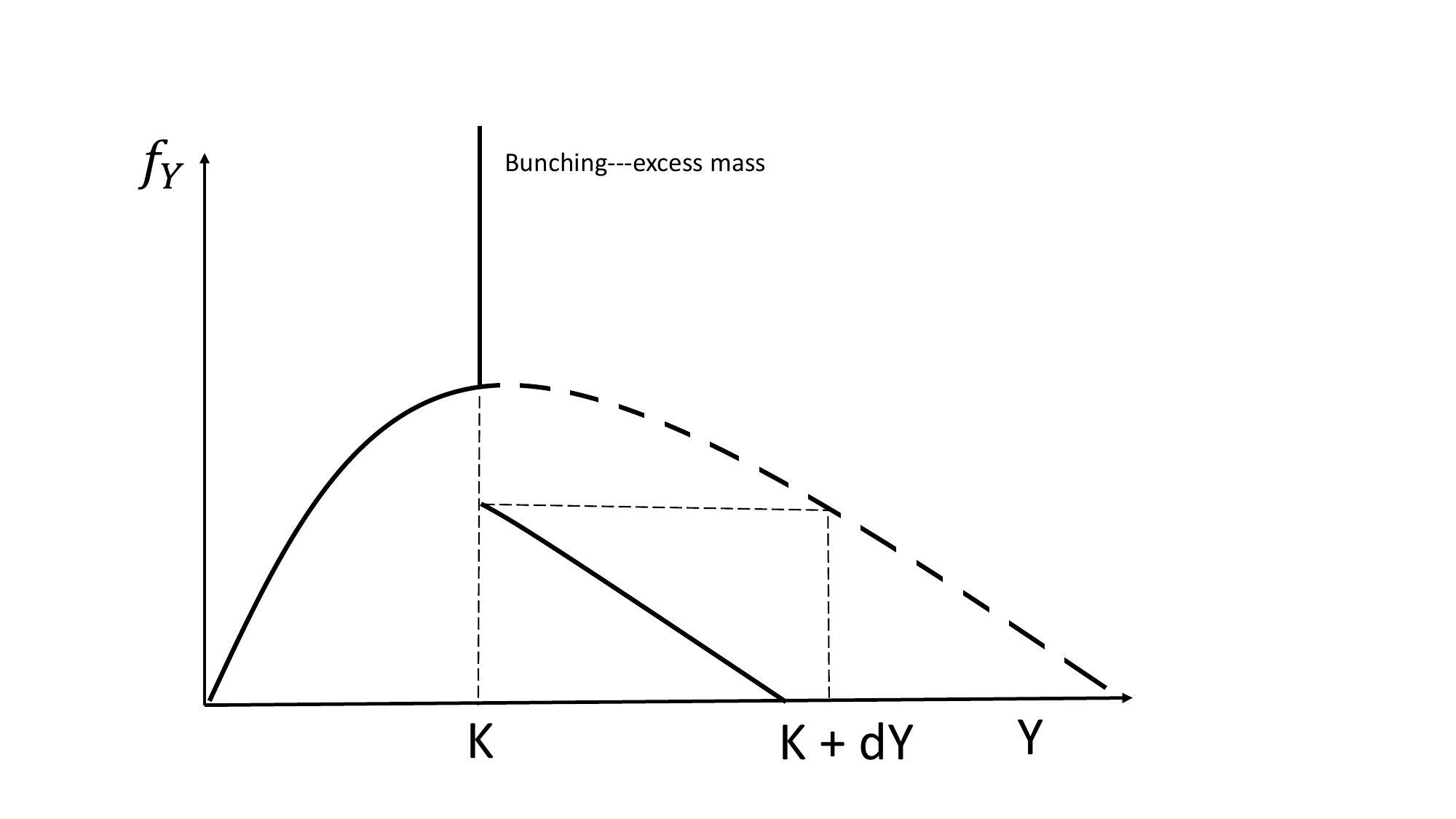}
\label{fig:graphanalysis:b}
\end{subfigure}

\medskip

\begin{subfigure}[b]{0.49\textwidth}
\centering
\caption{\centering Budget Constraint\textemdash Notch}
\includegraphics[width=\linewidth,trim={3.1cm 1.2cm 7.5cm 1.7cm},clip]{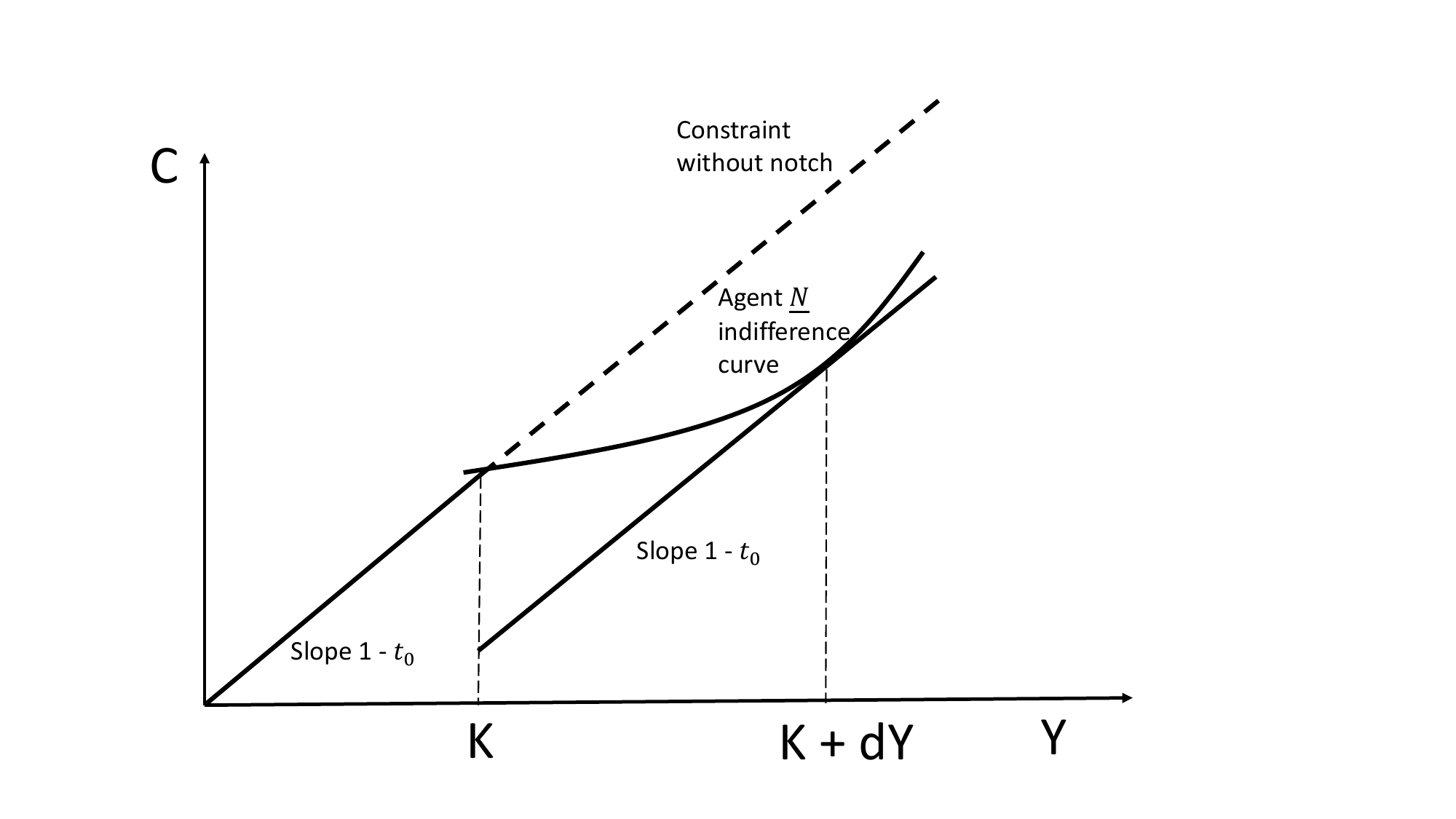}
\label{fig:graphanalysis:c}
\end{subfigure}
\hfill
\begin{subfigure}[b]{0.49\textwidth}
\centering
\caption{\centering Distribution\textemdash Notch}
\includegraphics[width=\linewidth,trim={3.1cm 1.2cm 7.5cm 1.7cm},clip]{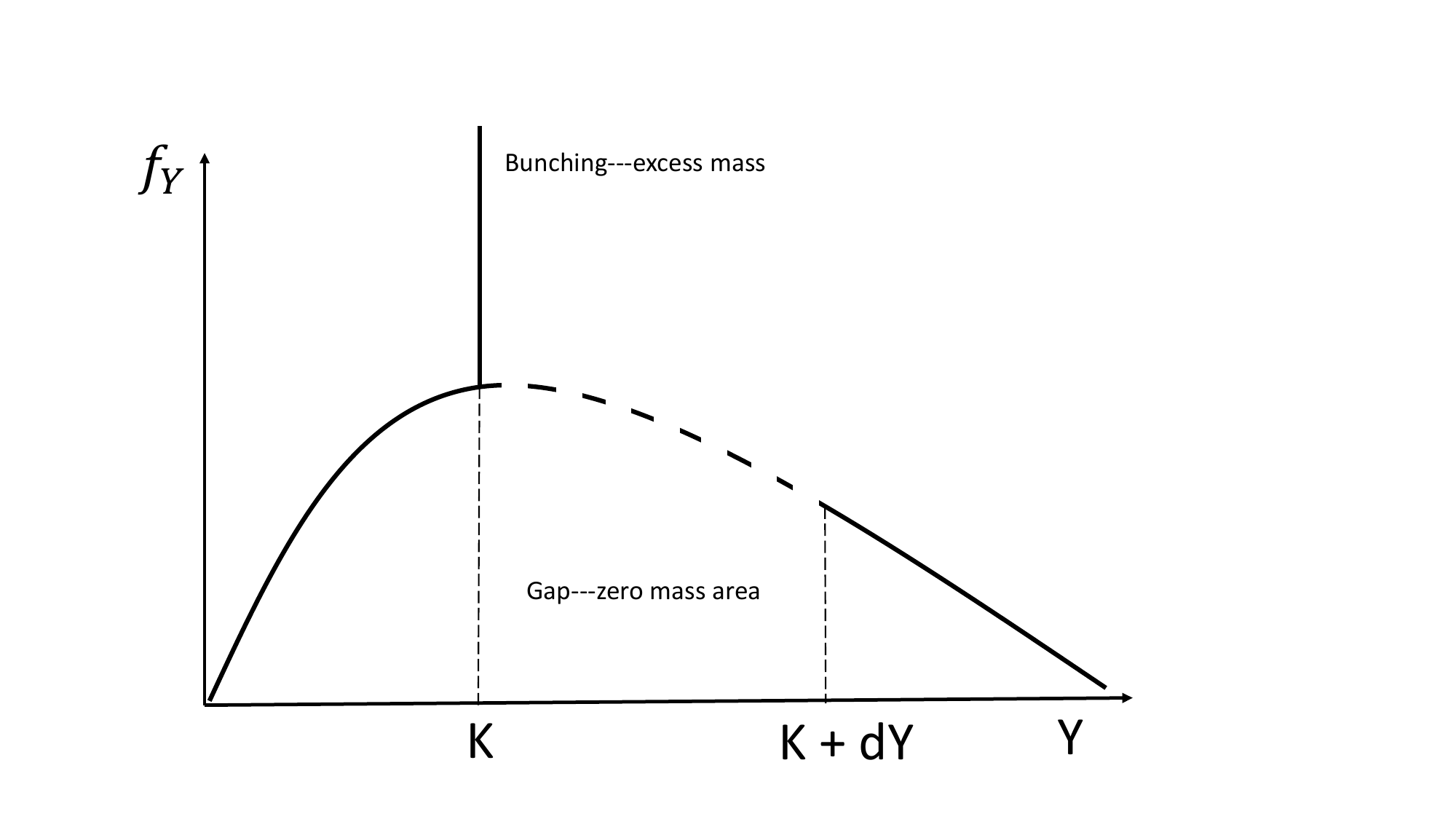}
\label{fig:graphanalysis:d}
\end{subfigure}

\caption*{\footnotesize \textit{Notes:} 
Figure \ref{fig:graphanalysis:a} shows the budget for constraint consumption $C$ and after-tax income $Y$ in the case of a kink. 
The budget line contains a kink at the point $Y = K,$ where the slope is $(1 - t_0)$ for $Y< K$ and $(1 - t_1)$ for $Y > K,$ because the marginal tax rate increases at the kink.
The dashed line is the budget line that would prevail if there were no tax change at the kink.
Agents of type $N^*$ who have indifference curve that is tangent to the budget line without the kink between $K$ and $K+ dY$ optimally choose $Y=K$ in the presence of the kink. 
In Figure \ref{fig:graphanalysis:b}, the solid line for $Y<K$ and the dashed line for $Y\geq K$ together depict the PDF of $Y$ in the absence of the kink.
In contrast, the solid line for $Y\leq K$ and the solid line for $Y > K$ depict the PDF in the presence of the kink, where the vertical line at $K$ denotes the bunching mass.
Figure \ref{fig:graphanalysis:c} displays the budget constraint for the case of a notch.
The budget line contains a notch at the point $Y= K,$ where there is a jump-down discontinuity
and the slope is equal to $1 - t_0$ on both sides of $K.$
The dashed line represents the budget line that would prevail in the absence of the notch. 
Agents of type $N^*$ who have indifference curve that is tangent to the budget line between $K$ and $K + dY$ optimally choose $Y=K$ in the presence of the notch.
In Figure \ref{fig:graphanalysis:d}, the solid line for $Y<K$, the dashed line for $Y \in [K, K+dY]$, and the solid line for $Y>K+dY$ together depict the PDF of $Y$ in the absence of the notch.
In contrast, the solid line for $Y\leq K$ and the solid line for $Y>K+dY$ represent the PDF in the presence of the notch, where the vertical line at $K$ denotes the bunching mass.

\vspace{-29pt}} 
\end{figure}
\clearpage

\subsection{Filtering Procedure} 
\label{sec:supp:filter}

\indent 

The filtering procedure described in Section \ref{sec:application:data}
recovers the CDF of income without optimizing frictions under the following conditions:
 \begin{enumerate}
 \item optimizing frictions only affect bunching individuals additively, that is, $\ti{y}=y+\mathbb{I}\{y=k\} e$,
 where $e$ is the optimizing friction random variable;
 
 \item $e$ is independent of $n^*$ and the support of $e$ is a closed interval $[-\delta_-,\delta_+]$, 
 where $\delta_->0$ and $\delta_+>0$ are known by the researcher;
 
 \item $F_y$ is a polynomial of order $p$ in an interval around the kink, $[k-l,k+u]$, with a change in intercept at $y=k$;
 that interval is bigger than the support of frictions, i.e., 
 $ [k-\delta_-,k+\delta_+] \subseteq [k-l,k+u] $ and 
 $ [k-\delta_-,k+\delta_+] \neq  [k-l,k+u] $; 
 the constants $l$, $u$, and $p$ are known by the researcher.
 
 \end{enumerate}
 
 The CDF of $\ti{y}$ can be written in terms of the CDF of $y$ and $e$: 
 \begin{align*}
 F_{\ti{y}}(v) = & \mmp[\ti{y}\leq v|y=k] \mmp[y=k] + \mmp[\ti{y}\leq v|y \neq k] \mmp[y \neq k]
 \\
 = & \mmp[k+e\leq v|n^* \in [\underline{n},\overline{n}] ]B + \mmp[y \leq v| y \neq k] (1-B)
 \\
 = & F_e(v-k) B + F_y(v) - \mmi\{v \geq k \}B. 
 \end{align*}
 
 If $v<k-\delta_-$, then $F_e(v-k)=0$ and $F_{\ti{y}}(v)=F_{y}(v)$.
 If $v>k+\delta_+$, then $F_e(v-k)=1$ and $F_{\ti{y}}(v)=F_{y}(v)$.
 Our filtering procedure takes the empirical CDF of $\ti{y}$ for values of income inside $[k-l, k-\delta_-)$
 and
 $(k+\delta_+, k+u]$ and fits a polynomial of order $p$ with a change in intercept at $y=k$.

 As the number of observations grows large, the empirical CDF of $\ti{y}$ converges to the true CDF of $y$ 
 for values of income inside $[k-l, k-\delta_-) \cup (k+\delta_+, k+u]$, 
 and the fitted polynomial converges to the true polynomial that characterizes $F_y$ inside $[k-l,k+u]$.
 That polynomial is then extrapolated to retrieve $F_y(v)$ for every $v \in [k-l,k+u]$.
 For $v \not\in [k-l,k+u]$, the empirical CDF of $\ti{y}$ converges to the true CDF of $y$, and no polynomial fit is needed.

\end{document}